\documentclass[12pt,a4paper]{article}
\usepackage{cite} 
\usepackage[margin=2.5cm]{geometry}

\title{M-polynomial Based Mathematical Formulation of the Hyperbolic Sombor Index}
\author{Jayjit Barman and Shibsankar Das\thanks{Corresponding author.}\\
Department of Mathematics, Institute of Science,\\ Banaras Hindu University, Varanasi-221005, Uttar Pradesh, India.\\
Email: jayjit44@bhu.ac.in, shib.cgt@gmail.com\\
	}
\usepackage{fancyhdr}
\usepackage{lineno,hyperref}
\modulolinenumbers[5]

\usepackage{amssymb}
\usepackage{amsmath}
\usepackage{url}
\usepackage{enumerate}
\usepackage{fancyhdr}
\usepackage{amsthm}
\usepackage{cleveref}
\usepackage{xcolor}
\usepackage{float}
\usepackage{caption}
\usepackage{subfigure}
\usepackage{wrapfig}
\usepackage{multirow}
\usepackage{rotating}

\usepackage{graphicx}\graphicspath{{../}{../image/}{Figure/}{3_Word/}}
\newtheorem{definition}{Definition}
\newtheorem{theorem}{Theorem}

\newtheorem{corollary}[theorem]{Corollary}

\allowdisplaybreaks

\usepackage{datetime}
\date{\today ~\ampmtime}

\begin{document}
\maketitle

\begin{abstract}
The numerical values extracted from a graph that indicates its topology are called topological indices. A contemporary and efficient method is to compute a graph's topological indices using the graph polynomial that corresponds to it. This method of identifying degree-based topological indices involves the use of the M-polynomial.
Very recently, in $2025$, the hyperbolic Sombor index (\textit{HSO}) was proposed and shows its chemical applicability for octane isomers and the structure sensitivity and abruptness for octane, nonane, and decane isomers, respectively.
In this work, we establish the closed derivation formula for the above-mentioned index of a graph based on its M-polynomial.
Additionally, we use our proposed derivation formula to calculate the hyperbolic Sombor index of a few standard graphs and chemical families. Moreover, we provide the numerical and graphical representations for the M-polynomial and the computed \textit{HSO} index of the chemical families.\\
\noindent
\textbf{AMS Mathematics Subject Classification 2020:}
05C07, 
05C09, 
05C10, 
92E10, 
05C31, 
05C92. 
\\
\textbf{Keywords:} Hyperbolic Sombor index; M-polynomial; Boron icosahedral $\alpha$ sheet; Dendrimers; Jagged-rectangle benzenoid system.
\end{abstract}

\section{Introduction}
Chemical graph theory (CGT) is a broad area of mathematical chemistry that uses graph theory to model chemical molecules mathematically. Numerous degree-based topological indices have been presented and extensively studied in the mathematics and chemical literature concerning chemical graph theory.
Topological indices are numerical values in CGT that are derived from the graphical representation of the molecular structure utilizing mathematical invariants. They are crucial in the investigation of quantitative structure-property relationships (\textit{QSPR}) and quantitative structure-activity relationships (\textit{QSAR})~\cite{trinajstic1992}.

Let $G(V(G),E(G))$ be an undirected, simple and connected graph with vertex set $V(G)$ and  edge set $E(G)$. The degree $d(v)$ of a vertex $v$ in a graph $G$ is the number of edges that are connected to it. In graph $G$, an edge is represented by $e = uv$ or $vu$, where $u$ and $v$ are adjacent vertices~\cite{west2000}.

The hyperbolic Sombor index of a graph $G$ was first presented by Barman et al.~\cite{barman2025} in $2025$, and its mathematical representation is given as
$$\textit{HSO}(G)=\sum_{uv\in E(G)}\frac{\sqrt{d^2(u)+d^2(v)}}{d(u)},~\text{where}~0 <d(u)\le d(v).$$
Another representation of this index is as follows
	$$\textit{HSO}(G)=\sum_{uv\in E(G)}\frac{\sqrt{d^2(u)+d^2(v)}}{\min\{d(u),d(v)\}}.$$

Topological indices are typically determined based on their mathematical definitions; however, calculating them for large and complex graphs or networks is often a challenging and time-consuming task. To address this problem, various graph-invariant-based polynomials have been proposed in the literature. These polynomials are employed to derive topological indices by applying the required operators at a specific point. Several types of topological indices can now be calculated through the polynomial approach rather than computing them individually~\cite{sdas2022b, sdas2022g, deutsch2015, hosoya1988}.
The Hosoya polynomial~\cite{hosoya1988}, Tutte polynomial~\cite{kauffman1989}, Clar covering polynomial~\cite{zhang1996}, matching polynomial~\cite{farrell1979}, Schultz polynomial~\cite{hassani2013}, M-polynomial~\cite{deutsch2015}, neighborhood M-polynomial~\cite{mondal2021b,sdas2023}, and others are some examples of graphic polynomials that are used for obtaining the value for different classes of topological indices.

Deutsch and Klav\v{z}ar introduced the idea of the M-polynomial~\cite{deutsch2015} in $2015$ to study degree-based topological indices. The M-polynomial allows scientists to effectively examine molecular structures by storing the structural properties of molecules in a polynomial form. The M-polynomial approach to computing the degree-based topological indices can become laborious for large or densely connected graphs or networks. Determining M-polynomials for different graph families has been the subject of much research due to the complexity of computing graph polynomials and extracting topological indices as graphs get larger~\cite{sdas2020a,sdas2022a,sdas2022e}. We now provide a few fundamental definitions that are crucial to understanding this topic.
\begin{definition}~\cite{deutsch2015}
	\label{Def:MP}
	The M-polynomial of a graph $G$ is defined as
	\begin{equation}
		M(G;x,y)=\sum_{\delta \le i \le j \le \Delta} m_{ij}x^iy^j,
	\end{equation}
where $\delta= \min\{d(v):v\in V(G)\}$, $\Delta= \max\{d(v):v\in V(G)\}$ and $m_{ij}$ is the number of edges $uv \in E(G)$ such that $d(u)=i, d(v)=j,$ where $i,j \ge 1.$
\end{definition}

A degree-based topological index for a simple connected graph $G$, as specified in~\cite{deng2011} is denoted as $I(G)$ and stated as
\begin{equation}
	\label{Eq:DBTI}
	I(G)=\sum_{uv \in E(G)}f(d(u),d(v)),
\end{equation}
where $f(d(u),d(v))$ is a function of $d(u)$ and $d(v)$, which related to the corresponding topological indices. It is possible to rewrite the Equation~\ref{Eq:DBTI} by counting the edges which has same end degrees of vertices as
\begin{equation}
	\label{Eq:ADBTI}
	I(G)=\sum_{\delta \le i \le j \le \Delta} m_{ij}f(i,j)
\end{equation}
\begin{table}[htb!]
	\caption{There are some operators~\cite{afzal2020, basavanagoud2019, basavanagoud2020, sdas2022a, sdas2020a, sdas2021, sdas2022c, sdas2022e, deng2011, deutsch2015, hosoya1988, hussain2021} used to formulate the M-polynomial of the hyperbolic Sombor index:}
	\label{Table:Ope}
	\renewcommand*{\arraystretch}{1.5}
	\tiny
	\resizebox{\textwidth}{!}
	{
		\begin{tabular}{lc}
			\hline $D^{1/2}_{x}(f(x,y))= \sqrt{x\frac{\partial(f(x,y))}{\partial x}}\cdot \sqrt{f(x,y)}$, & $S_{x}(f(x,y))= \int_{0}^{x} \frac{f(t,y)}{t}dt$,\\
			$P_{x}(f(x^s,y^t))= f(x^{s^2},y^t)$, & $P_{y}(f(x^s,y^t))= f(x^s,y^{t^2})~\text{where}~s,t \in \mathbb{N} \cup \{0\}$,\\
			$J(f(x,y))=f(x,x)$.\\ \hline
		\end{tabular}
	}
\end{table}
\section*{Methodology}
In Section~\ref{Sec:CDF}, we present and illustrate a closed derivation formula for the previously mentioned hyperbolic Sombor index (\textit{HSO}), which may be used to calculate the index over the M-polynomial of a graph.
We determine the hyperbolic Sombor index values for several well-known graphs (such as a complete bipartite graph, $r$-regular graph, complete graph, cycle graph, star graph and path graph) and chemical families (such as boron icosahedral $\alpha$ sheet $\mathcal{B}_{\alpha}(a,b)$, dendrimers, jagged-rectangle benzenoid system~$B_{m,n}$, polycyclic aromatic hydrocarbons~$\textit{PAH}_{n}$, V-Phynelenic nanotubes~$\textit{VPHX}[m,n]$, V-Phynelenic nanotori~$\textit{VPHY}[m,n]$, porous graphene $\textit{PG}[p,q]$, tadpole graph~$T(n,m)$ and polyphenylenes $P[s,t]$) in Section~\ref{Sec:4} by utilizing their corresponding M-polynomial.
Additionally, we illustrate the surface representation of the M-polynomial of the specified chemical families and their hyperbolic Sombor index for various parameters related to them using MATLAB R2019a software. We also tabulate the values of the \textit{HSO} index of the chemical families.
Lastly, we conclude in Section~\ref{Sec:Con}.
\section{Closed Derivation Formula over the M-polynomial for the Hyperbolic Sombor Index}
\label{Sec:CDF}
Here, we present the closed derivation formula for the hyperbolic Sombor index using the operators listed in Table~\ref{Table:Ope}. We prove the derivation formula by using the M-polynomial of a graph, which is given below.
\begin{theorem}
	\label{Th:1}
	Let $G=(V(G),E(G))$ be a graph and its hyperbolic Sombor index is
	$$\textit{HSO}(G)=\sum_{uv \in E(G)}f(d(u),d(v)),~\text{where}~ f(x,y)=\frac{\sqrt{x^2+y^2}}{x},~\text{where}~0 <x \le y$$ then
	$$\textit{HSO}(G)= D^{1/2}_{x}JP_{y}P_{x}S_{x}(M(G;x,y))|_{x=1},$$
	where $M(G(x,y))$ is the M-polynomial of $G$.
\end{theorem}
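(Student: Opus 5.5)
The plan is to reduce everything to a single monomial of the M-polynomial and then sum. By Definition~\ref{Def:MP} and Equation~\ref{Eq:ADBTI} (with $f(i,j)=\sqrt{i^2+j^2}/i$ and $i\le j$, so that $\min\{d(u),d(v)\}=i$), we have
$$\textit{HSO}(G)=\sum_{\delta\le i\le j\le\Delta} m_{ij}\,\frac{\sqrt{i^2+j^2}}{i}.$$
It will therefore suffice to show that the composite operator $D^{1/2}_{x}JP_{y}P_{x}S_{x}$, applied to the monomial $m_{ij}x^iy^j$ and evaluated at $x=1$, yields $m_{ij}\sqrt{i^2+j^2}/i$. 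Summing over $i\le j$ then gives the statement.

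On a monomial I would apply the operators of Table~\ref{Table:Ope} from right to left. First,
$$S_x(x^iy^j)=\int_0^x t^{i-1}y^j\,dt=\frac{x^iy^j}{i},$$
which is well defined because $i\ge\delta\ge1$ for a connected graph on at least two vertices. Next, $P_x$ and $P_y$ square the exponents and give $x^{i^2}y^{j^2}/i$. Then $J$ sets $y=x$, giving $x^{i^2+j^2}/i$. Finally,
$$D^{1/2}_x\Bigl(\frac{x^{i^2+j^2}}{i}\Bigr)=\sqrt{\frac{(i^2+j^2)\,x^{i^2+j^2}}{i}}\cdot\sqrt{\frac{x^{i^2+j^2}}{i}}=\frac{\sqrt{i^2+j^2}}{i}\,x^{i^2+j^2}.$$
At $x=1$ this equals $\sqrt{i^2+j^2}/i$, as required. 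The constant $m_{ij}$ passes through $S_x$, $P_x$, $P_y$ and $J$ by linearity. For $D^{1/2}_x$ one checks directly that $\sqrt{x\,\partial_x(cg)}\sqrt{cg}=c\sqrt{x\,\partial_x g}\sqrt{g}$ for $c>0$.

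The main obstacle is that $D^{1/2}_x$, as written in Table~\ref{Table:Ope}, is not additive: the square root of a sum is not the sum of the square roots. Applying it literally to the whole polynomial $JP_yP_xS_x(M(G;x,y))$ would not give the desired sum. In addition, $P_x$ and $P_y$ are defined through exponents and are only meaningful monomial by monomial. I would therefore state explicitly, as is standard for these M-polynomial operators, that each operator acts on $M(G;x,y)=\sum m_{ij}x^iy^j$ term by term, that is, it is extended additively from monomials. A further subtlety is that after $J$, distinct pairs $(i,j)$ may produce the same exponent $i^2+j^2$. For this reason I would keep the terms indexed by $(i,j)$ rather than collecting like powers before applying $D^{1/2}_x$. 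On each term the positive scalar factors out, as shown above.

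With this convention made explicit, summing the monomial identity over all $\delta\le i\le j\le\Delta$ gives
$$D^{1/2}_{x}JP_{y}P_{x}S_{x}(M(G;x,y))|_{x=1}=\sum_{i\le j} m_{ij}\frac{\sqrt{i^2+j^2}}{i}=\textit{HSO}(G),$$
which completes the argument.
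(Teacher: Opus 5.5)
Your proof is correct and follows the paper's own route: expand $M(G;x,y)$, push the operator through the sum, and apply $S_x$, $P_x$, $P_y$, $J$ and $D^{1/2}_x$ to each monomial to get $\frac{\sqrt{i^2+j^2}}{i}m_{ij}x^{i^2+j^2}$, then set $x=1$. Your explicit term-by-term convention is exactly what the paper assumes without comment in its second line, where it moves the non-additive $D^{1/2}_x$ inside the sum; your concern about colliding exponents is harmless but unnecessary, since $c\,x^N\mapsto c\sqrt{N}\,x^N$ is linear in $c\ge 0$, so collecting like powers first gives the same result.
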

\begin{proof}
	Let $M(G;x,y)$ be the M-polynomial of $G$. Then, using the operators mentioned in Table~\ref{Table:Ope}, we get
\begin{align*}
		D^{1/2}_{x}JP_{y}P_{x}S_{x}(M(G;x,y))&= D^{1/2}_{x}JP_{y}P_{x}S_{x} \biggl\{\sum_{\delta \le i \le j \le \Delta} m_{ij}x^iy^j\biggr\}\\
		&= \sum_{\delta \le i \le j \le \Delta} D^{1/2}_{x}JP_{y}P_{x}S_{x} \{m_{ij}x^iy^j\}\\
		&= \sum_{\delta \le i \le j \le \Delta} D^{1/2}_{x}JP_{y}P_{x} \{\frac{1}{i}m_{ij}x^iy^j\}\\
		&= \sum_{\delta \le i \le j \le \Delta} D^{1/2}_{x}J \{\frac{1}{i}m_{ij}x^{i^2}y^{j^2}\}\\
		&= \sum_{\delta \le i \le j \le \Delta} D^{1/2}_{x} \{\frac{1}{i}m_{ij}x^{i^2+j^2}\}\\
		&=\sum_{\delta \le i \le j \le \Delta}\sqrt{x\cdot \frac{\partial}{\partial x}\{\frac{1}{i}m_{ij}x^{i^2+j^2}\}}\cdot \sqrt{\frac{1}{i}m_{ij}x^{i^2+j^2}}\\
		&=\sum_{\delta \le i \le j \le \Delta}\sqrt{x\cdot\frac{1}{i}m_{ij}\cdot \frac{\partial}{\partial x}\{x^{i^2+j^2}\}}\cdot \sqrt{\frac{1}{i}m_{ij}x^{i^2+j^2}}\\
		&= \sum_{\delta \le i \le j \le \Delta}\frac{1}{i}m_{ij}\sqrt{i^2+j^2}\cdot\sqrt{x\cdot x^{i^2+j^2-1}}\cdot\sqrt{x^{i^2+j^2}}\\
		&= \sum_{\delta \le i \le j \le \Delta} \frac{\sqrt{i^2+j^2}}{i}~m_{ij} x^{i^2+j^2}.
\end{align*}
\begin{align}
	\label{Eq:3}
	\therefore~D^{1/2}_{x}JP_{y}P_{x}S_{x}(M(G;x,y))|_{x=1}&= \sum_{\delta \le i \le j \le \Delta} \frac{\sqrt{i^2+j^2}}{i}~m_{ij} \nonumber\\
	&= \sum_{\delta \le i \le j \le \Delta}m_{ij}\cdot f(i,j).
\end{align}
From the~\Cref{Eq:DBTI,Eq:ADBTI}, we get
\begin{equation}
	\label{Eq:4}
	\textit{HSO}(G)= \sum_{uv \in E(G)}f(d(u),d(v)) = \sum_{\delta \le i \le j \le \Delta}m_{ij}\cdot f(i,j).
\end{equation}
Therefore,~\Cref{Eq:3,Eq:4} give the required result.
\end{proof}
\section{Hyperbolic Sombor Index for Some Well-known Graphs and Chemical Families}
\label{Sec:4}
This section is divided into nine subsections where we calculate the hyperbolic Sombor index for some standard families of graphs, chemical families of $\mathcal{B}_{\alpha}(a,b)$ and dendrimers, jagged-rectangle benzenoid system~$B_{m,n}$, polycyclic aromatic hydrocarbons~$\textit{PAH}_{n}$, V-Phynelenic Nanotubes~$\textit{VPHX}[m,n]$, V-Phynelenic Nanotori~$\textit{VPHY}[m,n]$, porous graphene $\textit{PG}[p,q]$, tadpole graph~$T(n,m)$ and polyphenylenes~$P[s,t]$ using our closed derivation formula of the index over its M-polynomial. Additionally, we illustrate the graphical representation of the M-polynomial of the specified chemical families and their related \textit{HSO} index using MATLAB R2019 software.
\subsection{Hyperbolic Sombor Index for Some Well-known Graphs}
Let us now find the values of the $\textit{HSO}$ index using its derivation formulas of some well-known graphs, such as a complete graph,  complete bipartite graph, star graph, $r$-regular graph, cycle graph and path graph.
\begin{theorem}~\cite{basavanagoud2020}
	\label{Th:KMN}
	Let $G$ be a complete bipartite graph $K_{m,n}$ having $m+n$ vertices $1 \le m \le n$ and $n \ge 2$. Then the M-polynomial of the graph $G$ is $M(G;x,y)= mnx^my^n.$
\end{theorem}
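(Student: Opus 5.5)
The plan is to compute the M-polynomial of $K_{m,n}$ directly from Definition~\ref{Def:MP}, by classifying every edge according to the degrees of its end vertices. The steps are: determine the degree of each vertex, count the edges, and then read off the coefficients $m_{ij}$.

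Write the bipartition of $K_{m,n}$ as $V(G)=A\cup B$ with $|A|=m$ and $|B|=n$. Every vertex of $A$ is adjacent to all $n$ vertices of $B$ and to nothing else, so $d(u)=n$ for each $u\in A$. Symmetrically, $d(v)=m$ for each $v\in B$. Thus $\delta=m$ and $\Delta=n$, using $1\le m\le n$.

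Next, every edge of $K_{m,n}$ joins a vertex of $A$ to a vertex of $B$, and every such pair forms an edge, so $|E(G)|=mn$. Each edge $uv$ with $v\in B$ and $u\in A$ therefore has end degrees $\{d(v),d(u)\}=\{m,n\}$. We order the pair as $i=m\le j=n$, following the convention $i\le j$ in Definition~\ref{Def:MP}. Hence $m_{mn}=mn$ and $m_{ij}=0$ for every other pair $(i,j)$.

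Substituting into Definition~\ref{Def:MP} gives $M(G;x,y)=mn\,x^my^n$, as claimed.

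The only point needing care is the degenerate case $m=n$. There the two degree classes coincide, and one must check that edges are not double counted. They are not: each edge contributes exactly once to $m_{mm}$, and the total is still $mn=m^2$. The hypothesis $n\ge 2$ rules out the trivial graph $K_{1,1}$; the formula would hold there as well. So there is no genuine obstacle, and the argument is a direct degree count.
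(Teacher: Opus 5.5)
Your proof is correct. The paper gives no proof of its own and simply cites \cite{basavanagoud2020}; your argument is the standard direct count, the same edge-partition-by-end-degrees method the paper uses for $\textit{PAH}_n$: all $mn$ edges join a degree-$n$ vertex to a degree-$m$ vertex, so Definition~\ref{Def:MP} gives $m_{mn}=mn$. Your remarks on the case $m=n$ and on $K_{1,1}$ are also accurate.
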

\begin{theorem}
	\label{Th:5}
	If $G$ be a complete bipartite graph $K_{m,n}$ with $1 \le m \le n$ and $n \ge 2$, then
	$$\textit{HSO}(G)= n\cdot\sqrt{m^2+n^2}.$$
\end{theorem}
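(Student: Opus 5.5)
The plan is to apply the closed derivation formula of Theorem~\ref{Th:1} directly to the M-polynomial supplied by Theorem~\ref{Th:KMN}. In $K_{m,n}$ every edge joins a vertex of degree $n$ (in the part of size $m$) to a vertex of degree $m$ (in the part of size $n$). Since $m\le n$, the only nonzero coefficient is $m_{mn}=mn$, so $M(G;x,y)=mnx^my^n$ with $i=m\le j=n$, as Theorem~\ref{Th:1} requires.

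The operators are applied in order, from the inside out:
\begin{align*}
S_x(mnx^my^n) &= \tfrac{mn}{m}x^my^n = nx^my^n,\\
P_yP_x(nx^my^n) &= nx^{m^2}y^{n^2},\\
J(nx^{m^2}y^{n^2}) &= nx^{m^2+n^2}.
\end{align*}
Next we apply $D^{1/2}_x$. We have $x\frac{\partial}{\partial x}(nx^{m^2+n^2})=n(m^2+n^2)x^{m^2+n^2}$, so
\[
D^{1/2}_x(nx^{m^2+n^2})=\sqrt{n(m^2+n^2)x^{m^2+n^2}}\cdot\sqrt{nx^{m^2+n^2}}=n\sqrt{m^2+n^2}\,x^{m^2+n^2}.
\]
Evaluating at $x=1$ gives $n\sqrt{m^2+n^2}$.

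As a sanity check, the definition gives the same value directly: $mn$ edges, each contributing $\sqrt{m^2+n^2}/\min\{m,n\}=\sqrt{m^2+n^2}/m$. There is no real obstacle here. The only point needing care is the ordering $m\le n$: it ensures the minimum degree is $m$, so $S_x$ divides by $m$ rather than by $n$.
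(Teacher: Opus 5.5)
Your proposal is correct and follows the paper's proof essentially step for step: you pass $M(K_{m,n};x,y)=mnx^my^n$ through $S_x$, then $P_yP_x$, then $J$, then $D^{1/2}_x$ as in Theorem~\ref{Th:1}, and evaluate at $x=1$. The only cosmetic difference is that the paper factors $n$ out before applying $D^{1/2}_x$, which is valid because that operator is positively homogeneous, whereas you apply $D^{1/2}_x$ directly to $nx^{m^2+n^2}$.
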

\begin{proof}
The M-polynomial of the complete bipartite graph $K_{m,n}$ is $M(G;x,y)= mnx^mx^n$ as given in Theorem~\ref{Th:KMN}. Then the hyperbolic Sombor index of $K_{m,n}$ is
\begin{align*}
		\textit{HSO}(G)&= D^{1/2}_{x}JP_{y}P_{x}S_{x}(M(G;x,y))|_{x=1}\\
		&= D^{1/2}_{x}JP_{y}P_{x}S_{x}(mnx^my^n)|_{x=1}\\
		&= n\cdot D^{1/2}_{x}JP_{y}P_{x} (x^my^n)|_{x=1}\\
		&= n\cdot D^{1/2}_{x}J \Big(x^{m^2}y^{n^2}\Big)|_{x=1}\\
		&= n\cdot D^{1/2}_{x} \Big(x^{m^2+n^2}\Big)|_{x=1}\\
		&= n\cdot \sqrt{m^2+n^2} \cdot \Big(x^{m^2+n^2}\Big)|_{x=1}\\
		&= n\cdot\sqrt{m^2+n^2}.
\end{align*}
\end{proof}
\Cref{Cor:7} can be derived from~\Cref{Th:5} by putting $m=n=r$. Similarly, resulting~\Cref{Cor:9} can be achieved by putting $m=1$ and $n=r-1$ in~\Cref{Th:5}.
\begin{corollary}
	\label{Cor:7}
If $G$ be a complete bipartite graph $K_{r,r}$ with $r \ge 2$, then
$\textit{HSO}(G)= \sqrt{2}r^2.$
\end{corollary}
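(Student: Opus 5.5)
The plan is to specialise Theorem~\ref{Th:5} to $m=n=r$, exactly as the remark preceding the corollary suggests. The hypotheses of Theorem~\ref{Th:5} are $1\le m\le n$ and $n\ge 2$. Both hold for $m=n=r$ whenever $r\ge 2$, so the theorem applies to $K_{r,r}$.

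Substituting into $\textit{HSO}(G)=n\sqrt{m^2+n^2}$ gives
\[
\textit{HSO}(K_{r,r}) = r\sqrt{r^2+r^2} = r\cdot r\sqrt{2} = \sqrt{2}\,r^2 ,
\]
using $\sqrt{2r^2}=\sqrt{2}\,r$, which is valid since $r>0$.

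As a cross-check, the same value follows directly from Theorem~\ref{Th:1}. Theorem~\ref{Th:KMN} gives $M(K_{r,r};x,y)=r^2x^ry^r$. The operators then act in order:
\begin{itemize}
\item $S_x$ contributes the factor $1/r$;
\item $P_xP_y$ produce $x^{r^2}y^{r^2}$;
\item $J$ gives $x^{2r^2}$;
\item $D^{1/2}_x$ contributes the factor $\sqrt{2r^2}$.
\end{itemize}
Evaluating at $x=1$ yields $r^2\cdot\frac{1}{r}\cdot r\sqrt{2}=\sqrt{2}\,r^2$. This also agrees with the definition of $\textit{HSO}$: $K_{r,r}$ has $r^2$ edges, and each edge contributes $\sqrt{2r^2}/r=\sqrt{2}$.

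The only step needing care is checking that the hypothesis $n\ge2$ of Theorem~\ref{Th:5} is met. That is why the corollary assumes $r\ge2$. The case $r=1$ would be $K_{1,1}$, which Theorem~\ref{Th:5} excludes.
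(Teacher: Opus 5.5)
Your proposal is correct and follows the paper's route exactly: the paper obtains this corollary by setting $m=n=r$ in Theorem~\ref{Th:5}, which gives $r\sqrt{2r^2}=\sqrt{2}\,r^2$. Your cross-checks via the operator formula of Theorem~\ref{Th:1} and directly from the definition of $\textit{HSO}$ are consistent but not needed.
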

\begin{corollary}
	\label{Cor:9}
	If $G$ be a star graph $K_{1,r-1}$ with $r \ge 2$, then
$$\textit{HSO}(G)= (r-1)\sqrt{r^2-2r+2}.$$
\end{corollary}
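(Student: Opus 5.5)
The plan is to specialize Theorem~\ref{Th:5} to $m=1$ and $n=r-1$, and to check the result directly from Theorem~\ref{Th:1}.

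The star graph $K_{1,r-1}$ is the complete bipartite graph $K_{m,n}$ with $m=1$ and $n=r-1$. The hypotheses of Theorem~\ref{Th:5} require $1\le m\le n$ and $n\ge 2$. With $m=1$ these become $r-1\ge 1$ and $r\ge 3$. So for $r\ge 3$ I substitute $m=1$ and $n=r-1$ into $\textit{HSO}(G)=n\sqrt{m^2+n^2}$. This gives $(r-1)\sqrt{1+(r-1)^2}$, and expanding $1+(r-1)^2=r^2-2r+2$ yields the stated formula.

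The only real obstacle is the boundary case $r=2$, which Theorem~\ref{Th:5} excludes through $n\ge 2$. Here $K_{1,1}$ is a single edge with both end degrees equal to $1$. I would handle it directly from the definition, or through Theorem~\ref{Th:1} with $M(G;x,y)=xy$. Theorem~\ref{Th:1} gives $\frac{\sqrt{1+1}}{1}\cdot 1=\sqrt{2}$. The claimed formula gives $(2-1)\sqrt{4-4+2}=\sqrt{2}$, so the two agree.

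For a uniform treatment, I would instead note that $M(K_{1,r-1};x,y)=(r-1)xy^{r-1}$ for all $r\ge 2$. This holds because the centre has degree $r-1\ge 1$ and each of the $r-1$ leaves has degree $1\le r-1$. Applying the operator chain of Theorem~\ref{Th:1}, $S_x$ contributes the factor $1/1$, $P_yP_x$ produces $xy^{(r-1)^2}$, and $J$ gives $x^{1+(r-1)^2}$. Finally $D^{1/2}_x$ at $x=1$ multiplies by $\sqrt{1+(r-1)^2}$, giving $(r-1)\sqrt{r^2-2r+2}$ for every $r\ge 2$.
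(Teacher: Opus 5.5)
Your proposal is correct and follows the paper's route: the paper obtains this corollary simply by substituting $m=1$, $n=r-1$ into Theorem~\ref{Th:5}. Your separate check of $r=2$ (or your uniform computation from $M(K_{1,r-1};x,y)=(r-1)xy^{r-1}$) is a worthwhile addition. Theorem~\ref{Th:5} assumes $n\ge 2$, so the paper's one-line derivation does not literally cover $K_{1,1}$.
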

\begin{theorem}~\cite{basavanagoud2019}
	\label{Th:RRG}
	Let $G$ be a $r$-regular graph having $n$ vertices and $r \ge 2$. Then the M-polynomial of the $r$-regular graph is given by $M(G;x,y)= \frac{nr}{2}x^ry^r.$
\end{theorem}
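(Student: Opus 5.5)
Theorem~\ref{Th:RRG} is cited from~\cite{basavanagoud2019} rather than proved in the paper, but it follows from a direct edge count, and that is the argument I would give. The plan is to read off the coefficients $m_{ij}$ in Definition~\ref{Def:MP} from the structure of an $r$-regular graph. Since every vertex has degree $r$, we have $\delta=\Delta=r$. The sum in Definition~\ref{Def:MP} therefore collapses to the single term $i=j=r$, so $M(G;x,y)=m_{rr}x^ry^r$. Here $m_{rr}$ counts the edges $uv$ with $d(u)=d(v)=r$, which is every edge of $G$, so $m_{rr}=|E(G)|$.

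The only remaining step is to compute $|E(G)|$, using the handshaking lemma:
$$\sum_{v\in V(G)} d(v)=2|E(G)|.$$
Each of the $n$ vertices contributes $r$ to the left-hand side, so $nr=2|E(G)|$ and $|E(G)|=\frac{nr}{2}$. Substituting gives $M(G;x,y)=\frac{nr}{2}x^ry^r$, as claimed.

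There is no real obstacle here. The one point to handle carefully is that Definition~\ref{Def:MP} counts each edge exactly once: the index range $i\le j$ assigns each unordered edge to a single pair $(i,j)$, so no factor of $2$ is introduced when $i=j$. The hypothesis $r\ge 2$ plays no role in the count itself; it only matches the connectedness assumption made throughout the paper, for example in cycles and complete graphs. Note also that $nr$ is automatically even, again by the handshaking lemma, so $\frac{nr}{2}$ is an integer.
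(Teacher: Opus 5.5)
Your argument is correct and complete. The paper gives no proof of its own (it cites \cite{basavanagoud2019}), but your route is the same edge-partition-by-end-degrees method the paper uses for its $\textit{PAH}_n$ M-polynomial proof: here the only class is $E_{r,r}=E(G)$, and the handshaking lemma gives $|E(G)|=nr/2$. One small quibble with your side remark: $r\ge 2$ is not needed for connectedness, since the connected $1$-regular graph $K_2$ also satisfies $M(K_2;x,y)=xy=\frac{2\cdot 1}{2}xy$, so the hypothesis only excludes trivial cases and plays no role in the proof.
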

\begin{theorem}
	\label{Th:11}
	If $G$ be a $r$-regular graph having $n$ vertices and $r \ge 2$, then
	$$\textit{HSO}(G)=\frac{nr}{\sqrt{2}}.$$
\end{theorem}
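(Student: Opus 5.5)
The plan is to apply Theorem~\ref{Th:1} directly to the M-polynomial supplied by Theorem~\ref{Th:RRG}, exactly as was done for $K_{m,n}$ in Theorem~\ref{Th:5}. By Theorem~\ref{Th:RRG}, $M(G;x,y)=\frac{nr}{2}x^ry^r$. Since this is a single monomial, the operators $S_x$, $P_x$, $P_y$, $J$ and $D^{1/2}_x$ act on it term by term, and the constant $\frac{nr}{2}$ is carried along.

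The computation proceeds in the same order as in the proof of Theorem~\ref{Th:1}:
\begin{itemize}
\item $S_x$ sends $x^ry^r$ to $\frac{1}{r}x^ry^r$, so the prefactor becomes $\frac{n}{2}$.
\item $P_x$ and $P_y$ give $x^{r^2}y^{r^2}$.
\item $J$ gives $x^{2r^2}$.
\item $D^{1/2}_x$ applied to $c\,x^{k}$ with $k=2r^2$ yields $\sqrt{x\cdot ckx^{k-1}}\cdot\sqrt{cx^k}=c\sqrt{k}\,x^k$, here with $c=\frac{n}{2}$.
\end{itemize}
Evaluating at $x=1$ then gives
\[
\textit{HSO}(G)=\frac{n}{2}\sqrt{2r^2}=\frac{n}{2}\cdot\sqrt{2}\,r=\frac{nr}{\sqrt{2}}.
\]

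There is no genuine obstacle. The only point needing care is the constant inside $D^{1/2}_x$: the operator multiplies two square roots, each containing $c$, so the result is linear in $c$ rather than $\sqrt{c}$. This is the same step used in Theorem~\ref{Th:1}. As a sanity check, each of the $\frac{nr}{2}$ edges contributes $\frac{\sqrt{2r^2}}{r}=\sqrt{2}$, giving $\frac{nr\sqrt2}{2}=\frac{nr}{\sqrt2}$, which agrees with the definition.
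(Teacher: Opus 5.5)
Your proposal is correct and follows the paper's proof essentially step for step: you apply the derivation formula of Theorem~\ref{Th:1} to the monomial $\frac{nr}{2}x^ry^r$, reduce it to $\frac{n}{2}x^{2r^2}$, and then apply $D^{1/2}_x$ and set $x=1$ to get $\frac{n}{2}\sqrt{2r^2}=\frac{nr}{\sqrt{2}}$. Your observation that $D^{1/2}_x$ scales linearly in the coefficient is correct, and so is your direct check from the definition (each of the $\frac{nr}{2}$ edges contributes $\sqrt{2}$).
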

\begin{proof}
The M-polynomial of the $r$-regular graph is $M(G;x,y)= \frac{nr}{2}x^ry^r$ as given in~Theorem~\ref{Th:RRG}. Then the hyperbolic Sombor index of $r$-regular graph is
\begin{align*}
		\textit{HSO}(G)&= D^{1/2}_{x}JP_{y}P_{x}S_{x}(M(G;x,y))|_{x=1}\\
		&= D^{1/2}_{x}JP_{y}P_{x}S_{x}\bigg(\frac{nr}{2}x^ry^r\bigg)|_{x=1}\\
		&= \bigg(\frac{n}{2}\bigg)\cdot D^{1/2}_{x}JP_{y}P_{x}(x^ry^r)|_{x=1}\\
		&= \bigg(\frac{n}{2}\bigg) \cdot D^{1/2}_{x}J(x^{r^2}y^{r^2})|_{x=1}\\
		&= \bigg(\frac{n}{2}\bigg) \cdot D^{1/2}_{x}(x^{2r^2})|_{x=1}\\
		&= \bigg(\frac{n}{2}\bigg) \sqrt{2r^2} \cdot (x^{2r^2})|_{x=1}\\
		&= \bigg(\frac{n}{2}\bigg) \sqrt{2r^2}\\
		&= \frac{nr}{\sqrt{2}}.
\end{align*}
\end{proof}
The proof of the following~\Cref{Cor:13,Cor:15} are quite evident and can be solved similarly to~\Cref{Th:5,Th:11}. They are left as an exercise for the readers.
\begin{theorem}~\cite{basavanagoud2020}
	Let $G$ be a cycle graph $C_n$ with $n (\ge 3)$ vertices, then the M-polynomial of $G$ is $M(G;x,y)= nx^2y^2.$
\end{theorem}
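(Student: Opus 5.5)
The statement is that the M-polynomial of the cycle graph $C_n$ with $n\ge 3$ is $M(G;x,y)=nx^2y^2$. I would prove it directly from Definition~\ref{Def:MP} by counting edges according to the degrees of their end vertices. Alternatively, it follows at once from Theorem~\ref{Th:RRG}.

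\textbf{Step 1 (degrees).} Label the vertices $v_1,\dots,v_n$ with edge set $E(C_n)=\{v_kv_{k+1}:1\le k\le n-1\}\cup\{v_nv_1\}$. Since $n\ge 3$, the two neighbours $v_{k-1}$ and $v_{k+1}$ (indices modulo $n$) of each vertex $v_k$ are distinct. Hence $d(v_k)=2$ for every $k$, and $\delta=\Delta=2$.

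\textbf{Step 2 (edge count).} The edges $v_kv_{k+1}$ are pairwise distinct, so $|E(C_n)|=n$. Each edge joins two vertices of degree $2$. Therefore $m_{22}=n$, and $m_{ij}=0$ for every other pair $(i,j)$ with $\delta\le i\le j\le\Delta$. There are no other pairs, since $\delta=\Delta=2$.

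\textbf{Step 3 (substitution).} Substituting these counts into the definition gives
\[
M(C_n;x,y)=\sum_{2\le i\le j\le 2}m_{ij}x^iy^j=nx^2y^2 .
\]
As a consistency check, $C_n$ is $2$-regular on $n$ vertices, so Theorem~\ref{Th:RRG} with $r=2$ gives $\frac{n\cdot 2}{2}x^2y^2=nx^2y^2$, which agrees.

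There is no real obstacle here. The only point needing care is the hypothesis $n\ge 3$: it guarantees that $C_n$ is a simple graph with all degrees exactly $2$, so no loops or multiple edges affect the degree count.
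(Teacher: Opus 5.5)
Your proof is correct. The paper gives no proof of this statement; it quotes it from \cite{basavanagoud2020}. Your argument (every degree is $2$ because $n\ge 3$, there are exactly $n$ edges, hence $m_{22}=n$ in Definition~\ref{Def:MP}) is the standard edge-partition count, and it is the same method the paper uses when it does derive an M-polynomial itself, for $\textit{PAH}_n$. Your cross-check against Theorem~\ref{Th:RRG} with $r=2$ is also valid.
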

\begin{corollary}
	\label{Cor:13}
	If $G$ be a cycle graph $C_n$ with $n (\ge 3)$ vertices, then $\textit{HSO}(G)= \sqrt{2}n.$
\end{corollary}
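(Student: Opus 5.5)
The plan is to apply the closed derivation formula of Theorem~\ref{Th:1} to the M-polynomial of $C_n$ stated just above, $M(C_n;x,y)=nx^2y^2$. Since $C_n$ is $2$-regular, one could also specialise Theorem~\ref{Th:11} with $r=2$ and get $\textit{HSO}(C_n)=\frac{2n}{\sqrt{2}}=\sqrt{2}n$ directly. Even so, I would carry out the operator computation explicitly, following the proof of Theorem~\ref{Th:11}, so the corollary is seen to come from the M-polynomial approach.

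The steps, in order, are as follows.
\begin{itemize}
\item First apply $S_x$. It sends $x^2y^2$ to $\frac{1}{2}x^2y^2$, which gives the factor $\frac{n}{2}$.
\item Next apply $P_x$ and then $P_y$. These square the exponents, giving $\frac{n}{2}x^4y^4$.
\item Then apply $J$, which sets $y=x$ and yields $\frac{n}{2}x^{8}$.
\item Next apply $D^{1/2}_x$. For a monomial $cx^k$ it returns $\sqrt{x\cdot ckx^{k-1}}\cdot\sqrt{cx^k}=c\sqrt{k}\,x^k$, exactly as in the last lines of the proof of Theorem~\ref{Th:1}. Here this gives $\frac{n}{2}\sqrt{8}\,x^8$.
\item Finally evaluate at $x=1$ to get $\frac{n}{2}\cdot 2\sqrt{2}=\sqrt{2}n$.
\end{itemize}

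As a sanity check against the definition, every edge of $C_n$ joins two vertices of degree $2$. Each edge therefore contributes $\frac{\sqrt{4+4}}{2}=\sqrt{2}$, and there are $n$ edges.

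There is no real obstacle here. The only point needing care is that $D^{1/2}_x$ is nonlinear, because of the product of square roots. It should be applied to the single monomial $\frac{n}{2}x^8$ with its coefficient included, as the paper's convention does, and not split as if linear. Since only one monomial appears, this causes no difficulty.
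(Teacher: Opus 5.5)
Your proposal is correct and matches the paper's intended route. The paper leaves this corollary as an exercise to be done exactly like Theorem~\ref{Th:11}, i.e.\ by applying $D^{1/2}_xJP_yP_xS_x$ to $M(C_n;x,y)=nx^2y^2$ (equivalently, setting $r=2$ in $\frac{nr}{\sqrt{2}}$), and your computation $\frac{n}{2}\sqrt{8}=\sqrt{2}n$ is right. Your caution about $D^{1/2}_x$ being non-additive is apt but harmless here, since only one monomial occurs; the operator is also positively homogeneous, so pulling out the factor $\frac{n}{2}$, as the paper does, is fine too.
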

\begin{theorem}~\cite{basavanagoud2020}
	Consider a complete graph $K_n$, a $(n-1)$-regular graph with $n(\ge 3)$ vertices. Then the M-polynomial of the graph $G$ is $M(G;x,y)= \frac{n(n-1)}{2}x^{n-1}y^{n-1}.$
\end{theorem}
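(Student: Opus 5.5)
The plan is to derive the statement as a special case of Theorem~\ref{Th:RRG}, and to back this up with a direct count from Definition~\ref{Def:MP}. First I will check that $K_n$ is a regular graph. In $K_n$ every vertex is adjacent to each of the other $n-1$ vertices, so $d(v)=n-1$ for all $v\in V(K_n)$. Thus $K_n$ is $(n-1)$-regular. The hypothesis $n\ge 3$ gives $r=n-1\ge 2$, which is exactly the range allowed in Theorem~\ref{Th:RRG}.

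Next I will substitute $r=n-1$ into Theorem~\ref{Th:RRG}. This gives
$$M(K_n;x,y)=\frac{n(n-1)}{2}\,x^{n-1}y^{n-1},$$
which is the claimed formula.

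To keep the argument self-contained, I will also verify the formula directly from Definition~\ref{Def:MP}. Since all degrees equal $n-1$, we have $\delta=\Delta=n-1$. Hence the only pair $(i,j)$ with $\delta\le i\le j\le\Delta$ is $i=j=n-1$, and so every edge contributes to $m_{n-1,n-1}$. It remains to count the edges of $K_n$. Every unordered pair of distinct vertices is an edge, so $|E(K_n)|=\binom{n}{2}=\frac{n(n-1)}{2}$. The handshake lemma gives the same count: $\sum_v d(v)=n(n-1)=2|E|$. Therefore $m_{n-1,n-1}=\frac{n(n-1)}{2}$, all other $m_{ij}$ vanish, and the formula follows.

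There is no real obstacle here. The only point needing care is the edge count, and the handshake lemma settles it. The restriction $n\ge 3$ is used only so that Theorem~\ref{Th:RRG} applies with $r\ge 2$; the direct count from Definition~\ref{Def:MP} works for every $n\ge 2$.
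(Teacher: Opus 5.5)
Your proposal is correct. The paper gives no proof of its own here: it cites \cite{basavanagoud2020} and describes $K_n$ as an $(n-1)$-regular graph, so substituting $r=n-1$ into Theorem~\ref{Th:RRG} is exactly the intended argument, and your direct check via Definition~\ref{Def:MP} (all degrees equal $n-1$, and $|E(K_n)|=\binom{n}{2}$) is a sound self-contained supplement that indeed needs only $n\ge 2$.
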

\begin{corollary}
	\label{Cor:15}
If $G$ be a complete graph $K_n$ with $n (\ge 3)$ vertices, then
$$\textit{HSO}(G)= \frac{n(n-1)}{\sqrt{2}}.$$
\end{corollary}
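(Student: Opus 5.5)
Since $K_n$ is an $(n-1)$-regular graph on $n$ vertices, the quickest route is to specialise Theorem~\ref{Th:11} with $r=n-1$. This gives $\textit{HSO}(K_n)=\frac{n(n-1)}{\sqrt{2}}$ directly; the hypothesis $n\ge 3$ ensures $r=n-1\ge 2$, as Theorem~\ref{Th:11} requires.

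To stay with the paper's method, I would also check the value through Theorem~\ref{Th:1}. Start from the M-polynomial $M(K_n;x,y)=\frac{n(n-1)}{2}x^{n-1}y^{n-1}$ cited just above. Apply the operators of Table~\ref{Table:Ope} in order, as in the proof of Theorem~\ref{Th:11}, with $i=j=n-1$:
\begin{itemize}
\item $S_x$ multiplies the coefficient by $\frac{1}{n-1}$, giving $\frac{n}{2}x^{n-1}y^{n-1}$.
\item $P_x$ and $P_y$ square the exponents, giving $\frac{n}{2}x^{(n-1)^2}y^{(n-1)^2}$.
\item $J$ merges the variables into $\frac{n}{2}x^{2(n-1)^2}$.
\item $D^{1/2}_x$, acting on a monomial $cx^k$, returns $\sqrt{c k x^k}\cdot\sqrt{c x^k}=c\sqrt{k}\,x^k$. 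Here this gives $\frac{n}{2}\sqrt{2}(n-1)x^{2(n-1)^2}$.
\end{itemize}
Setting $x=1$ yields $\frac{n}{2}\sqrt{2}(n-1)=\frac{n(n-1)}{\sqrt2}$, the claimed value.

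There is no real obstacle. The only point needing care is that $D^{1/2}_x$ is not linear in general, so it should be applied to a single monomial, which is the case here. A direct check from the definition also agrees: each of the $\binom{n}{2}$ edges contributes $\frac{\sqrt{2(n-1)^2}}{n-1}=\sqrt2$, for a total of $\binom{n}{2}\sqrt2=\frac{n(n-1)}{\sqrt2}$.
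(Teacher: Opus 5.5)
Your proposal is correct and follows the paper's intended argument. The paper leaves this corollary as an exercise to be done exactly like Theorem~\ref{Th:11}, i.e., by passing $M(K_n;x,y)=\frac{n(n-1)}{2}x^{n-1}y^{n-1}$ through $D^{1/2}_xJP_yP_xS_x$ (equivalently, setting $r=n-1$ in Theorem~\ref{Th:11}), which is what you do; your extra remarks (that $D^{1/2}_x$ is applied here to a single monomial, where it is well-behaved, and the direct edge count $\binom{n}{2}\sqrt{2}$) are correct sanity checks.
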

\begin{theorem}~\cite{basavanagoud2020}
	\label{Th:PG}
	Let $G$ be a path graph $P_n$ with $n (\ge 3)$ vertices, then the M-polynomial of $G$ is $M(G;x,y)= 2x^1y^2+(n-3)x^2y^2.$
\end{theorem}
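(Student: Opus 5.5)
The plan is to read the statement off Definition~\ref{Def:MP} directly. We compute each coefficient $m_{ij}$ by sorting the edges of $P_n$ according to the degrees of their end vertices. No operator from Table~\ref{Table:Ope} is needed, since this is a purely combinatorial count.

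First, label the path as $v_1v_2\cdots v_n$ with edge set $E(P_n)=\{v_kv_{k+1}:1\le k\le n-1\}$, so $|E(P_n)|=n-1$. Since $n\ge 3$, the two end vertices $v_1,v_n$ have degree $1$ and every internal vertex $v_k$ ($2\le k\le n-1$) has degree $2$. Hence $\delta=1$ and $\Delta=2$, and the only possible monomials in the sum are $x^1y^1$, $x^1y^2$ and $x^2y^2$.

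Next, I would partition $E(P_n)$ by end-degrees:
\begin{itemize}
\item $m_{11}=0$, because the two degree-$1$ vertices $v_1,v_n$ are distinct and non-adjacent when $n\ge 3$.
\item $m_{12}=2$, since exactly the pendant edges $v_1v_2$ and $v_{n-1}v_n$ have one end of degree $1$ and the other of degree $2$. These two edges are distinct because $n\ge3$.
\item $m_{22}=(n-1)-2=n-3$, consisting of the edges $v_kv_{k+1}$ with $2\le k\le n-2$.
\end{itemize}
Substituting these coefficients into Definition~\ref{Def:MP} gives $M(P_n;x,y)=2xy^2+(n-3)x^2y^2$.

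The only delicate point is the boundary case $n=3$. There the two pendant edges share the middle vertex $v_2$, so one must check that they are still two distinct edges and that the formula correctly gives $m_{22}=0$. The count above handles this uniformly. As a sanity check, I would verify that $2+(n-3)=n-1=|E(P_n)|$.
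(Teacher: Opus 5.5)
Your proof is correct: partitioning the $n-1$ edges by end-degrees gives $m_{11}=0$, $m_{12}=2$, $m_{22}=n-3$, which you then substitute into Definition~\ref{Def:MP}, and your check that the count also works for $n=3$ is right. The paper gives no proof of its own here (it quotes the result from \cite{basavanagoud2020}), but your edge-class counting is the same method it uses when it proves the M-polynomial of $\textit{PAH}_n$.
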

\begin{theorem}
If $G$ be a path graph $P_n$ with $n (\ge 3)$ vertices, then
$\textit{HSO}(G)= 2\sqrt{5}+\sqrt{2}(n-3).$
\end{theorem}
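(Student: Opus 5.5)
We apply the closed derivation formula of Theorem~\ref{Th:1} directly to the M-polynomial of $P_n$ given in Theorem~\ref{Th:PG}, namely $M(G;x,y)=2xy^2+(n-3)x^2y^2$. The operators $S_x$, $P_x$, $P_y$, $J$ act linearly on each monomial. For $D^{1/2}_x$, we do not need linearity on general sums. The computation in the proof of Theorem~\ref{Th:1} shows that the composite operator sends each term $m_{ij}x^iy^j$ to $\frac{\sqrt{i^2+j^2}}{i}m_{ij}x^{i^2+j^2}$, and that it is applied termwise. So we will follow exactly that termwise convention, as the paper does, and treat the two monomials separately before evaluating at $x=1$.

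For the first term $2xy^2$ (edges with end degrees $(1,2)$), the computation runs as follows:
\begin{itemize}
\item $S_x$ divides by $i=1$, leaving $2xy^2$.
\item $P_x$ and $P_y$ give $2x^{1}y^{4}$.
\item $J$ gives $2x^{5}$.
\item $D^{1/2}_x$ yields $\sqrt{5x^5\cdot 2\cdot 2x^5}$ in the paper's form, i.e.\ $2\sqrt{5}\,x^{5}$.
\end{itemize}
Hence this term contributes $2\sqrt5$ at $x=1$.

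For the second term $(n-3)x^2y^2$ (edges with end degrees $(2,2)$), the computation is:
\begin{itemize}
\item $S_x$ gives $\frac{n-3}{2}x^2y^2$.
\item $P_yP_x$ gives $\frac{n-3}{2}x^4y^4$.
\item $J$ gives $\frac{n-3}{2}x^8$.
\item $D^{1/2}_x$ gives $\frac{n-3}{2}\sqrt8\,x^8=\sqrt2(n-3)x^8$.
\end{itemize}
Evaluating at $x=1$ yields $\sqrt2(n-3)$.

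Summing the two contributions gives $\textit{HSO}(P_n)=2\sqrt5+\sqrt2(n-3)$. As a sanity check, we can compare with the definition: each pendant edge contributes $\sqrt{1+4}/1=\sqrt5$, and there are two of them. Each of the $n-3$ internal edges contributes $\sqrt{8}/2=\sqrt2$.

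The only delicate point is the nonlinearity of $D^{1/2}_x$, since $\sqrt{\cdot}\cdot\sqrt{\cdot}$ applied to a sum is not the sum over terms. We handle it by invoking the termwise form established in the proof of Theorem~\ref{Th:1}. Equivalently, we can note that the final expression $\sum m_{ij}f(i,j)$ is what Theorem~\ref{Th:1} guarantees, so the result could also be read off directly as $2f(1,2)+(n-3)f(2,2)$.
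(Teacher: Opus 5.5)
Your proposal is correct and follows the paper's proof: you push $M(P_n;x,y)=2xy^2+(n-3)x^2y^2$ through $S_x$, $P_x$, $P_y$, $J$ and $D^{1/2}_x$ one monomial at a time, get the same intermediate expressions $2x^5+\frac{n-3}{2}x^8$ and $2\sqrt{5}x^5+\sqrt{2}(n-3)x^8$, and evaluate at $x=1$. Your remark that $D^{1/2}_x$ is not additive, so the formula has to be applied termwise, is correct and worth making; the paper's computation does exactly this without saying so.
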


\begin{proof}
The M-polynomial of the path graph is $M(G;x,y)= 2x^1y^2+(n-3)x^2y^2$ as given in~Theorem~\ref{Th:PG}. Then the hyperbolic Sombor index of $P_n$ is
\begin{align*}
		\textit{HSO}(G)&= D^{1/2}_{x}JP_{y}P_{x}S_{x}(M(G;x,y))|_{x=1}\\
		&= D^{1/2}_{x}JP_{y}P_{x}S_{x}\Big(2x^1y^2+(n-3)x^2y^2\Big)|_{x=1}\\
		&= D^{1/2}_{x}JP_{y}P_{x}\Big(2x^1y^2+\Big(\frac{n-3}{2}\Big)x^2y^2\Big)|_{x=1}\\
		&= D^{1/2}_{x}J\Big(2x^1y^4+\Big(\frac{n-3}{2}\Big)x^4y^4\Big)|_{x=1}\\
		&= D^{1/2}_{x}\Big(2x^5+\Big(\frac{n-3}{2}\Big)x^8\Big)|_{x=1}\\
		&= \Big(2\sqrt{5}x^5+\sqrt{2}(n-3)x^8\Big)|_{x=1}\\
		&= 2\sqrt{5}+\sqrt{2}(n-3).
\end{align*}
\end{proof}
\subsection{Hyperbolic Sombor Index for Boron Icosahedral $\boldsymbol{\alpha}$ sheet $\boldsymbol{\mathcal{B}_{\alpha}(a,b)}$}
This subsection calculates the hyperbolic Sombor index for the boron icosahedral $\alpha$ sheet $\mathcal{B}_{\alpha}(a,b)$. Note that, $\mathcal{B}_{\alpha}(a,b)$ contains $96ab$ and $268ab-9a-9b+2$ number of vertices and edges, respectively, while the degree of each vertex is either $5$ or $6$. Using the M-polynomial provided in Theorem~\ref{Th:14}, we compute the hyperbolic Sombor index for $\mathcal{B}_{\alpha}(a,b)$.
\begin{theorem}~\cite{xavier2024computing}
	\label{Th:14}
Let $G$ be the family of boron icosahedral $\alpha$ sheet $\mathcal{B}_{\alpha}(a,b)$. Then,
\begin{equation*}
		M(G;x,y)=(46ab+46a-48b-4)x^5y^5+(108ab-2a-6b-12)x^5y^6+(114ab-53a-51b+18)x^6y^6.
\end{equation*}
\end{theorem}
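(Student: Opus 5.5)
Theorem~\ref{Th:14} is a cited result, so there is nothing to derive by the operator calculus. What is needed is a verification of the M-polynomial from the structure of $\mathcal{B}_{\alpha}(a,b)$. My plan is to partition the edge set by the degrees of their end vertices. Since every vertex has degree $5$ or $6$, only the classes $E_{5,5}$, $E_{5,6}$ and $E_{6,6}$ occur. By Definition~\ref{Def:MP}, $M(G;x,y)=|E_{5,5}|x^5y^5+|E_{5,6}|x^5y^6+|E_{6,6}|x^6y^6$, so the whole task is to count these three classes as functions of $a$ and $b$.

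\textbf{Step 1: vertex data.} The $\alpha$ sheet is built from $a\times b$ translated unit cells of boron icosahedra, with inter-icosahedral bonds joining neighbouring cells. Each icosahedron vertex has $5$ intra-icosahedral neighbours. A vertex has degree $6$ exactly when it also carries an inter-icosahedral bond, and degree $5$ otherwise; the latter happens on the boundary and at certain interior sites. I would first count $n_5$ and $n_6$, with $n_5+n_6=96ab$ and $5n_5+6n_6=2(268ab-9a-9b+2)$. This gives $n_6=72ab-18a-18b+4$ and $n_5=24ab+18a+18b-4$.

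\textbf{Step 2: the edge classes.} I would obtain two linear relations from the handshake count at degree-$5$ and degree-$6$ vertices:
\begin{align*}
2|E_{5,5}|+|E_{5,6}|&=5n_5,\\
|E_{5,6}|+2|E_{6,6}|&=6n_6.
\end{align*}
These relations leave one degree of freedom. I would fix it by directly counting $|E_{5,6}|$, cell by cell. Interior cells contribute a fixed number of $(5,6)$ edges, which gives the $108ab$ term. The boundary rows and columns, together with the four corners, supply the corrections linear in $a$ and $b$ and the constant. The other two counts then follow from the relations above. As a check, the three coefficients must sum to $268ab-9a-9b+2$.

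\textbf{Main obstacle.} The hard part is the boundary bookkeeping in the direct count of $|E_{5,6}|$: deciding which vertices on the edges and corners of the sheet lose their inter-icosahedral bond, and counting them exactly. I would handle it by computing small cases, for instance $(a,b)=(1,1),(1,2),(2,1),(2,2)$, and checking them against the claimed polynomial. A quantity of the form $\alpha ab+\beta a+\gamma b+\epsilon$ is determined by these four values, so agreement on them, combined with the cell-additivity of the count, would confirm the formula of~\cite{xavier2024computing}.
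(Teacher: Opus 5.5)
Your plan has a genuine gap: the consistency checks you propose do not confirm the printed formula, they refute it, and you assert the opposite. (The paper itself offers no proof here; it simply quotes \cite{xavier2024computing}.)

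Your Step~1 arithmetic is wrong. From $n_5+n_6=96ab$ and $5n_5+6n_6=2(268ab-9a-9b+2)=536ab-18a-18b+4$ one gets $n_6=56ab-18a-18b+4$ and $n_5=40ab+18a+18b-4$. Your values give $5n_5+6n_6=552ab-18a-18b+4$ instead. Your own structural model says the same thing: $8ab$ icosahedra carry $240ab$ intra-icosahedral edges, which leaves $28ab-9a-9b+2$ inter-icosahedral bonds, and $n_6$ is twice that. With the correct $n_5,n_6$, the relation $|E_{5,6}|+2|E_{6,6}|=6n_6$ holds for the printed coefficients, but
\[
2|E_{5,5}|+|E_{5,6}|=200ab+90a-102b-20\neq 5n_5=200ab+90a+90b-20 .
\]
The three printed coefficients also sum to $268ab-9a-105b+2$, not $268ab-9a-9b+2$. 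In fact the printed polynomial is not the M-polynomial of any graph with degrees in $\{5,6\}$. At $(a,b)=(1,1)$ it gives $|E_{5,5}|=40$ and $|E_{5,6}|=88$, so $2|E_{5,5}|+|E_{5,6}|=168$, which is not a multiple of $5$.

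So the statement as printed is false and cannot be proved: the $b$-term of the $x^5y^5$ coefficient has the wrong sign. Given the $(5,6)$ and $(6,6)$ coefficients and $96ab$ vertices, the handshake relations force $|E_{5,5}|=46ab+46a+48b-4$. With that value everything is consistent with $268ab-9a-9b+2$ edges. The error carries over into the paper's $\textit{HSO}$ formula for $\mathcal{B}_{\alpha}(a,b)$ and its tabulated values.

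Your method is a reasonable route to the corrected statement, but two pieces are still missing. First, you need an actual count of $|E_{5,6}|$ (or of any one class) from the explicit bonding pattern of the sheet, including which boundary and corner atoms lose their inter-icosahedral bond. Second, you need a structural argument that this count is bilinear in $a,b$; only then do four small cases determine it. As written both are deferred, so even for the corrected formula the proposal is a plan rather than a proof.
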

\begin{theorem}
Let $G$ be the boron icosahedral $\alpha$ sheet $\mathcal{B}_{\alpha}(a,b)$. Then the hyperbolic Sombor index is given by
\begin{align*}
		\textit{HSO}(G)&=\sqrt{2}(46ab+46a-48b-4)+\frac{\sqrt{61}}{5}(108ab-2a-6b-12)+\sqrt{2}(114ab-53a\\
		&\ \ \ -51b+18).
\end{align*}
\end{theorem}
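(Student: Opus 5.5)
The plan is to apply the closed derivation formula of Theorem~\ref{Th:1} directly to the M-polynomial of $\mathcal{B}_{\alpha}(a,b)$ given in Theorem~\ref{Th:14}. Write $A=46ab+46a-48b-4$, $B=108ab-2a-6b-12$ and $C=114ab-53a-51b+18$, so that $M(G;x,y)=Ax^5y^5+Bx^5y^6+Cx^6y^6$. Every operator in Table~\ref{Table:Ope} is applied term by term, as in the proof of Theorem~\ref{Th:1}. Hence it suffices to track the three monomials separately and then add the results.

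First apply $S_x$, which divides each term by its $x$-exponent:
\[
\tfrac{A}{5}x^5y^5+\tfrac{B}{5}x^5y^6+\tfrac{C}{6}x^6y^6 .
\]
Next apply $P_x$ and $P_y$, which square the exponents:
\[
\tfrac{A}{5}x^{25}y^{25}+\tfrac{B}{5}x^{25}y^{36}+\tfrac{C}{6}x^{36}y^{36} .
\]
Applying $J$ then gives
\[
\tfrac{A}{5}x^{50}+\tfrac{B}{5}x^{61}+\tfrac{C}{6}x^{72} .
\]
Finally, $D^{1/2}_x$ acts on a single monomial $cx^k$ by multiplying it by $\sqrt{k}$, exactly as in the computation in the proof of Theorem~\ref{Th:1}. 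This yields
\[
\tfrac{\sqrt{50}}{5}A\,x^{50}+\tfrac{\sqrt{61}}{5}B\,x^{61}+\tfrac{\sqrt{72}}{6}C\,x^{72} .
\]

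Evaluating at $x=1$ and simplifying $\sqrt{50}/5=\sqrt{2}$ and $\sqrt{72}/6=\sqrt{2}$ gives $\sqrt{2}A+\frac{\sqrt{61}}{5}B+\sqrt{2}C$, which is the claimed expression. As a cross-check, this agrees with summing $m_{ij}f(i,j)$ directly with $f(5,5)=f(6,6)=\sqrt2$ and $f(5,6)=\sqrt{61}/5$.

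The only delicate point is the operator $D^{1/2}_x$, which is not linear: on a sum it would produce cross terms. So I would follow the paper's convention in Theorem~\ref{Th:1} and apply it termwise to each monomial. Equivalently, one can bypass the operators entirely and invoke Equation~\ref{Eq:ADBTI} directly.
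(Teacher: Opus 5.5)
Your proposal is correct and follows the paper's proof: you apply Theorem~\ref{Th:1} to the M-polynomial of Theorem~\ref{Th:14}, push each monomial through $S_x$, $P_x$, $P_y$, $J$ and $D^{1/2}_x$, evaluate at $x=1$, and simplify $\sqrt{50}/5=\sqrt{72}/6=\sqrt{2}$. Your caveat that $D^{1/2}_x$ is not linear is well-taken (the paper also silently applies it termwise, already in the proof of Theorem~\ref{Th:1}), and your cross-check via Equation~\ref{Eq:ADBTI} with $f(5,5)=f(6,6)=\sqrt{2}$ and $f(5,6)=\sqrt{61}/5$ is the fully rigorous form of the same computation.
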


\begin{proof}
	The M-polynomial of $\mathcal{B}_{\alpha}(a,b)$ is given by
	\begin{equation*}
		M(G;x,y)=(46ab+46a-48b-4)x^5y^5+(108ab-2a-6b-12)x^5y^6+(114ab-53a-51b+18)x^6y^6.
	\end{equation*}
Then 
\begin{align*}
    &\textit{HSO}(G)\\
	&= D^{1/2}_{x}JP_{y}P_{x}S_{x}(M(G;x,y))|_{x=1}\\
	&=D^{1/2}_{x}JP_{y}P_{x}S_{x}\Big((46ab+46a-48b-4)x^5y^5+(108ab-2a-6b-12)x^5y^6+(114ab-53a\\
	&\ \ \ -51b+18)x^6y^6\Big)|_{x=1}\\
	&=D^{1/2}_{x}JP_{y}P_{x}\Bigg(\Big(\frac{46ab+46a-48b-4}{5}\Big)x^5y^5+\Big(\frac{108ab-2a-6b-12}{5}\Big)x^5y^6\\
	&\ \ \ +\Big(\frac{114ab-53a-51b+18}{6}\Big)x^6y^6\Bigg)|_{x=1}\\
	&=D^{1/2}_{x}J\Bigg(\Big(\frac{46ab+46a-48b-4}{5}\Big)x^{25}y^{25}+\Big(\frac{108ab-2a-6b-12}{5}\Big)x^{25}y^{36}\\
	&\ \ \ +\Big(\frac{114ab-53a-51b+18}{6}\Big)x^{36}y^{36}\Bigg)|_{x=1}\\
	&=D^{1/2}_{x}\Bigg(\Big(\frac{46ab+46a-48b-4}{5}\Big)x^{50}+\Big(\frac{108ab-2a-6b-12}{5}\Big)x^{61}\\
	&\ \ \ +\Big(\frac{114ab-53a-51b+18}{6}\Big)x^{72}\Bigg)|_{x=1}\\
	&=\Bigg(\sqrt{50}\cdot\Big(\frac{46ab+46a-48b-4}{5}\Big)x^{50}+\sqrt{61}\cdot\Big(\frac{108ab-2a-6b-12}{5}\Big)x^{61}\\
	&\ \ \ +\sqrt{72}\cdot\Big(\frac{114ab-53a-51b+18}{6}\Big)x^{72}\Bigg)|_{x=1}\\
	&=\Bigg(\sqrt{50}\cdot\Big(\frac{46ab+46a-48b-4}{5}\Big)+\sqrt{61}\cdot\Big(\frac{108ab-2a-6b-12}{5}\Big)\\
	&\ \ \ +\sqrt{72}\cdot\Big(\frac{114ab-53a-51b+18}{6}\Big)\Bigg)\\
	&= \sqrt{2}(46ab+46a-48b-4)+\frac{\sqrt{61}}{5}(108ab-2a-6b-12)+\sqrt{2}(114ab-53a-51b+18).
\end{align*}
\end{proof}
\begin{figure}[htb!]
	\centering
	\begin{minipage}[t]{0.45\textwidth}
		\includegraphics[width=\textwidth]{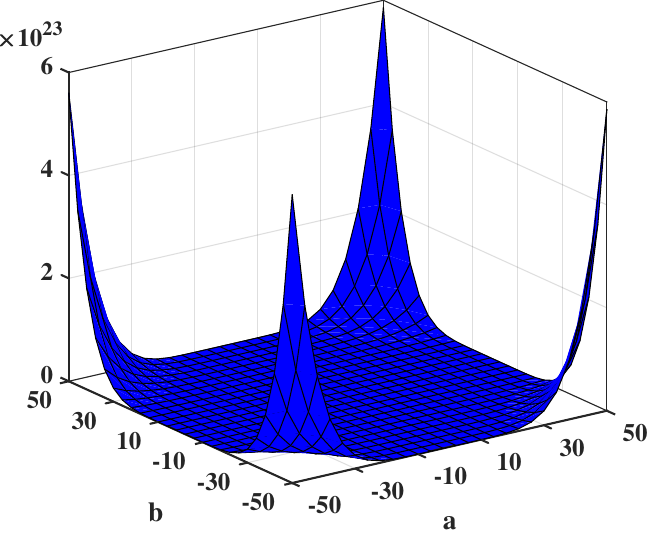}
		\centerline{\textbf{(a)}}
		\label{Fig:MP_Bab}
	\end{minipage}
	\hspace{0.2cm}
	\begin{minipage}[t]{0.45\textwidth}
		\includegraphics[width=\textwidth]{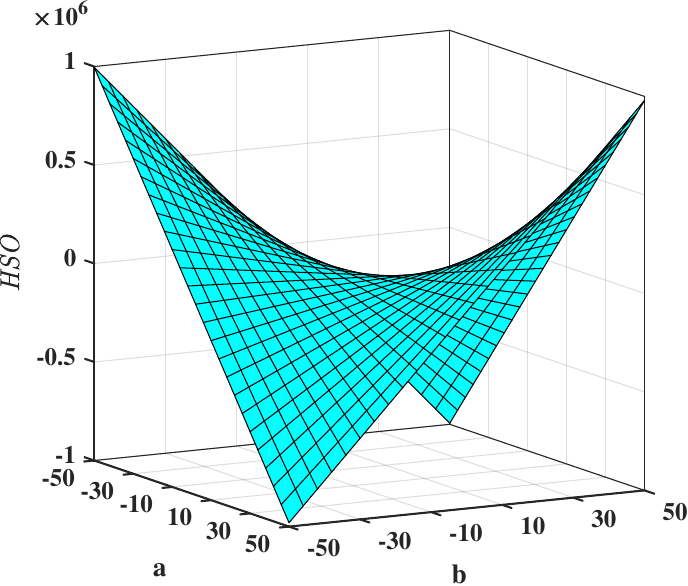}
		\centerline{\textbf{(b)}}
		\label{Fig:HSO_Bab}
	\end{minipage}
	\caption{Graphical illustration of the (a)~M-polynomial of $\mathcal{B}_{\alpha}(5,5)$ and (b)~\textit{HSO} index of $\mathcal{B}_{\alpha}(a,b)$.}
\end{figure}
\subsection{Hyperbolic Sombor Index for Dendrimers}
In $1985$, Tomalia et al. made the initial discovery of dendrimers~\cite{tomalia1985new}, a novel type of highly organized and branching polymeric material. They usually have a three-dimensional spherical shape and are symmetrical around the core. The unique architecture of dendrimers profoundly influences their chemical and physical properties, enabling their broad application across nanoscience, biomedical fields and industrial chemistry. In this context, we examine four prominent dendrimer families: poly(propyl) ether imine (\textit{PETIM}), porphyrin ($D_nP_n$), zinc porphyrin ($\textit{DPZ}_n$) and poly ethylene amide amine (\textit{PETAA}) dendrimers.
Then, using the M-polynomial provided in~\Cref{Th:16,Th:18,Th:20,Th:22}, we calculate the hyperbolic Sombor index for each dendrimer.
\begin{theorem}~\cite{sdas2023a}
	\label{Th:16}
	Let $G$ be the family of poly(propyl) ether imine (PETIM) dendrimers, then the expression of M-polynomial is
	\begin{equation*}
		M(G;x,y)= 2^{n+1}xy^2+(2^{n+4}-18)x^2y^2+(3\times 2^{n+1}-6)x^2y^3.
	\end{equation*}
\end{theorem}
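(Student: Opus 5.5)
The statement is an edge-count result: we have to show that for the $n$-th generation PETIM dendrimer the only edge types are $(1,2)$, $(2,2)$ and $(2,3)$, with multiplicities $m_{12}=2^{n+1}$, $m_{22}=2^{n+4}-18$ and $m_{23}=3\cdot 2^{n+1}-6$. I plan to read Definition~\ref{Def:MP} literally and build the dendrimer generation by generation. The core is the central ether/imine unit; each generation attaches to every free terminal a repeating branch, namely a propyl ether chain ending at an imine nitrogen that splits into two new arms. From this description I will first record the structural facts I need. The graph is a tree, since dendrimers are acyclic. Every nitrogen has degree $3$. Every chain carbon and oxygen has degree $2$. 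The terminal hydroxyl oxygens are the only vertices of degree $1$. No two branching nitrogens are adjacent, because they are always separated by chain atoms. These facts show that $m_{ij}=0$ except for $(i,j)\in\{(1,2),(2,2),(2,3)\}$.

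Next I will count the easy edge classes. Each pendant vertex has exactly one edge, and its neighbour is a chain atom of degree $2$. Since the number of terminals doubles with each generation, starting from $2$ at the core level, there are $2^{n+1}$ pendant vertices, so $m_{12}=2^{n+1}$. Every edge at a nitrogen goes to a degree-$2$ vertex. Hence $m_{23}=3N_3$, where $N_3$ is the number of nitrogens. The nitrogens form a geometric sum over the generations, and I expect the core convention to give $N_3=2^{n+1}-2$, which yields $m_{23}=3\cdot2^{n+1}-6$.

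Finally I will obtain $m_{22}$ from the global count, not by direct enumeration. Because $G$ is a tree, $|E|=|V|-1$. Summing the claimed multiplicities gives $|E|=24\cdot 2^{n}-24$, so the target is $|V|=24\cdot2^{n}-23$. I will verify this vertex count by summing the atoms contributed by the core and by each generation, using the fact that the number of branches in generation $k$ is $2^{k}$ times a constant. Then $m_{22}=|E|-m_{12}-m_{23}=2^{n+4}-18$. As an independent check I will use the handshake identity $\sum_v d(v)=2|E|$, together with $N_1=2^{n+1}$ and $N_3$ as above.

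The main obstacle is fixing the exact atom count of the core and of one repeating branch unit, since the constant terms $-18$ and $-6$ depend entirely on the core convention. I will pin these down by computing the first generation or two by hand and comparing with the formula. If needed, I will adjust $N_3$ and $|V|$ by the correct core offsets and redo the subtraction.
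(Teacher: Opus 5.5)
The paper gives no proof of this statement; it is quoted from \cite{sdas2023a}, where it comes from an edge partition read off the PETIM structure. Your argument therefore has to stand on its own. Its skeleton is sound, and your qualitative facts are correct and consistent with the formula. A tree whose branch vertices all have degree $3$ has $2+N_3$ leaves, so $N_3=2^{n+1}-2$ matches $m_{12}=2^{n+1}$. The identity $m_{23}=3N_3$ matches, and $|E|=24\cdot 2^n-24$ is correct.

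The gap is that you never fix the structural data on which all the constants depend. Your stated way of fixing them is to compute small generations and adjust the core offsets until they agree with the formula. That derives $-18$ and $-6$ from the statement rather than from the definition of the family. Matching against the target cannot substitute for the missing input, because the three multiplicities do not determine the structure. Linkers with six interior vertices, terminal arms with four degree-$2$ vertices and a three-vertex core give exactly the same $m_{12},m_{22},m_{23}$. For the same reason, the count $|V|=24\cdot 2^n-23$ is only announced, not derived, yet your whole route to $m_{22}$ rests on it, and $m_{22}$ is the real content of the theorem.

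What you must supply is the explicit hydrogen-suppressed graph:
\begin{enumerate}
\item Generation $k$ has $2^k$ nitrogens, for $1\le k\le n$.
\item Consecutive nitrogens are joined by a linker path through seven degree-$2$ vertices: three carbons, the ether oxygen, three carbons.
\item The core is one such linker joining the two first-generation nitrogens.
\item Each last-generation nitrogen carries two terminal arms, each consisting of three carbons followed by the hydroxyl oxygen.
\end{enumerate}
There are then $N_3-1=2^{n+1}-3$ linkers, each contributing two $(2,3)$-edges and six $(2,2)$-edges. There are also $2^{n+1}$ terminal arms, each contributing one $(2,3)$-edge, two $(2,2)$-edges and one $(1,2)$-edge. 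So directly, for $n\ge 1$,
\[
m_{22}=6(2^{n+1}-3)+2\cdot 2^{n+1}=2^{n+4}-18,\qquad m_{23}=2(2^{n+1}-3)+2^{n+1}=3\cdot 2^{n+1}-6 .
\]
Your indirect route also closes with this input, since $|V|=(2^{n+1}-2)+7(2^{n+1}-3)+4\cdot 2^{n+1}=24\cdot 2^n-23$. However, the per-arm count makes the tree identity and the handshake check unnecessary.
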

\begin{theorem}
	Let $G$ be the family of poly(propyl) ether imine (PETIM) dendrimers. Then the hyperbolic Sombor index is given by
	$$\textit{HSO}(G)=\sqrt{5}\times 2^{n}+\sqrt{8}(2^{n+3}-9)+\sqrt{13}(3\times 2^n-3).$$
\end{theorem}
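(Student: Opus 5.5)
The plan is to apply Theorem~\ref{Th:1} directly to the M-polynomial of the PETIM dendrimers given in Theorem~\ref{Th:16}, namely
$$M(G;x,y)= 2^{n+1}xy^2+(2^{n+4}-18)x^2y^2+(3\times 2^{n+1}-6)x^2y^3.$$
I will compute $\textit{HSO}(G)=D^{1/2}_{x}JP_{y}P_{x}S_{x}(M(G;x,y))|_{x=1}$ term by term, exactly as in the proofs for $P_n$ and $\mathcal{B}_{\alpha}(a,b)$. All operators in Table~\ref{Table:Ope} act linearly on sums of monomials, as used in the proof of Theorem~\ref{Th:1}, so the three edge classes $(1,2)$, $(2,2)$ and $(2,3)$ can be treated separately.

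The computation proceeds through the operators in order.
\begin{itemize}
\item First apply $S_x$, which divides each coefficient by the $x$-exponent $i$. This gives $2^{n+1}$, $(2^{n+3}-9)$ and $(3\times 2^{n}-3)$, using $\frac{2^{n+4}-18}{2}=2^{n+3}-9$ and $\frac{3\times2^{n+1}-6}{2}=3\times2^n-3$.
\item Next apply $P_x$ and $P_y$, which square the exponents. This yields $xy^4$, $x^4y^4$ and $x^4y^9$.
\item Then apply $J$, which sets $y=x$. This yields $x^5$, $x^8$ and $x^{13}$.
\item Finally apply $D^{1/2}_{x}$ and evaluate at $x=1$. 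Each monomial picks up the factor $\sqrt{i^2+j^2}$, namely $\sqrt5$, $\sqrt8$ and $\sqrt{13}$.
\end{itemize}

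Evaluating at $x=1$ gives the second and third summands $\sqrt{8}(2^{n+3}-9)$ and $\sqrt{13}(3\times 2^n-3)$ of the stated formula immediately. Equivalently, they follow from Equation~\eqref{Eq:3}, since $f(2,2)m_{22}=\frac{\sqrt8}{2}(2^{n+4}-18)$ and $f(2,3)m_{23}=\frac{\sqrt{13}}{2}(3\times2^{n+1}-6)$.

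The step I expect to be the main obstacle is the first summand, coming from the pendant edges of type $(1,2)$. There $S_x$ divides by $i=1$, so the plain bookkeeping produces $\sqrt5\cdot 2^{n+1}$. This does not match the stated $\sqrt5\times 2^n$. To reach the statement as worded, I would first recheck the count $m_{12}$ of pendant edges in Theorem~\ref{Th:16} against the structure of the PETIM dendrimer. The statement corresponds to $m_{12}=2^n$, that is, to the edge-count convention for generation $n$ under which the pendant edges number $2^n$.

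If the cited polynomial is confirmed with coefficient $2^{n+1}$, then the same computation proves the formula with $\sqrt5\times2^{n+1}$ in place of $\sqrt5\times2^{n}$. In that case the first term of the statement should be read as a misprint. Apart from this coefficient, the remainder of the argument is routine and identical in form to the earlier examples.
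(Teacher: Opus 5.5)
Your approach is the paper's own: Theorem~\ref{Th:1} applied termwise to the polynomial of Theorem~\ref{Th:16}. Your computation is correct, and the mismatch you found is an error in the paper's proof, not in your bookkeeping. In its $S_x$ step the paper rewrites $2^{n+1}xy^2$ as $2^{n}\times xy^2$. That is, it halves the pendant coefficient although the $x$-exponent of that monomial is $i=1$; only the $x^2y^2$ and $x^2y^3$ coefficients should be halved. All later steps are right, so the printed $\sqrt{5}\times 2^{n}$ comes solely from this slip. For the cited polynomial the correct value is
\[
\textit{HSO}(G)=\sqrt{5}\cdot 2^{n+1}+\sqrt{8}\,(2^{n+3}-9)+\sqrt{13}\,(3\times 2^{n}-3),
\]
which is exactly what you obtained. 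The PETIM row of Table~\ref{Table:4} inherits the error: for $n=1$ the printed $35.0878$ is $2\sqrt5+7\sqrt8+3\sqrt{13}$, whereas the correct value is $4\sqrt5+7\sqrt8+3\sqrt{13}\approx 39.5599$.

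You can drop the hedge about rechecking $m_{12}$. The paper's proof itself starts from $m_{12}=2^{n+1}$, and that coefficient is forced by the others. The PETIM graph is acyclic with all degrees in $\{1,2,3\}$, so it has exactly two more leaves than degree-$3$ vertices. Every edge at a degree-$3$ vertex has type $(2,3)$, so there are $m_{23}/3=2^{n+1}-2$ such vertices, hence $m_{12}=2^{n+1}$ pendant edges. No counting convention rescues the statement as printed; its first summand should read $\sqrt5\times 2^{n+1}$.

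One caution you share with the paper: $D^{1/2}_x$ as literally defined in Table~\ref{Table:Ope} is not additive (e.g.\ $D^{1/2}_x(x+x^4)|_{x=1}=\sqrt{10}\neq 3$). Your termwise rule is therefore really the convention $x^k\mapsto\sqrt{k}\,x^k$. The direct evaluation $\sum m_{ij}f(i,j)$ of Equation~\eqref{Eq:4}, which you also invoke, sidesteps this entirely.
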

\begin{proof}
	The M-polynomial of the \textit{PETIM} dendrimer is given by
	\begin{equation*}
		M(G;x,y)=2^{n+1}xy^2+(2^{n+4}-18)x^2y^2+(3\times 2^{n+1}-6)x^2y^3.
	\end{equation*}
Then
\begin{align*}
	\textit{HSO}(G)&= D^{1/2}_{x}JP_{y}P_{x}S_{x}(M(G;x,y))|_{x=1}\\
	&= D^{1/2}_{x}JP_{y}P_{x}S_{x}\Big(2^{n+1}xy^2+(2^{n+4}-18)x^2y^2+(3\times 2^{n+1}-6)x^2y^3\Big)|_{x=1}\\
	&= D^{1/2}_{x}JP_{y}P_{x}\Big(2^{n}\times xy^2+(2^{n+3}-9)x^2y^2+(3\times 2^{n}-3)x^2y^3\Big)|_{x=1}\\
	&= D^{1/2}_{x}J\Big(2^{n}\times xy^4+(2^{n+3}-9)x^4y^4+(3\times 2^{n}-3)x^4y^9\Big)|_{x=1}\\
	&= D^{1/2}_{x}\Big(2^{n}\times x^5+(2^{n+3}-9)x^8+(3\times 2^{n}-3)x^{13}\Big)|_{x=1}\\
	&= \Big(\sqrt{5}\times 2^{n}\times x^5+\sqrt{8}(2^{n+3}-9)x^8+\sqrt{13}(3\times 2^{n}-3)x^{13}\Big)|_{x=1}\\
	&=\sqrt{5}\times 2^{n}+\sqrt{8}(2^{n+3}-9)+\sqrt{13}(3\times 2^n-3).
\end{align*}
\end{proof}
\begin{figure}[htb!]
	\centering
	\begin{minipage}[t]{0.45\textwidth}
		\includegraphics[width=\textwidth]{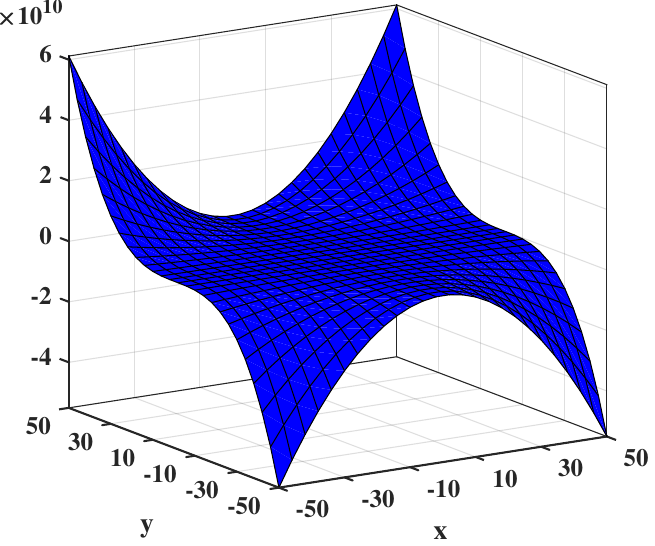}
		\centerline{\textbf{(a)}}
		\label{Fig:MP_PETIM}
	\end{minipage}
	\hspace{0.2cm}
	\begin{minipage}[t]{0.45\textwidth}
		\includegraphics[width=\textwidth]{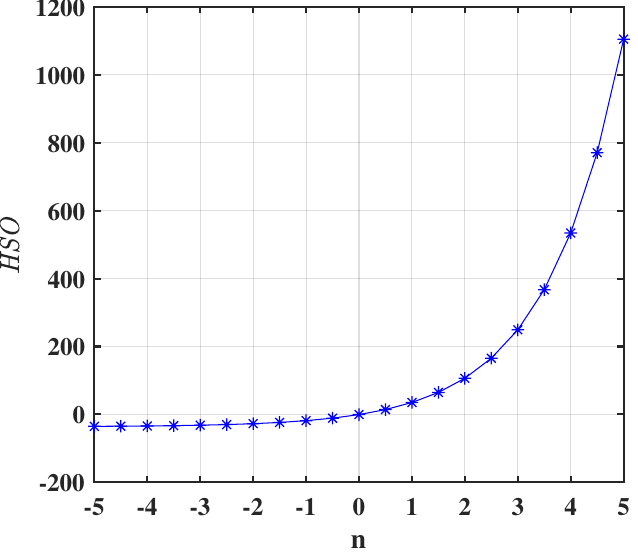}
		\centerline{\textbf{(b)}}
		\label{Fig:HSO_PETIM}
	\end{minipage}
	\caption{Graphical illustration of the (a)~M-polynomial and (b)~\textit{HSO} index of \textit{PETIM} dendrimer for $n=5$.}
\end{figure}
\begin{theorem}~\cite{sdas2023a}
	\label{Th:18}
	Let $G$ be the family of porphyrin ($D_nP_n$) dendrimers, then the expression of M-polynomial is
	\begin{equation*}
		M(G;x,y)= 2nxy^3+24nxy^4+(10n-5)x^2y^2+(48n-6)x^2y^3+13nx^3y^3+8nx^3y^4.
	\end{equation*}
\end{theorem}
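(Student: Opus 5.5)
The plan is to prove the formula by a direct edge partition of $D_nP_n$ according to the degrees of the end vertices, which is how M-polynomials of dendrimers are obtained in the cited work. By Definition~\ref{Def:MP}, it suffices to show that the only nonzero $m_{ij}$ are those for
$$(i,j)\in\{(1,3),(1,4),(2,2),(2,3),(3,3),(3,4)\},$$
with counts $2n,\,24n,\,10n-5,\,48n-6,\,13n,\,8n$ respectively.

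First I will fix the structural description of the porphyrin dendrimer: the generation-$n$ molecule is built from $n$ repeating units arranged around a core. Each unit contains a porphyrin macrocycle, its attached phenyl rings, and the terminal groups; the terminal carbons carry pendant atoms, which are the degree-$1$ vertices. From this description I expect $|V(D_nP_n)|=96n-10$ and $|E(D_nP_n)|=105n-11$, and that all degrees lie in $\{1,2,3,4\}$. The degree-$4$ vertices are the branching carbons carrying three pendant atoms, which accounts for the $24n$ edges of type $(1,4)$ and the $8n$ edges of type $(3,4)$.

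Next I will count edges type by type inside a single repeating unit. Edges interior to a unit contribute a fixed number per unit, which gives the terms linear in $n$. The linking edges between consecutive generations, together with the core, account for the constant corrections $-5$ in $m_{22}$ and $-6$ in $m_{23}$. In practice I will handle the core (or the outermost boundary) separately and obtain the correction by comparing the small cases $n=1,2$ with the general count.

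As consistency checks, summing the counts gives
$$2n+24n+(10n-5)+(48n-6)+13n+8n=105n-11=|E|.$$
In addition, the handshake identities $\sum_{(i,j)} m_{ij}\,[i=k] + \sum_{(i,j)} m_{ij}\,[j=k] = k\,n_k$ must give integer vertex counts $n_k$ of each degree $k$ that add up to $96n-10$. For instance, $n_1=26n$ follows from the $(1,3)$ and $(1,4)$ edges, and $n_4=(24n+8n)/4=8n$.

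The main obstacle is the bookkeeping of the boundary and core corrections: the constants $-5$ and $-6$ depend on exactly how units are glued together. I would settle this by drawing $D_1P_1$ and $D_2P_2$ explicitly, counting their edge types by hand, and checking that the difference matches the per-unit increments.
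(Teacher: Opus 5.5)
\textbf{There is a genuine gap: your proposal names the right method but never carries it out.}

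The paper gives no proof of this statement. It imports it from \cite{sdas2023a}, where, as in the paper's own proof for $\textit{PAH}_n$, it rests on an explicit partition of the edge set of the concrete molecular graph by end-vertex degrees. You name that method, but the six coefficients $2n,24n,10n-5,48n-6,13n,8n$ are the whole content of the theorem, and you never derive them. You also never fix the graph: what $n$ indexes, what block is added in passing from $D_nP_n$ to $D_{n+1}P_{n+1}$, and what the core is. Until that is done there is nothing to count. Claims such as ``the degree-$4$ vertices carry three pendant atoms'' or ``the core and linking edges account for $-5$ and $-6$'' are read off the target formula, not derived from the structure. The totals $|V|=96n-10$ and $|E|=105n-11$ are likewise only ``expected''.

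The steps you do carry out cannot replace the count.

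\begin{itemize}
\item Your handshake check is circular. You obtain $n_1=26n$, $n_4=8n$ (and implicitly $n_2=34n-8$, $n_3=28n-2$) from the claimed $m_{ij}$. Agreement with $96n-10$ therefore only shows that the claimed numbers are consistent with equally unproved totals.
\item Even with independently known $n_k$ and $|E|$, the four degree-sum equations leave two free parameters among the six $m_{ij}$. For example, $(m_{22},m_{23},m_{33})\mapsto(m_{22}+1,m_{23}-2,m_{33}+1)$ and $(m_{13},m_{14},m_{33},m_{34})\mapsto(m_{13}+1,m_{14}-1,m_{33}-1,m_{34}+1)$ both preserve every degree sum and the edge total.
\item Hand-counting $D_1P_1$ and $D_2P_2$ and interpolating only fits an affine function through two points. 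It proves the formula only once you show that each $m_{ij}(n)$ is affine in $n$. That requires a structural lemma: each step adds a fixed block of edges of fixed end-degree types and changes no existing degrees in an $n$-dependent way. That lemma is exactly what is missing.
\end{itemize}

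To close the gap, either cite the result as the paper does, or give an explicit description of $D_nP_n$ and list the six edge classes with their cardinalities, justified from that description.
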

\begin{theorem}
	Let $G$ be the family of porphyrin ($D_nP_n$) dendrimers. Then the hyperbolic Sombor index is given by
	$$\textit{HSO}(G)=2\sqrt{10}n+24\sqrt{17}n+\sqrt{2}(10n-5)+\sqrt{13}(24n-3)+13\sqrt{2}n+\Big(\frac{40n}{3}\Big).$$
\end{theorem}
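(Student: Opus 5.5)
The plan is to apply the closed derivation formula of Theorem~\ref{Th:1} directly to the M-polynomial of $D_nP_n$ given in Theorem~\ref{Th:18}, exactly as in the proofs for the path graph and the \textit{PETIM} dendrimer. All operators in Table~\ref{Table:Ope} except $D^{1/2}_x$ are linear, so I will handle the six monomials
$$2nxy^3,\quad 24nxy^4,\quad (10n-5)x^2y^2,\quad (48n-6)x^2y^3,\quad 13nx^3y^3,\quad 8nx^3y^4$$
term by term. For $D^{1/2}_x$ I will not use linearity directly. Instead I will use the fact, established in the proof of Theorem~\ref{Th:1}, that the operator string sends each $m_{ij}x^iy^j$ to $\frac{\sqrt{i^2+j^2}}{i}m_{ij}x^{i^2+j^2}$, and that the result of that proof is a sum of such terms.

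\textbf{Steps, applied to each monomial in turn.}
\begin{enumerate}
\item Apply $S_x$, which divides the coefficient by $i$. This gives $2nxy^3$, $24nxy^4$, $\frac{10n-5}{2}x^2y^2$, $(24n-3)x^2y^3$, $\frac{13n}{3}x^3y^3$ and $\frac{8n}{3}x^3y^4$.
\item Apply $P_x$ and $P_y$, which square the exponents.
\item Apply $J$, which produces the exponents $10, 17, 8, 13, 18, 25$.
\item Apply $D^{1/2}_x$, which multiplies each term by the square root of its exponent.
\item Evaluate at $x=1$.
\end{enumerate}

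\textbf{Simplification.} The remaining work is to simplify the resulting coefficients:
\begin{itemize}
\item $\frac{\sqrt8}{2}(10n-5)=\sqrt2(10n-5)$;
\item $\sqrt{13}(24n-3)$;
\item $\frac{13n}{3}\sqrt{18}=13\sqrt2 n$;
\item $\frac{8n}{3}\cdot 5=\frac{40n}{3}$;
\item $2\sqrt{10}n$ and $24\sqrt{17}n$ for the two terms with $i=1$.
\end{itemize}
Summing these gives the stated expression.

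There is no real obstacle. The only care needed is bookkeeping: divide by the smaller degree $i$ (not $j$), and note the two simplifications $\sqrt{18}/3=\sqrt2$ and $\sqrt{25}=5$.
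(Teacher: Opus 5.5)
Your proposal is correct and is essentially the paper's own proof: the operators $S_x$, $P_x$, $P_y$, $J$, $D^{1/2}_x$ are applied monomial by monomial to the M-polynomial of Theorem~\ref{Th:18}, giving the exponents $10,17,8,13,18,25$ and exactly the simplifications you list. Your caution about $D^{1/2}_x$ is apt, but citing the proof of Theorem~\ref{Th:1} does not actually avoid additivity, since that proof also applies $D^{1/2}_x$ term by term; what really secures the result is that the term-wise computation gives $\sum m_{ij}\sqrt{i^2+j^2}/i$, which is $\textit{HSO}(G)$ by definition.
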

\begin{proof}
	The M-polynomial of the $D_nP_n$ dendrimer is given by
	\begin{equation*}
		M(G;x,y)=2nxy^3+24nxy^4+(10n-5)x^2y^2+(48n-6)x^2y^3+13nx^3y^3+8nx^3y^4.
	\end{equation*}
	Then
	\begin{align*}
		\textit{HSO}(G)&= D^{1/2}_{x}JP_{y}P_{x}S_{x}(M(G;x,y))|_{x=1}\\
		&=D^{1/2}_{x}JP_{y}P_{x}S_{x}\Big(2nxy^3+24nxy^4+(10n-5)x^2y^2+(48n-6)x^2y^3+13nx^3y^3\\
		&\ \ \ +8nx^3y^4\Big)|_{x=1}\\
		&=D^{1/2}_{x}JP_{y}P_{x}\Big(2nxy^3+24nxy^4+\Big(\frac{10n-5}{2}\Big)x^2y^2+(24n-3)x^2y^3+\Big(\frac{13n}{3}\Big)x^3y^3\\
		&\ \ \ +\Big(\frac{8n}{3}\Big)x^3y^4\Big)|_{x=1}\\
		&=D^{1/2}_{x}J\Big(2nxy^9+24nxy^{16}+\Big(\frac{10n-5}{2}\Big)x^4y^4+(24n-3)x^4y^9+\Big(\frac{13n}{3}\Big)x^9y^9\\
		&\ \ \ +\Big(\frac{8n}{3}\Big)x^9y^{16}\Big)|_{x=1}\\
		&=D^{1/2}_{x}\Big(2nx^{10}+24nx^{17}+\Big(\frac{10n-5}{2}\Big)x^8+(24n-3)x^{13}+\Big(\frac{13n}{3}\Big)x^{18}\\
		&\ \ \ +\Big(\frac{8n}{3}\Big)x^{25}\Big)|_{x=1}\\
		&=\Big(2\sqrt{10}nx^{10}+24\sqrt{17}nx^{17}+\sqrt{2}(10n-5)x^8+\sqrt{13}(24n-3)x^{13}+13\sqrt{2}nx^{18}\\
		&\ \ \ +\Big(\frac{40n}{3}\Big)x^{25}\Big)|_{x=1}\\
		&=2\sqrt{10}n+24\sqrt{17}n+\sqrt{2}(10n-5)+\sqrt{13}(24n-3)+13\sqrt{2}n+\Big(\frac{40n}{3}\Big).
	\end{align*}
\end{proof}
\begin{figure}[htb!]
	\centering
	\begin{minipage}[t]{0.45\textwidth}
		\includegraphics[width=\textwidth]{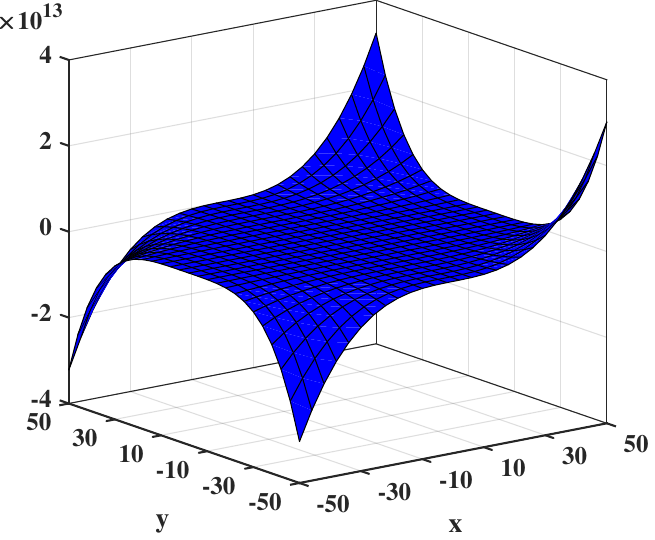}
		\centerline{\textbf{(a)}}
		\label{Fig:MP_DP}
	\end{minipage}
	\hspace{0.2cm}
	\begin{minipage}[t]{0.45\textwidth}
		\includegraphics[width=\textwidth]{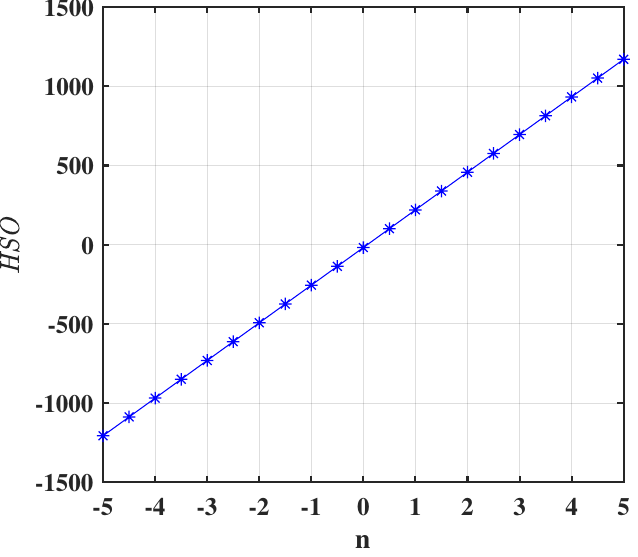}
		\centerline{\textbf{(b)}}
		\label{Fig:HSO_DP}
	\end{minipage}
	\caption{Graphical illustration of the (a)~M-polynomial and (b)~\textit{HSO} index of $D_nP_n$ dendrimer for $n=5$.}
\end{figure}
\begin{theorem}~\cite{sdas2023a}
	\label{Th:20}
	Let $G$ be the family of zinc porphyrin ($\textit{DPZ}_n$) dendrimers, then the expression of M-polynomial is
	\begin{equation*}
		M(G;x,y)=(16\times 2^n-4)x^2y^2+(40\times 2^n-16)x^2y^3+(8\times 2^n +12)x^3y^3+4x^3y^4.
	\end{equation*}
\end{theorem}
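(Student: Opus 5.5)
The plan is to obtain the M-polynomial directly from Definition~\ref{Def:MP}. I would partition the edge set of $\textit{DPZ}_n$ according to the degrees of the end vertices and count each class as a function of the generation $n$. I take this theorem from~\cite{sdas2023a}, so the argument is a reconstruction of the standard edge-partition method, not something new.

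\textbf{Structure and degree classes.} First I would fix the molecular graph. It consists of a zinc porphyrin core. In this core the Zn atom has degree $4$ and is bonded to four pyrrolic N atoms, each of degree $3$. The remaining vertices are carbon atoms of the aromatic rings and of the dendritic branches, whose degrees are $2$ or $3$. So the only possible degree pairs are $(2,2)$, $(2,3)$, $(3,3)$ and $(3,4)$. The class $E_{3,4}$ consists exactly of the four Zn--N bonds, which gives the constant term $4x^3y^4$ independent of $n$.

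\textbf{Counting by generation.} Next I would exploit the self-similar branching. Passing from generation $n$ to $n+1$ doubles the number of terminal branches. Each new layer attaches a fixed number of benzenoid units, and these contribute a fixed number of $(2,2)$, $(2,3)$ and $(3,3)$ edges per branch. This yields linear recurrences
\begin{equation*}
m_{ij}(n+1)=m_{ij}(n)+c_{ij}\,2^{n},
\end{equation*}
with constants $c_{22}=16$, $c_{23}=40$ and $c_{33}=8$ read off from one branching unit. I would determine the initial values from the smallest member by direct inspection. Solving the recurrences gives terms of the form $\alpha 2^n+\beta$. The constants $-4$, $-16$ and $+12$ reflect the boundary effect at the core, where some edges that would be $(2,2)$ or $(2,3)$ in a free branch become $(3,3)$ attachment edges.

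\textbf{Consistency checks.} I would verify the result in two ways. The total number of edges is $64\cdot 2^n-4$, which must match a direct edge count of $\textit{DPZ}_n$. The handshake lemma $\sum_v d(v)=2|E(G)|$ must also agree with the vertex counts per degree. These can be recovered from the edge classes, since for instance the number of degree-$2$ vertices equals $\bigl(2m_{22}+m_{23}\bigr)/2$.

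\textbf{Main obstacle.} I expect the main difficulty to be the correct per-generation increments and boundary constants. This requires an accurate picture of how branches attach to the porphyrin core. I would settle it first by explicitly drawing and counting the first two generations, then checking that the difference matches $c_{ij}2^n$ before trusting the general formula.
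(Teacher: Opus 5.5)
The paper gives no derivation of this statement; it simply imports it from~\cite{sdas2023a}. Your proposal does not supply a derivation either. The increments $c_{22}=16$, $c_{23}=40$, $c_{33}=8$ and the offsets $-4$, $-16$, $+12$ are read off from the formula you are trying to prove. They are not counted from a branching unit or from the attachment to the core, and no initial generation is exhibited. As written, the argument is circular.

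The more serious gap is that your own consistency check fails when carried out, so no edge count can confirm this formula. The polynomial has only the four listed terms. So every edge at a degree-$3$ vertex is counted once in $m_{23}$ or $m_{34}$, or twice in $m_{33}$. Writing $N_3$ for the number of degree-$3$ vertices,
\[
N_3=\frac{m_{23}+2m_{33}+m_{34}}{3}=\frac{(40\cdot 2^n-16)+2(8\cdot 2^n+12)+4}{3}=\frac{56\cdot 2^n+12}{3}.
\]
But $56\cdot 2^n+12\equiv 2^{n+1}\not\equiv 0 \pmod 3$ for every $n\ge 0$, so $N_3$ is never an integer. The degree-$2$ and degree-$4$ counts, $36\cdot 2^n-12$ and $1$, are integral.

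The global identity $\sum_v d(v)=2|E(G)|$ holds automatically once vertex counts are recovered from the edge classes, so that check detects nothing. The real test is integrality degree by degree. It shows the displayed expression is not the M-polynomial of any graph, and in particular not of $\textit{DPZ}_n$. Given that the four Zn--N bonds account for $m_{34}=4$, the degree-$3$ bookkeeping in $m_{23}$ and/or $m_{33}$ of the quoted partition is wrong.

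If you carried out your plan honestly, drawing and counting the first generations, you would get a corrected partition, not a proof of the statement as written. The paper's later value of $\textit{HSO}(\textit{DPZ}_n)$ is computed from this partition and inherits the same error.
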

\begin{theorem}
	Let $G$ be the family of zinc porphyrin ($\textit{DPZ}_n$) dendrimers. Then the hyperbolic Sombor index is given by
	$$\textit{HSO}(G)=\sqrt{8}(2^{n+3}-2)+\sqrt{13}(20\times 2^n-8)+\sqrt{18}\Big(\frac{2^{n+3}}{3} +4\Big)+\Big(\frac{20}{3}\Big).$$
\end{theorem}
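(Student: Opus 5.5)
The plan is to apply the closed derivation formula of Theorem~\ref{Th:1} directly to the M-polynomial of $\textit{DPZ}_n$ given in Theorem~\ref{Th:20}, and to process the four monomials $x^2y^2$, $x^2y^3$, $x^3y^3$, $x^3y^4$ one at a time, exactly as in the \textit{PETIM} and $D_nP_n$ computations. By linearity of every operator in the chain, it suffices to track each term $m_{ij}x^iy^j$ through $S_x$, $P_x$, $P_y$, $J$ and $D^{1/2}_x$. The proof of Theorem~\ref{Th:1} shows that the result is $\frac{\sqrt{i^2+j^2}}{i}m_{ij}x^{i^2+j^2}$, which evaluates to $\frac{\sqrt{i^2+j^2}}{i}m_{ij}$ at $x=1$.

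The steps, in order, are as follows.

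\textbf{Step 1 ($S_x$).} Dividing each coefficient by $i$ gives
\[
(8\cdot 2^n-2)x^2y^2+(20\cdot 2^n-8)x^2y^3+\Big(\tfrac{8\cdot 2^n+12}{3}\Big)x^3y^3+\tfrac{4}{3}x^3y^4 .
\]

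\textbf{Step 2 ($P_x$, $P_y$).} Squaring the exponents gives $x^4y^4$, $x^4y^9$, $x^9y^9$, $x^9y^{16}$.

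\textbf{Step 3 ($J$).} Collapsing $y$ to $x$ gives the exponents $8$, $13$, $18$, $25$.

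\textbf{Step 4 ($D^{1/2}_x$).} This multiplies the terms by $\sqrt8$, $\sqrt{13}$, $\sqrt{18}$ and $\sqrt{25}=5$ respectively.

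Setting $x=1$ then yields
\[
\sqrt8(2^{n+3}-2)+\sqrt{13}(20\cdot 2^n-8)+\sqrt{18}\Big(\tfrac{2^{n+3}}{3}+4\Big)+\tfrac{20}{3},
\]
where we rewrite $8\cdot 2^n=2^{n+3}$ and note that $5\cdot\tfrac43=\tfrac{20}{3}$.

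There is no real obstacle. The only care needed is bookkeeping: each coefficient must be divided by the smaller degree $i$ (not $j$), and the simplification must be put in the exact form stated. In particular, the $x^3y^3$ term gives $\sqrt{18}\cdot\frac{2^{n+3}+12}{3}=\sqrt{18}\big(\frac{2^{n+3}}{3}+4\big)$. As a sanity check, each resulting coefficient can be compared with $m_{ij}\sqrt{i^2+j^2}/i$ computed directly from the definition of \textit{HSO}, via Equation~\ref{Eq:ADBTI}.
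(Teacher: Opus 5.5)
Your proposal is correct and is essentially the paper's proof. You push the M-polynomial of Theorem~\ref{Th:20} through $S_x$, $P_x$, $P_y$ and $J$ to get $(8\cdot 2^n-2)x^8+(20\cdot 2^n-8)x^{13}+\frac{8\cdot 2^n+12}{3}x^{18}+\frac{4}{3}x^{25}$, then apply $D^{1/2}_x$ at $x=1$, and your arithmetic matches the stated result, including $5\cdot\frac{4}{3}=\frac{20}{3}$. One caveat, shared with the paper's proof of Theorem~\ref{Th:1}: $D^{1/2}_x$ as literally defined in Table~\ref{Table:Ope} is not linear on sums, so the term-by-term step rests on the intended monomial-wise action of that operator, not on ``linearity of every operator in the chain.''
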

\begin{proof}
	The M-polynomial of the $\textit{DPZ}_n$ dendrimer is given by
	\begin{equation*}
		M(G;x,y)=(16\times 2^n-4)x^2y^2+(40\times 2^n-16)x^2y^3+(8\times 2^n +12)x^3y^3+4x^3y^4.
	\end{equation*}
	Then
	\begin{align*}
		&\textit{HSO}(G)\\
		&= D^{1/2}_{x}JP_{y}P_{x}S_{x}(M(G;x,y))|_{x=1}\\
		&=D^{1/2}_{x}JP_{y}P_{x}S_{x}\Big((16\times 2^n-4)x^2y^2+(40\times 2^n-16)x^2y^3+(8\times 2^n +12)x^3y^3+4x^3y^4\Big)|_{x=1}\\
		&=D^{1/2}_{x}JP_{y}P_{x}\Big((8\times 2^n-2)x^2y^2+(20\times 2^n-8)x^2y^3+\Big(\frac{8}{3}\times 2^n +4\Big)x^3y^3+\Big(\frac{4}{3}\Big)x^3y^4\Big)|_{x=1}\\
		&=D^{1/2}_{x}J\Big((8\times 2^n-2)x^4y^4+(20\times 2^n-8)x^4y^9+\Big(\frac{8}{3}\times 2^n +4\Big)x^9y^9+\Big(\frac{4}{3}\Big)x^9y^{16}\Big)|_{x=1}\\
		&=D^{1/2}_{x}\Big((8\times 2^n-2)x^8+(20\times 2^n-8)x^{13}+\Big(\frac{8}{3}\times 2^n +4\Big)x^{18}+\Big(\frac{4}{3}\Big)x^{25}\Big)|_{x=1}\\
		&=\Big(\sqrt{8}(8\times 2^n-2)x^8+\sqrt{13}(20\times 2^n-8)x^{13}+\sqrt{18}\Big(\frac{8}{3}\times 2^n +4\Big)x^{18}+\Big(\frac{20}{3}\Big)x^{25}\Big)|_{x=1}\\
		&=\sqrt{8}(2^{n+3}-2)+\sqrt{13}(20\times 2^n-8)+\sqrt{18}\Big(\frac{2^{n+3}}{3} +4\Big)+\Big(\frac{20}{3}\Big).
	\end{align*}
\end{proof}
\begin{figure}[htb!]
	\centering
	\begin{minipage}[t]{0.45\textwidth}
		\includegraphics[width=\textwidth]{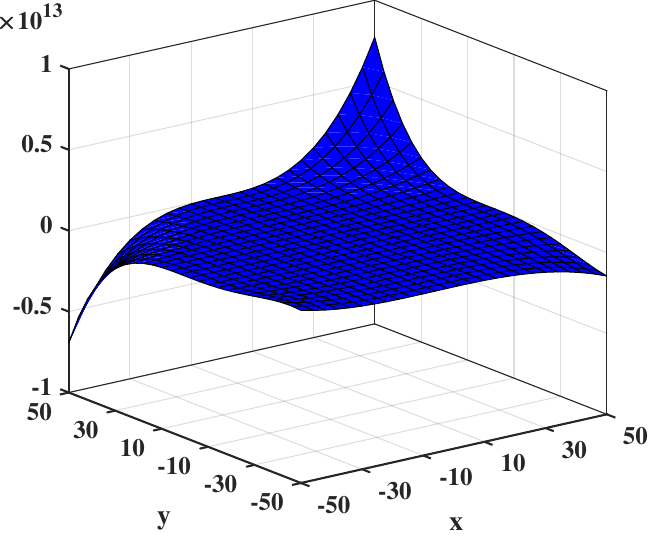}
		\centerline{\textbf{(a)}}
		\label{Fig:MP_DPZ}
	\end{minipage}
	\hspace{0.2cm}
	\begin{minipage}[t]{0.45\textwidth}
		\includegraphics[width=\textwidth]{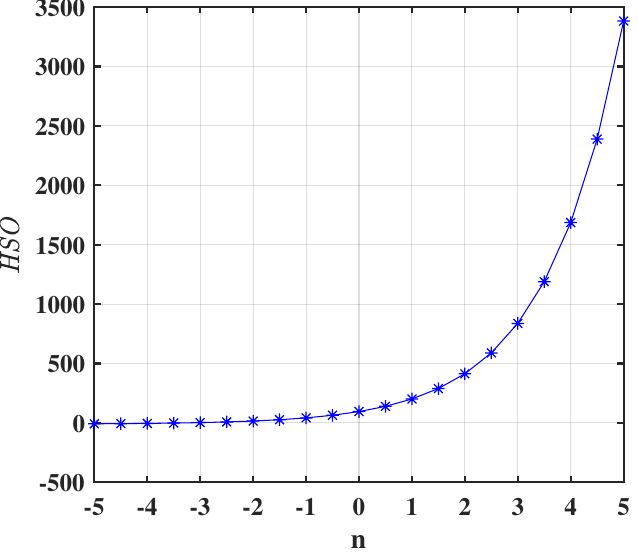}
		\centerline{\textbf{(b)}}
		\label{Fig:HSO_DPZ}
	\end{minipage}
	\caption{Graphical illustration of the (a)~M-polynomial and (b)~\textit{HSO} index of $\textit{DPZ}_n$ dendrimer for $n=5$.}
\end{figure}
\begin{theorem}~\cite{sdas2023a}
	\label{Th:22}
	Let $G$ be the family of poly ethylene amide amine (PETAA) dendrimers, then the expression of M-polynomial is
	\begin{equation*}
		M(G;x,y)=2^{n+2}xy^2+(2^{n+2}-2)xy^3+(2^{n+4}-8)x^2y^2+(20\times 2^n-9)x^2y^3.
	\end{equation*}
\end{theorem}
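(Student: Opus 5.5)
The statement is a structural fact about the PETAA dendrimer family, quoted from~\cite{sdas2023a}. To prove it I would use the standard edge-partition method. By Definition~\ref{Def:MP}, it suffices to show that the edges of the $n$-th generation PETAA dendrimer split by end-degree pairs as follows: $2^{n+2}$ edges of type $(1,2)$, $2^{n+2}-2$ of type $(1,3)$, $2^{n+4}-8$ of type $(2,2)$, and $20\times 2^n-9$ of type $(2,3)$. No other pairs occur, since every vertex (hydrogens suppressed) has degree $1$, $2$ or $3$ and there are no edges of type $(3,3)$.

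\textbf{Decomposition.} I would describe the molecular graph as a core together with repeating branching units. At each generation every terminal arm is replaced by a fixed unit that branches at a degree-$3$ vertex (the amine nitrogen) and ends in two new terminal arms. The terminal arms produce the pendant edges: the $(1,2)$ edges are pendant atoms on chain ends, and the $(1,3)$ edges are pendant carbonyl oxygens attached to degree-$3$ amide carbons. The internal CH$_2$ chains produce the $(2,2)$ edges. The attachments of chains to branch points and amide carbons produce the $(2,3)$ edges.

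\textbf{Recurrence.} Let $m^{(n)}_{ij}$ be the number of edges of type $(i,j)$ in generation $n$. Going from generation $n$ to $n+1$ doubles the number of terminal arms. Each new branching unit contributes a fixed number of edges of each type. Converting an old terminal arm into an interior arm turns its pendant edges into internal $(2,2)$ or $(2,3)$ edges in a fixed way. This gives recurrences of the form $m^{(n+1)}_{ij}=2m^{(n)}_{ij}+c_{ij}$, or more simply $m^{(n+1)}_{ij}-m^{(n)}_{ij}=k_{ij}2^n$. Their solutions have the shape $\alpha_{ij}2^n+\beta_{ij}$. I would fix $\alpha_{ij}$ and $\beta_{ij}$ by counting one or two small generations ($n=0,1$) directly from the drawing, and then finish by induction on $n$.

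\textbf{Checks.} I would confirm the result with two consistency checks. First, the total number of edges is $44\times 2^n-19$; this should equal $|V|-1$ plus the number of rings, which is zero for a tree-like dendrimer. Second, the handshake lemma requires $\sum_v d(v)=2|E|$. From the partition, the degree-$1$ count is $2^{n+3}-2$. The degree-$3$ count is $\frac{1}{3}\big((2^{n+2}-2)+(20\times 2^n-9)\big)=8\times 2^n-\frac{11}{3}$. This is not an integer, so I would have to recheck the $(1,3)$ and $(2,3)$ constants against the actual drawing: either the cited constant terms come from a different convention for the core, or the core contains a vertex of another degree.

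\textbf{Main obstacle.} The hard part is exactly this reconciliation: correctly identifying the core and the boundary effects that produce the constants $-2,-8,-9$. I would resolve it by explicitly listing all edges of generations $n=0$ and $n=1$ from the structure in~\cite{sdas2023a}.
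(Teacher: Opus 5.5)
Your handshake check is correct, and it settles the matter: the statement as written cannot be proved. For any graph, the number $n_j$ of vertices of degree $j$ is determined by the M-polynomial alone, via $j\,n_j=2m_{jj}+\sum_{i\neq j} m_{\min(i,j)\max(i,j)}$. Here this gives
\[
3n_3=m_{13}+m_{23}=24\cdot 2^n-11,\qquad 2n_2=m_{12}+2m_{22}+m_{23}=56\cdot 2^n-25 .
\]
So $n_3$ is not an integer, since $24\cdot 2^n-11\equiv 1 \pmod 3$. You did not notice that $n_2$ is not an integer either, because the right-hand side is odd. This holds for every $n\ge 0$, so the displayed polynomial is not the M-polynomial of any graph. (Your $|E|=|V|-1$ check cannot even be run, because $|V|$ comes out non-integral.)

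Neither of your escape routes works. A different convention for the core still produces some graph, and its M-polynomial must pass these tests. A vertex of another degree $k$ would have to appear as an exponent $k$ in some term, and no such term is present. Your recurrence-plus-induction scheme, carried out on the actual molecular graph, must therefore end with constant terms different from the stated ones. The final inductive step confirming the stated coefficients cannot be completed.

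The paper gives no argument for this statement. The formula is simply imported from \cite{sdas2023a} and then used for the PETAA values of \textit{HSO}, so there is no proof to fall back on. The right conclusion from your check is that the quoted formula contains an error.

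For instance, replacing $20\times 2^n-9$ by $20\times 2^n-10$ restores integrality: $n_1=2^{n+3}-2$, $n_2=28\cdot 2^n-13$, $n_3=2^{n+3}-4$. It also gives $|V|=44\cdot 2^n-19=|E|+1$, consistent with your expectation that the dendrimer is a tree. But that is only a candidate correction. It must be confirmed by actually listing the edges of small generations and the per-generation increments, which your proposal describes but never carries out. Your structural claims are also unverified, such as which atoms are pendant and where the $(1,3)$ edges come from. So the proposal is a template rather than a proof, even of a corrected formula.
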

\begin{theorem}
	Let $G$ be the family of poly ethylene amide amine (PETAA) dendrimers. Then the hyperbolic Sombor index is given by
	$$\textit{HSO}(G)=\sqrt{5}\cdot 2^{n+2}+\sqrt{10}(2^{n+2}-2)+\sqrt{8}(2^{n+3}-4)+\sqrt{13}\Big(10\times 2^n-\frac{9}{2}\Big).$$
\end{theorem}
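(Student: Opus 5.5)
The plan is to apply Theorem~\ref{Th:1} directly to the M-polynomial of the \textit{PETAA} dendrimer given in Theorem~\ref{Th:22}, exactly as in the preceding dendrimer computations. We write
$$\textit{HSO}(G)= D^{1/2}_{x}JP_{y}P_{x}S_{x}\Big(2^{n+2}xy^2+(2^{n+2}-2)xy^3+(2^{n+4}-8)x^2y^2+(20\times 2^n-9)x^2y^3\Big)\Big|_{x=1}.$$
All five operators act term by term on monomials $m_{ij}x^iy^j$; in particular $D^{1/2}_x$ does so on a single monomial $cx^k$, giving $\sqrt{k}\,cx^k$, as already used in the proof of Theorem~\ref{Th:1}. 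So we can process the four monomials separately, with $i\in\{1,2\}$ and $j\in\{2,3\}$.

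First, apply $S_x$, which divides each coefficient by the $x$-exponent $i$. The two terms with $i=1$ are unchanged. The terms with $i=2$ become $(2^{n+3}-4)x^2y^2$ and $\big(10\times 2^n-\frac{9}{2}\big)x^2y^3$.

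Next, $P_x$ and $P_y$ square the exponents, giving $x^1y^4$, $x^1y^9$, $x^4y^4$ and $x^4y^9$. Then $J$ sets $y=x$ and adds the exponents, producing $x^5$, $x^{10}$, $x^8$ and $x^{13}$ respectively.

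Finally, $D^{1/2}_x$ multiplies each term by the square root of its exponent. Evaluating at $x=1$ then gives
$$\sqrt{5}\cdot 2^{n+2}+\sqrt{10}(2^{n+2}-2)+\sqrt{8}(2^{n+3}-4)+\sqrt{13}\Big(10\times 2^n-\frac{9}{2}\Big),$$
which is the claimed expression.

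There is no real obstacle; the only point needing care is the bookkeeping in the $S_x$ step. There one must halve only the two coefficients with $i=2$, keeping $(20\times 2^n-9)/2$ as $10\times 2^n-\frac{9}{2}$ rather than simplifying it incorrectly. One must also track the squared exponents correctly, for example $1+9=10$ and $4+9=13$.
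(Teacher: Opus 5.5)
Your proposal is correct and matches the paper's proof step for step. $S_x$ halves only the two coefficients with $i=2$, then $P_x$, $P_y$ and $J$ give the exponents $5,10,8,13$, and $D^{1/2}_x$ supplies the factors $\sqrt{5},\sqrt{10},\sqrt{8},\sqrt{13}$.

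One caveat is shared with the paper's proof of Theorem~\ref{Th:1}. Under the literal definition $D^{1/2}_x(f)=\sqrt{x\,\partial f/\partial x}\cdot\sqrt{f}$ in Table~\ref{Table:Ope}, the operator is not additive: at $x=1$ it sends $x^5+x^8$ to $\sqrt{26}$, not $\sqrt{5}+\sqrt{8}$. So applying it term by term is a convention, not a consequence of the definition. The final formula is nonetheless confirmed directly by $\textit{HSO}(G)=\sum_{i\le j} m_{ij}\sqrt{i^2+j^2}/i$.
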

\begin{proof}
	The M-polynomial of the \textit{PETAA} dendrimer is given by
	\begin{equation*}
		M(G;x,y)=2^{n+2}xy^2+(2^{n+2}-2)xy^3+(2^{n+4}-8)x^2y^2+(20\times 2^n-9)x^2y^3.
	\end{equation*}
	Then
	\begin{align*}
		&\textit{HSO}(G)\\
		&= D^{1/2}_{x}JP_{y}P_{x}S_{x}(M(G;x,y))|_{x=1}\\
		&=D^{1/2}_{x}JP_{y}P_{x}S_{x}\Big(2^{n+2}xy^2+(2^{n+2}-2)xy^3+(2^{n+4}-8)x^2y^2+(20\times 2^n-9)x^2y^3\Big)|_{x=1}\\
		&=D^{1/2}_{x}JP_{y}P_{x}\Big(2^{n+2}xy^2+(2^{n+2}-2)xy^3+(2^{n+3}-4)x^2y^2+\Big(10\times 2^n-\frac{9}{2}\Big)x^2y^3\Big)|_{x=1}\\
		&=D^{1/2}_{x}J\Big(2^{n+2}xy^4+(2^{n+2}-2)xy^9+(2^{n+3}-4)x^4y^4+\Big(10\times 2^n-\frac{9}{2}\Big)x^4y^9\Big)|_{x=1}\\
		&=D^{1/2}_{x}\Big(2^{n+2}x^5+(2^{n+2}-2)x^{10}+(2^{n+3}-4)x^8+\Big(10\times 2^n-\frac{9}{2}\Big)x^{13}\Big)|_{x=1}\\
		&=\Big(\sqrt{5}\cdot 2^{n+2}x^5+\sqrt{10}(2^{n+2}-2)x^{10}+\sqrt{8}(2^{n+3}-4)x^8+\sqrt{13}\Big(10\times 2^n-\frac{9}{2}\Big)x^{13}\Big)|_{x=1}\\
		&=\sqrt{5}\cdot 2^{n+2}+\sqrt{10}(2^{n+2}-2)+\sqrt{8}(2^{n+3}-4)+\sqrt{13}\Big(10\times 2^n-\frac{9}{2}\Big).
	\end{align*}
\end{proof}
\begin{figure}[htb!]
	\centering
	\begin{minipage}[t]{0.45\textwidth}
		\includegraphics[width=\textwidth]{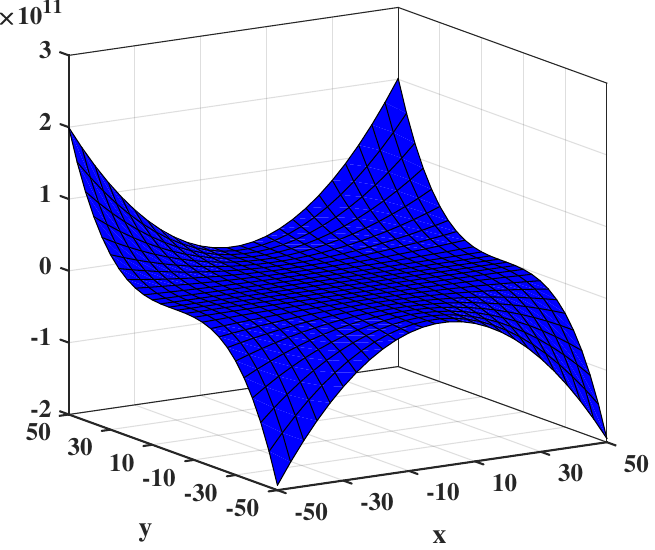}
		\centerline{\textbf{(a)}}
		\label{Fig:MP_PETAA}
	\end{minipage}
	\hspace{0.2cm}
	\begin{minipage}[t]{0.45\textwidth}
		\includegraphics[width=\textwidth]{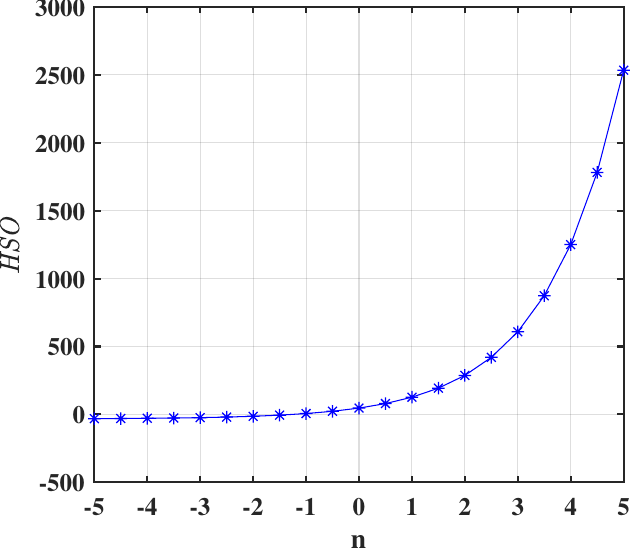}
		\centerline{\textbf{(b)}}
		\label{Fig:HSO_PETAA}
	\end{minipage}
	\caption{Graphical illustration of the (a)~M-polynomial and (b)~\textit{HSO} index of \textit{PETAA} dendrimer for $n=5$.}
\end{figure}
\subsection{Hyperbolic Sombor Index for Jagged-rectangle Benzenoid System $\boldsymbol{B_{m,n}}$}
Shui Ling-Ling et al.~\cite{shui2005} defined the jagged-rectangle benzenoid system $B_{m,n}$ in $2005$. The number of hexagonal cells in each chain alternates between $m$ and $m-1$; thus, the shape of the hexagonal jagged rectangle $B_{m,n}$ is a rectangle. There are $4mn+4m+2n-2$ vertices and $6mn+5m+n-4$ edges in $B_{m,n}$, respectively, and each vertex has a degree of $2$ or $3$.
\begin{theorem}~\cite{kwun2018}
	Let $G$ be the family of jagged-rectangle benzenoid system $B_{m,n}$ with $m \in \mathbb{N}\setminus \{1\}$ and $n \in \mathbb{N}$. Then the expression of M-polynomial is
	\begin{equation*}
		M(G;x,y)=(2n+4)x^2y^2+(4m+4n-4)x^2y^3+(6mn+m-5n-4)x^3y^3.
	\end{equation*}
\end{theorem}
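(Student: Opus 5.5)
The plan is to avoid a full edge-by-edge inspection of $B_{m,n}$. I would use the vertex and edge counts stated above, $|V(B_{m,n})|=4mn+4m+2n-2$ and $|E(B_{m,n})|=6mn+5m+n-4$, together with the fact that every vertex has degree $2$ or $3$. This leaves only three unknowns, $m_{22}$, $m_{23}$ and $m_{33}$. Two of the required relations are purely arithmetic. The third needs a direct geometric count on the boundary.

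\textbf{Step 1 (degree-2 vertices).} Let $n_2,n_3$ be the numbers of vertices of degree $2$ and $3$. We have $n_2+n_3=|V|$ and, by the handshake lemma, $2n_2+3n_3=2|E|$. Hence
\begin{equation*}
n_2=3|V|-2|E|=2m+4n+2 .
\end{equation*}

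\textbf{Step 2 (relation among $m_{22}$ and $m_{23}$).} Sum the degrees over the degree-$2$ vertices only. Each $(2,2)$-edge is counted twice and each $(2,3)$-edge once, so
\begin{equation*}
2m_{22}+m_{23}=2n_2=4m+8n+4 .
\end{equation*}

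\textbf{Step 3 (direct count of $m_{22}$).} This is the main obstacle, since it needs the actual shape of $B_{m,n}$. All degree-$2$ vertices lie on the outer boundary, so it suffices to walk around the perimeter and record the maximal runs of consecutive degree-$2$ vertices. Along the straight top and bottom chains, degree-$2$ vertices alternate with degree-$3$ vertices, so these chains produce only $(2,3)$-edges away from the corners. On the two jagged sides, each protruding hexagon of a row of $m$ hexagons contributes a path through three consecutive degree-$2$ vertices, which gives two $(2,2)$-edges. The four corner hexagons add a bounded correction. I would check that the number of protrusions on the two sides together grows by one per row. I would pin down the constant by verifying small cases ($n=1$ with $m=2,3$) and then induct on $n$: passing from $B_{m,n}$ to $B_{m,n+1}$ attaches one more row of hexagons on a jagged side, which should add exactly $2$ edges of type $(2,2)$. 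The target is $m_{22}=2n+4$.

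\textbf{Step 4 (conclude).} From Step 2,
\begin{equation*}
m_{23}=4m+8n+4-2(2n+4)=4m+4n-4,
\end{equation*}
and then
\begin{equation*}
m_{33}=|E|-m_{22}-m_{23}=6mn+m-5n-4 .
\end{equation*}
Substituting these three coefficients into Definition~\ref{Def:MP} gives the stated M-polynomial.
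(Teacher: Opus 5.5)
Your Steps 1, 2 and 4 are correct: $n_2=2m+4n+2$ and $2m_{22}+m_{23}=4m+8n+4$ both check out. Your route also differs from the paper's. The paper gives no proof of its own and defers to Kwun et al., where all three classes $E_{2,2},E_{2,3},E_{3,3}$ are, as usual, counted directly from the drawing. You instead take the quoted $|V|$ and $|E|$ as input and reduce everything to one geometric count, that of $m_{22}$.

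The gap is in Step 3, the only step with geometric content, because your description of $B_{m,n}$ is wrong. First pin down the shape. By Euler's formula the number of hexagons is $|E|-|V|+1=2mn+m-n-1=(n+1)(m-1)+nm$. So $B_{m,n}$ consists of $2n+1$ linear chains. The first, third, \dots, last chains have $m-1$ hexagons. The $n$ chains in between have $m$ hexagons and stick out by half a hexagon at both ends, which is also why $m\ge 2$ is needed. An end hexagon of an $m$-chain shares its top and bottom vertices with the adjacent $(m-1)$-chains, so these have degree $3$; only its two outer vertices have degree $2$. Hence each protrusion carries exactly \emph{one} $(2,2)$-edge, not two. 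There are also \emph{two} protrusions per $m$-chain (one on each side), not one per row. Your two errors cancel, which is why the target $2n+4$ and the increment $2$ come out right. With the protrusions counted correctly, your description would make $m_{22}$ grow by $4$ per unit of $n$. Likewise, $B_{m,n+1}$ arises from $B_{m,n}$ by adding two chains (of lengths $m$ and $m-1$), not one, and your increment is asserted rather than derived. What actually happens is that the two corner $(2,2)$-edges of the old bottom chain become $(3,3)$-edges, while two protrusion edges and two new corner edges appear.

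A direct count repairs Step 3 without induction. Along the upper boundary of the top chain the degrees read $2,2,3,2,3,\dots,3,2,2$ when $m\ge 3$ and $2,2,2$ when $m=2$. In both cases this gives exactly two $(2,2)$-edges, and the bottom chain likewise gives two. The outer vertical edge of every inner $(m-1)$-chain joins two vertices of degree $3$, each shared with an end hexagon of a neighbouring $m$-chain, so it contributes nothing. The $2n$ protrusions contribute $2n$. Thus $m_{22}=2n+4$. As a check, the same picture gives $n_2=2(m+1)+4n$, in agreement with Step 1. Its $2m+2n-2$ maximal runs of degree-$2$ vertices along the boundary cycle also yield $m_{23}=4m+4n-4$ directly, since each run is flanked by exactly two $(2,3)$-edges. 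With this in place of Step 3, your Steps 2 and 4 complete the proof.
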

\begin{theorem}
	Let $G$ be the family of jagged-rectangle benzenoid system ($B_{m,n}$). Then the hyperbolic Sombor index is given by
	$$\textit{HSO}(G)=\sqrt{8}(n+2)+\sqrt{13}(2m+2n-2)+\sqrt{18}\Big(2mn+\frac{m}{3}-\frac{5n}{3}-\frac{4}{3}\Big).$$
\end{theorem}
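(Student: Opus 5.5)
We follow the same operator calculus as for the earlier chemical families. We take the M-polynomial of $B_{m,n}$ quoted just before the statement (from~\cite{kwun2018}),
$$M(G;x,y)=(2n+4)x^2y^2+(4m+4n-4)x^2y^3+(6mn+m-5n-4)x^3y^3,$$
and substitute it into the closed derivation formula of Theorem~\ref{Th:1}, namely $\textit{HSO}(G)=D^{1/2}_{x}JP_{y}P_{x}S_{x}(M(G;x,y))|_{x=1}$. All the operators are linear, so we act term by term.

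The steps are as follows.
\begin{itemize}
\item[(i)] $S_x$ divides each coefficient by the $x$-exponent $i$. This gives $(n+2)x^2y^2$, $(2m+2n-2)x^2y^3$ and $\bigl(2mn+\frac{m}{3}-\frac{5n}{3}-\frac{4}{3}\bigr)x^3y^3$.
\item[(ii)] $P_x$ and $P_y$ square the exponents, giving the monomials $x^4y^4$, $x^4y^9$ and $x^9y^9$.
\item[(iii)] $J$ sets $y=x$, producing $x^8$, $x^{13}$ and $x^{18}$.
\item[(iv)] $D^{1/2}_x$ applied to $c\,x^k$ yields $\sqrt{x\cdot ckx^{k-1}}\cdot\sqrt{cx^k}=c\sqrt{k}\,x^k$ for $c>0$. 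It therefore multiplies the coefficients by $\sqrt{8}$, $\sqrt{13}$ and $\sqrt{18}$ respectively.
\end{itemize}
Evaluating at $x=1$ then gives exactly
$$\sqrt{8}(n+2)+\sqrt{13}(2m+2n-2)+\sqrt{18}\Bigl(2mn+\tfrac{m}{3}-\tfrac{5n}{3}-\tfrac43\Bigr).$$

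The only delicate point is step (iv). Because $D^{1/2}_x$ is defined through square roots, it is not genuinely linear on sums. We handle this as in the proof of Theorem~\ref{Th:1}: we interpret it monomial-wise, and note that the coefficients are nonnegative for $m\ge2$, $n\ge1$, so the square roots are legitimate.

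As a sanity check, the result agrees with summing $m_{ij}\sqrt{i^2+j^2}/i$ directly, as in Equation~\eqref{Eq:4}.
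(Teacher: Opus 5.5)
Your proposal is correct and matches the paper's proof: you substitute the M-polynomial of $B_{m,n}$ into $D^{1/2}_{x}JP_{y}P_{x}S_{x}(M(G;x,y))|_{x=1}$, divide the coefficients by $2,2,3$, obtain the exponents $8,13,18$, and multiply by $\sqrt{8},\sqrt{13},\sqrt{18}$. Your remark that $D^{1/2}_x$ is not additive on sums, so it must be applied monomial-wise to nonnegative coefficients, is a valid caveat that the paper leaves implicit when it moves the operator inside the sum in the proof of Theorem~\ref{Th:1}.
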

\begin{proof}
	The M-polynomial of $B_{m,n}$ is given by
	\begin{equation*}
		M(G;x,y)=(2n+4)x^2y^2+(4m+4n-4)x^2y^3+(6mn+m-5n-4)x^3y^3.
	\end{equation*}
	Then
	\begin{align*}
		&\textit{HSO}(G)\\
		&= D^{1/2}_{x}JP_{y}P_{x}S_{x}(M(G;x,y))|_{x=1}\\
		&=D^{1/2}_{x}JP_{y}P_{x}S_{x}\Big((2n+4)x^2y^2+(4m+4n-4)x^2y^3+(6mn+m-5n-4)x^3y^3\Big)|_{x=1}\\
		&=D^{1/2}_{x}JP_{y}P_{x}\Big((n+2)x^2y^2+(2m+2n-2)x^2y^3+\Big(2mn+\frac{m}{3}-\frac{5n}{3}-\frac{4}{3}\Big)x^3y^3\Big)|_{x=1}\\
		&=D^{1/2}_{x}J\Big((n+2)x^4y^4+(2m+2n-2)x^4y^9+\Big(2mn+\frac{m}{3}-\frac{5n}{3}-\frac{4}{3}\Big)x^9y^9\Big)|_{x=1}\\
		&=D^{1/2}_{x}\Big((n+2)x^8+(2m+2n-2)x^{13}+\Big(2mn+\frac{m}{3}-\frac{5n}{3}-\frac{4}{3}\Big)x^{18}\Big)|_{x=1}\\
		&=\Big(\sqrt{8}(n+2)x^8+\sqrt{13}(2m+2n-2)x^{13}+\sqrt{18}\Big(2mn+\frac{m}{3}-\frac{5n}{3}-\frac{4}{3}\Big)x^{18}\Big)|_{x=1}\\
		&=\sqrt{8}(n+2)+\sqrt{13}(2m+2n-2)+\sqrt{18}\Big(2mn+\frac{m}{3}-\frac{5n}{3}-\frac{4}{3}\Big).
	\end{align*}
\end{proof}
\begin{figure}[htb!]
	\centering
	\begin{minipage}[t]{0.45\textwidth}
		\includegraphics[width=\textwidth]{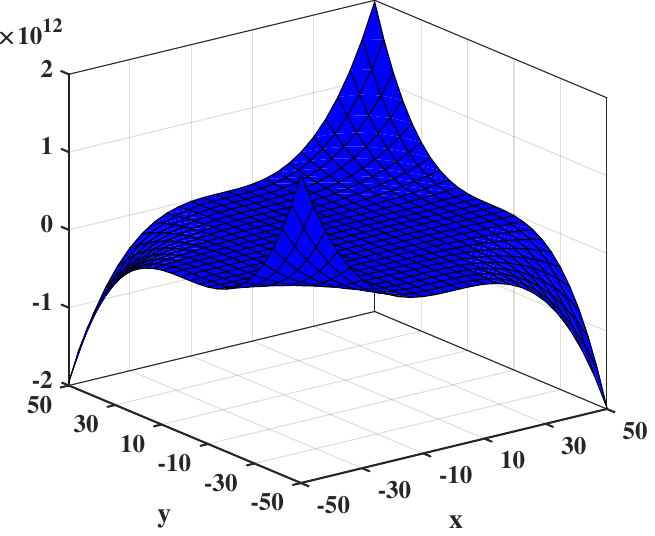}
		\centerline{\textbf{(a)}}
		\label{Fig:MP_Bmn}
	\end{minipage}
	\hspace{0.2cm}
	\begin{minipage}[t]{0.45\textwidth}
		\includegraphics[width=\textwidth]{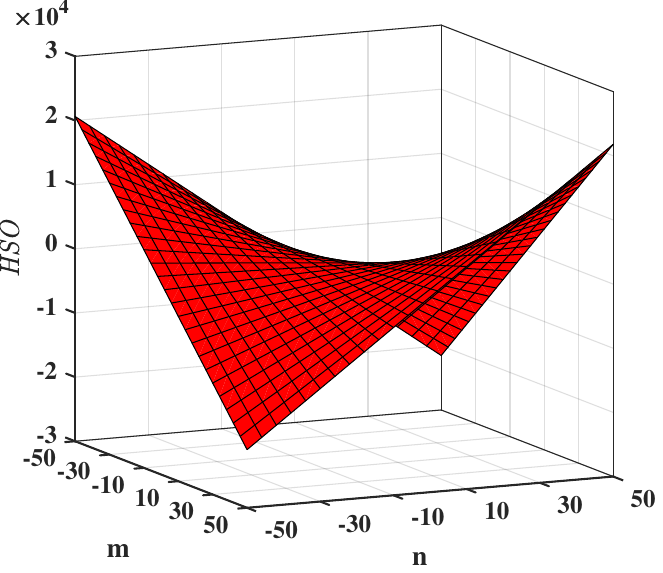}
		\centerline{\textbf{(b)}}
		\label{Fig:HSO_Bmn}
	\end{minipage}
	\caption{Graphical illustration of the (a)~M-polynomial of $B_{m,n}$ for $m=n=5$ and (b)~\textit{HSO} index of $B_{m,n}$.}
\end{figure}
\subsection{Hyperbolic Sombor Index for Polycyclic Aromatic Hydrocarbon $\textit{PAH}_n$}
The polycyclic aromatic hydrocarbon ($\textit{PAH}_n$) is a highly desirable structure in several scientific domains, particularly physics, chemistry and nanoscience. For additional findings and details about the structure, reader may follow the references~\cite{shao2018computing,jamil2016computing,dietz20002}. The molecular graph of $\textit{PAH}_n$ consists of $6n^2+6n$ vertices and $9n^2+3n$ edges.
\begin{theorem}
	Let $G$ be the family of polycyclic aromatic hydrocarbons ($\textit{PAH}_n$), then the expression of M-polynomial is
	\begin{equation*}
		M(G;x,y)=6nx^1y^3+(9n^2-3n)x^3y^3.
	\end{equation*}
\end{theorem}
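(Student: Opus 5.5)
The plan is to read the M-polynomial off an edge partition of $\textit{PAH}_n$ by end-degrees, which is how the M-polynomials in the earlier theorems (e.g. Theorem~\ref{Th:14}) arise. Recall that the molecular graph of $\textit{PAH}_n$ is a benzenoid carbon skeleton in which every carbon atom lying on the boundary carries exactly one hydrogen atom. We take as given the vertex and edge counts quoted above, $|V(G)|=6n^2+6n$ and $|E(G)|=9n^2+3n$. We also take the known parameters of the carbon skeleton $C_n$ from the cited references: it has $6n^2$ vertices and $9n^2-3n$ edges.

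\textbf{Step 1 (degrees in the skeleton).} In the benzenoid skeleton $C_n$ every vertex has degree $2$ or $3$. Let $a$ and $b$ be the numbers of vertices of degree $2$ and $3$. Then
\[
a+b=6n^2,\qquad 2a+3b=2(9n^2-3n)=18n^2-6n,
\]
so $a=6n$ and $b=6n^2-6n$. This is the step I expect to be the main obstacle, because it rests on the edge count of $C_n$. If that count needs justification rather than citation, I would prove it by induction on $n$. The skeleton $C_{n+1}$ arises from $C_n$ by adding one ring of hexagons around the boundary, and a direct count shows this ring adds $12n+6$ vertices and $18n+6$ edges. The increments match $6(n+1)^2-6n^2=12n+6$ and $9(n+1)^2-3(n+1)-(9n^2-3n)=18n+6$, with $C_1$ the benzene ring ($6$ vertices, $6$ edges).

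\textbf{Step 2 (attach hydrogens).} Each of the $6n$ degree-$2$ carbons receives one hydrogen. This brings every carbon to degree $3$ and accounts for the $6n$ extra vertices and $6n$ extra edges, consistent with the totals $6n^2+6n$ and $9n^2+3n$. Hence $G$ has exactly $6n$ vertices of degree $1$ (hydrogens) and $6n^2$ vertices of degree $3$ (carbons), so $\delta=1$ and $\Delta=3$. As a check, the degree sum $6n+18n^2$ equals $2|E(G)|$.

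\textbf{Step 3 (edge partition).} A hydrogen is adjacent only to a carbon, so there are no $(1,1)$ edges. The $6n$ hydrogen edges are therefore all of type $(1,3)$, giving $m_{13}=6n$. Every other edge joins two carbons, so it is of type $(3,3)$ and $m_{33}=9n^2+3n-6n=9n^2-3n$. No other degree pair occurs. Substituting these values into Definition~\ref{Def:MP} gives $M(G;x,y)=6nxy^3+(9n^2-3n)x^3y^3$.
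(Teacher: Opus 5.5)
Your proof is correct and follows the same edge-partition route as the paper, which simply lists $|E_{1,3}|=6n$ and $|E_{3,3}|=9n^2-3n$ and applies Definition~\ref{Def:MP}; your handshake count on the carbon skeleton, which gives exactly $6n$ degree-$2$ carbons, supplies the justification for these numbers that the paper leaves implicit. One small slip in your opening description: for $n\ge 2$ not every boundary carbon carries a hydrogen, since the perimeter also contains $6(n-1)$ degree-$3$ carbons; Step~2 correctly attaches hydrogens only to the $6n$ degree-$2$ carbons, so the argument is unaffected.
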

\begin{proof}
	The graph $G$ contains two types of edges in terms of the degree of end vertices as follows:
	\begin{align*}
		& E_{1,3}=\{uv \in E(G): d(u)=1,d(v)=3\}, & |E_{1,3}|=6n,\\
		& E_{3,3}=\{uv \in E(G): d(u)=d(v)=3\}, & |E_{3,3}|=9n^2-3n.
	\end{align*}
	Then by applying the Definition~\ref{Def:MP}, we get the M-polynomial of polycyclic aromatic hydrocarbons $\textit{PAH}_n$.
\end{proof}
\begin{theorem}
	Let $G$ be the family of polycyclic aromatic hydrocarbons ($\textit{PAH}_n$). Then the hyperbolic Sombor index is given by
	$$\textit{HSO}(G)=6\sqrt{10}n+\sqrt{18}(3n^2-n).$$
\end{theorem}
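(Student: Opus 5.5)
We apply Theorem~\ref{Th:1} to the M-polynomial of $\textit{PAH}_n$ established in the preceding theorem, namely $M(G;x,y)=6nx^1y^3+(9n^2-3n)x^3y^3$, and push the operators through term by term, as in the earlier computations for $B_{m,n}$ and the dendrimers. The argument is linear in the coefficients, so it suffices to track the two monomials $x^1y^3$ and $x^3y^3$ separately.

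First, $S_x$ divides each term by its $x$-exponent. This gives $6nxy^3+(3n^2-n)x^3y^3$. Next, $P_x$ and $P_y$ square the exponents, producing $6nxy^9+(3n^2-n)x^9y^9$. Then $J$ sets $y=x$, which yields $6nx^{10}+(3n^2-n)x^{18}$.

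Finally, $D^{1/2}_x$ acts on a monomial $cx^k$ as $\sqrt{x\cdot ckx^{k-1}}\cdot\sqrt{cx^k}=c\sqrt{k}\,x^k$ for $c>0$, exactly as in the proof of Theorem~\ref{Th:1}. It therefore gives $6\sqrt{10}\,nx^{10}+\sqrt{18}(3n^2-n)x^{18}$. Evaluating at $x=1$ gives $6\sqrt{10}\,n+\sqrt{18}(3n^2-n)$, which is the claimed formula.

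As a cross-check, we can compute directly from the definition. An edge with end degrees $1,3$ contributes $\sqrt{10}/1$, and an edge with end degrees $3,3$ contributes $\sqrt{18}/3$. Summing gives $6n\sqrt{10}+(9n^2-3n)\sqrt{18}/3$, which agrees.

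There is no real obstacle here. The only point needing care is that $D^{1/2}_x$ is not linear, so applying it term by term must be justified as in Theorem~\ref{Th:1}. Following the paper's convention, we treat it termwise on the sum of monomials, as was done there. The coefficients are positive for $n\ge1$, so the square roots are well defined.
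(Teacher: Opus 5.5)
Your proposal is correct and follows the paper's proof step for step: you push $S_x$, $P_x$, $P_y$, $J$ and $D^{1/2}_x$ through $6nxy^3+(9n^2-3n)x^3y^3$ to reach $6\sqrt{10}\,nx^{10}+\sqrt{18}(3n^2-n)x^{18}$, then set $x=1$. Your direct edge-partition check ($6n$ edges contributing $\sqrt{10}$ each and $9n^2-3n$ edges contributing $\sqrt{18}/3$ each) is worth keeping, since it is the one part of the argument that does not rely on applying the non-additive operator $D^{1/2}_x$ termwise.
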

\begin{proof}
	The M-polynomial of $\textit{PAH}_n$ is given by
	\begin{equation*}
		M(G;x,y)=6nx^1y^3+(9n^2-3n)x^3y^3.
	\end{equation*}
	Then
	\begin{align*}
		\textit{HSO}(G)&= D^{1/2}_{x}JP_{y}P_{x}S_{x}(M(G;x,y))|_{x=1}\\
		&=D^{1/2}_{x}JP_{y}P_{x}S_{x}\Big(6nx^1y^3+(9n^2-3n)x^3y^3\Big)|_{x=1}\\
		&=D^{1/2}_{x}JP_{y}P_{x}\Big(6nx^1y^3+(3n^2-n)x^3y^3\Big)|_{x=1}\\
		&=D^{1/2}_{x}J\Big(6nx^1y^9+(3n^2-n)x^9y^9\Big)|_{x=1}\\
		&=D^{1/2}_{x}\Big(6nx^{10}+(3n^2-n)x^{18}\Big)|_{x=1}\\
		&=\Big(\sqrt{10}\cdot6nx^{10}+\sqrt{18}(3n^2-n)x^{18}\Big)|_{x=1}\\
		&=6\sqrt{10}n+\sqrt{18}(3n^2-n).
	\end{align*}
\end{proof}
\begin{figure}[htb!]
	\centering
	\begin{minipage}[t]{0.45\textwidth}
		\includegraphics[width=\textwidth]{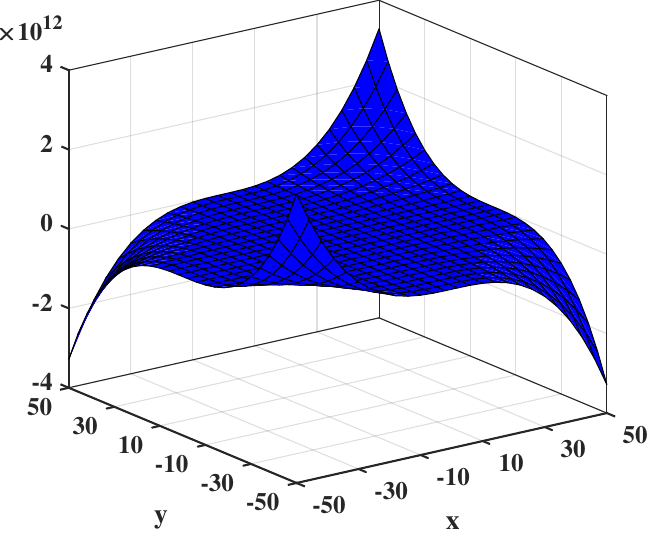}
		\centerline{\textbf{(a)}}
		\label{Fig:MP_PAH}
	\end{minipage}
	\hspace{0.2cm}
	\begin{minipage}[t]{0.45\textwidth}
		\includegraphics[width=\textwidth]{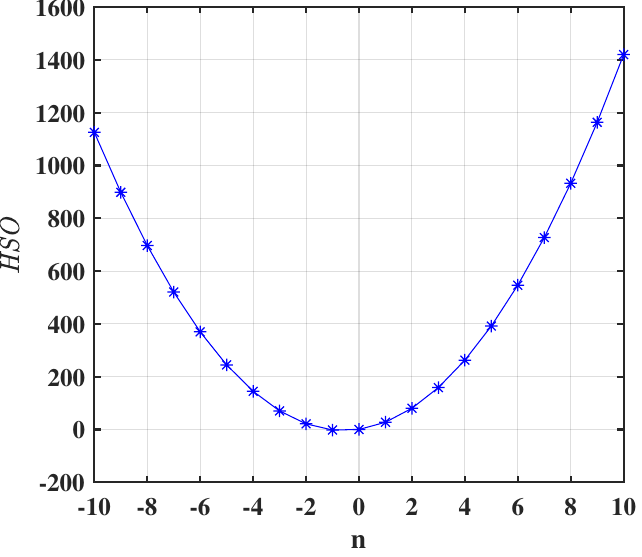}
		\centerline{\textbf{(b)}}
		\label{Fig:HSO_PAH}
	\end{minipage}
	\caption{Graphical illustration of the (a)~M-polynomial of $\textit{PAH}_n$ for $n=5$ and (b)~\textit{HSO} index of $\textit{PAH}_n$.}
\end{figure}
\subsection{Hyperbolic Sombor Index for V-Phynelenic Nanotubes\\$\boldsymbol{\textit{VPHX}[m,n]}$ and V-Phynelenic Nanotori\\$\boldsymbol{\textit{VPHY}[m,n]}$}
The most extensively researched nanostructures are V-Phenylenic nanotubes and nanotorus because of their extensive use in the manufacturing of corrosion-resistant, gas-sensing, and catalytic materials. A trivalent decoration created by alternating $C_4$, $C_6$ and $C_8$ is called a $C_4C_6C_8$ net. Several $C_4C_6C_8$ nets make up the architecture of V-Phenylenic nanotubes and nanotorus. The polycyclic conjugated molecules known as phenylenes are made up of both four- and six-membered rings. Each four-membered ring is next to two six-membered rings, and each four-membered ring is next to two eight-membered rings. $\textit{VPHX}[m,n]$ and $\textit{VPHY}[m,n]$, respectively, are V-Phenylenic nanotubes and nanotori, where $m$ and $n$ are the number of atoms in rows and columns, respectively~\cite{kwun2017}.

Let $G=\textit{VPHX}[m,n]$ represent a family of V-Phenylenic nanotubes with $6mn$ vertices and $9mn$ edges. Next, we use the M-polynomial-based formula given in Theorem~\ref{Th:28} to find the value of the hyperbolic Sombor index for $\textit{VPHX}[m,n]$.
\begin{theorem}~\cite{kwun2017}
	\label{Th:28}
	Let $G$ be the family of V-Phynelenic nanotubes $\textit{VPHX}[m,n]$, then the expression of M-polynomial is
	\begin{equation*}
		M(G;x,y)=4mx^2y^3+m(9n-5)x^3y^3.
	\end{equation*}
\end{theorem}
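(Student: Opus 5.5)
The plan is to partition the edge set of $\textit{VPHX}[m,n]$ by the degrees of the end vertices, count each class, and then read off the M-polynomial from Definition~\ref{Def:MP}, as was done for $\textit{PAH}_n$. The molecular graph is a $C_4C_6C_8$ net rolled into a tube, so every vertex has degree $2$ or $3$. Degree-$2$ vertices can only occur on the two open boundary rows of the tube, since every interior atom is trivalent. Hence at most three edge classes are possible: $E_{2,2}$, $E_{2,3}$ and $E_{3,3}$.

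\textbf{Step 1 (degree-2 vertices).} From the standard drawing of $\textit{VPHX}[m,n]$ with $m$ repeating units around the tube, I would check that each of the two boundary rows contains exactly $m$ vertices of degree $2$. This gives $2m$ degree-$2$ vertices and $6mn-2m$ degree-$3$ vertices.

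\textbf{Step 2 (total edge count).} The handshake lemma then gives
$$2|E(G)| = 3(6mn-2m)+2\cdot 2m = 18mn-2m,$$
so $|E(G)|=9mn-m$. This is the count consistent with the claimed polynomial, since $4m+m(9n-5)=9mn-m$; the edge count $9mn$ quoted before Theorem~\ref{Th:28} should be read with this boundary correction.

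\textbf{Step 3 (the classes $E_{2,2}$ and $E_{2,3}$).} Inspecting one boundary unit, I would verify that no two degree-$2$ vertices are adjacent, so $E_{2,2}=\emptyset$. Each degree-$2$ vertex has both of its neighbours of degree $3$. It follows that $|E_{2,3}| = 2\cdot 2m = 4m$.

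\textbf{Step 4 (the class $E_{3,3}$).} The remaining edges must join two degree-$3$ vertices, so
$$|E_{3,3}| = 9mn - m - 4m = m(9n-5).$$
Substituting $m_{23}=4m$ and $m_{33}=m(9n-5)$ into Definition~\ref{Def:MP} then yields the stated M-polynomial.

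\textbf{Main obstacle.} The hard step is the structural claim in Steps~1 and~3: exactly $m$ degree-$2$ vertices per open end, pairwise non-adjacent. This depends on the precise construction of $\textit{VPHX}[m,n]$ from~\cite{kwun2017}. I would settle it by drawing one fundamental unit at a boundary, with its octagon and square sides, and checking small cases such as $n=1,2$ directly. Once that is fixed, the remaining counts follow from the handshake identity and need no further case analysis.
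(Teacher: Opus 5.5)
Your argument is correct and takes essentially the same approach as the source: the paper gives no proof of its own here and simply cites \cite{kwun2017}, where the polynomial is read off from exactly this partition of $E(G)$ into $E_{2,3}$ and $E_{3,3}$ (the same method the paper uses for $\textit{PAH}_n$). Your structural claims and your correction of the edge count both check out, provided $\textit{VPHX}[m,n]$ is taken to be $n$ cyclic rows of $m$ hexagons fused by squares, with consecutive rows joined by $m$ single bonds. In that description the only degree-$2$ vertices are the $m$ outer hexagon vertices at each open end, each flanked by two degree-$3$ vertices, and a direct count gives $6mn+2mn+m(n-1)=9mn-m$ edges; the $9mn$ quoted in the paper is the count for the nanotorus $\textit{VPHY}[m,n]$.
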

\begin{theorem}
	Let $G$ be the family of V-Phynelenic nanotubes $\textit{VPHX}[m,n]$. Then the hyperbolic Sombor index is given by
	$$\textit{HSO}(G)=2\sqrt{13}m+\sqrt{2}m(9n-5).$$
\end{theorem}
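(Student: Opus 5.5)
We apply the closed derivation formula of Theorem~\ref{Th:1} to the M-polynomial of $\textit{VPHX}[m,n]$ from Theorem~\ref{Th:28}, namely $M(G;x,y)=4mx^2y^3+m(9n-5)x^3y^3$. We evaluate $D^{1/2}_{x}JP_{y}P_{x}S_{x}(M(G;x,y))|_{x=1}$ term by term, exactly as in the proofs for $B_{m,n}$ and $\textit{PAH}_n$. Each operator in Table~\ref{Table:Ope} acts linearly on the sum of monomials here, which the proof of Theorem~\ref{Th:1} already uses.

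The steps, in order, are as follows.
\begin{enumerate}
\item Applying $S_x$ divides each coefficient by the $x$-exponent. This gives $2mx^2y^3+\frac{m(9n-5)}{3}x^3y^3$.
\item Applying $P_x$ and then $P_y$ squares the exponents. This gives $2mx^4y^9+\frac{m(9n-5)}{3}x^9y^9$.
\item Applying $J$ sets $y=x$. This gives $2mx^{13}+\frac{m(9n-5)}{3}x^{18}$.
\item Applying $D^{1/2}_x$ multiplies each monomial by the square root of its exponent, as computed in Theorem~\ref{Th:1}. This gives $2\sqrt{13}\,mx^{13}+\sqrt{18}\,\frac{m(9n-5)}{3}x^{18}$.
\end{enumerate}

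Evaluating at $x=1$ and using $\sqrt{18}/3=\sqrt{2}$ yields $2\sqrt{13}m+\sqrt{2}m(9n-5)$, which is the claimed value. As a direct check, the edge contributions are $\sqrt{13}/2$ for each $(2,3)$-edge and $\sqrt{18}/3=\sqrt 2$ for each $(3,3)$-edge. These agree with the result above.

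There is no real obstacle. The only point needing care is that $D^{1/2}_x$ acts monomial-wise, producing $\sqrt{k}\,c\,x^k$ from $c\,x^k$, rather than on the whole sum. This is how it is used throughout Theorem~\ref{Th:1} and the preceding examples, so we follow the same convention.
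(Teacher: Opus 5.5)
Your proposal is correct and follows the paper's proof: you apply the same operator chain $S_x$, $P_x$, $P_y$, $J$, $D^{1/2}_x$ term by term to $4mx^2y^3+m(9n-5)x^3y^3$ and finish with $\sqrt{18}/3=\sqrt{2}$. Your intermediate lines are cleaner than the paper's, which carries a spurious repeated factor $m(9n-5)$ on the $x^3y^3$ term. Your direct edge-by-edge check also confirms the value without relying on the termwise convention for $D^{1/2}_x$.
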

\begin{proof}
	The M-polynomial of $\textit{VPHX}[m,n]$ is given by
	\begin{equation*}
		M(G;x,y)=4mx^2y^3+m(9n-5)x^3y^3.
	\end{equation*}
	Then
	\begin{align*}
		\textit{HSO}(G)&= D^{1/2}_{x}JP_{y}P_{x}S_{x}(M(G;x,y))|_{x=1}\\
		&=D^{1/2}_{x}JP_{y}P_{x}S_{x}\Big(4mx^2y^3+m(9n-5)x^3y^3\Big)|_{x=1}\\
		&=D^{1/2}_{x}JP_{y}P_{x}\Big(2mx^2y^3+\Big(\frac{m(9n-5)}{3}\Big)m(9n-5)x^3y^3\Big)|_{x=1}\\
		&=D^{1/2}_{x}J\Big(2mx^4y^9+\Big(\frac{m(9n-5)}{3}\Big)m(9n-5)x^9y^9\Big)|_{x=1}\\
		&=D^{1/2}_{x}\Big(2mx^{13}+\Big(\frac{m(9n-5)}{3}\Big)m(9n-5)x^{18}\Big)|_{x=1}\\
		&=\Big(2\sqrt{13}mx^{13}+\sqrt{18}\times\Big(\frac{m(9n-5)}{3}\Big)m(9n-5)x^{18}\Big)|_{x=1}\\
		&=2\sqrt{13}m+\sqrt{2}m(9n-5).
	\end{align*}
\end{proof}
\begin{figure}[htb!]
	\centering
	\begin{minipage}[t]{0.45\textwidth}
		\includegraphics[width=\textwidth]{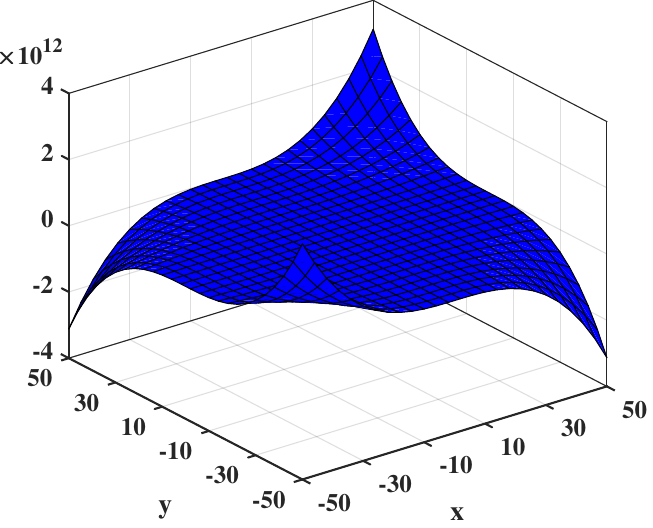}
		\centerline{\textbf{(a)}}
		\label{Fig:MP_VPHX}
	\end{minipage}
	\hspace{0.2cm}
	\begin{minipage}[t]{0.45\textwidth}
		\includegraphics[width=\textwidth]{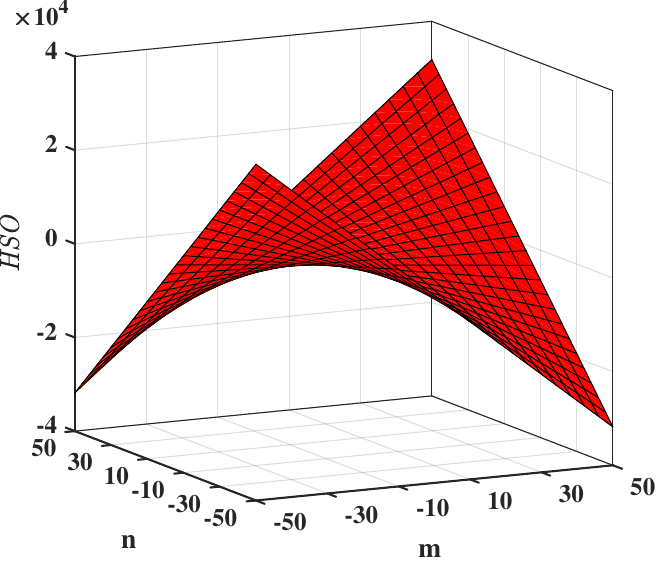}
		\centerline{\textbf{(b)}}
		\label{Fig:HSO_VPHX}
	\end{minipage}
	\caption{Graphical illustration of the (a)~M-polynomial of $\textit{VPHX}[m,n]$ for $m=n=5$ and (b)~\textit{HSO} index of $\textit{VPHX}[m,n]$.}
\end{figure}
Let $G=\textit{VPHY}[m,n]$ represent a family of V-Phenylenic nanotori with $6mn$ vertices and $9mn$ edges. Next, we use the M-polynomial-based formula given in Theorem~\ref{Th:30} to find the value of the hyperbolic Sombor index for $\textit{VPHY}[m,n]$.
\begin{theorem}~\cite{kwun2017}
	\label{Th:30}
	Let $G$ be the family of V-Phynelenic nanotori $\textit{VPHY}[m,n]$, then the expression of M-polynomial is
	\begin{equation*}
		M(G;x,y)=9mnx^3y^3.
	\end{equation*}
\end{theorem}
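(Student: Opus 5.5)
The plan is to reduce this to the $r$-regular case already recorded in Theorem~\ref{Th:RRG}. Everything hinges on one structural fact: the V-Phenylenic nanotorus $\textit{VPHY}[m,n]$ is $3$-regular. Once that is established, Theorem~\ref{Th:RRG} applies with $6mn$ vertices and $r=3$, giving
$$M(G;x,y)=\frac{6mn\cdot 3}{2}x^3y^3=9mnx^3y^3.$$

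\textbf{Step 1: $3$-regularity.} As noted before the statement, $\textit{VPHY}[m,n]$ is built from $C_4C_6C_8$ nets, which are \emph{trivalent} decorations: in the interior of such a net every atom lies on exactly three bonds. For the nanotube $\textit{VPHX}[m,n]$, the open ends produce $2m+2m=4m$ degree-$2$ vertices. This is visible in the $x^2y^3$ term of Theorem~\ref{Th:28}, which has coefficient $4m$ and no $x^2y^2$ term. The nanotorus is obtained from the nanotube by additionally identifying the two boundary rows, i.e.\ closing the remaining open direction. Each of these degree-$2$ end vertices thereby gains exactly one new neighbour, and no vertex gains more. Hence every vertex of $\textit{VPHY}[m,n]$ has degree $3$, and $\delta=\Delta=3$.

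\textbf{Step 2: Edge count and polynomial.} The handshake lemma gives $|E(G)|=\tfrac{1}{2}\cdot 3\cdot 6mn=9mn$, which matches the stated edge count. Since $d(u)=d(v)=3$ for every edge, Definition~\ref{Def:MP} gives $m_{33}=9mn$ and $m_{ij}=0$ for all other pairs, which is exactly the claimed $M(G;x,y)=9mnx^3y^3$. Equivalently, one can simply invoke Theorem~\ref{Th:RRG} as above.

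\textbf{Main obstacle.} The only step needing care is Step~1: checking that gluing the boundary of the tube creates no degree-$4$ vertices and leaves no degree-$2$ vertices. I would verify this by following the periodic labelling of the $C_4C_6C_8$ unit cell (six atoms per cell, $mn$ cells, indices taken mod $m$ and mod $n$). In that labelling each atom's three neighbours are specified by translation-invariant rules, and these rules remain valid at the boundary because indices wrap around.
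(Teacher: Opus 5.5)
Your proposal is correct and is essentially the argument behind this cited result: the paper gives no proof of its own and defers to Kwun et al., who likewise argue that every vertex of $\textit{VPHY}[m,n]$ has degree $3$, so all $9mn$ edges are of type $(3,3)$. This is exactly your reduction to Theorem~\ref{Th:RRG}.

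One small correction to Step~1. The coefficient $4m$ in Theorem~\ref{Th:28} counts $(2,3)$-edges, not degree-$2$ vertices. Each degree-$2$ vertex contributes two such edges, so the nanotube has only $2m$ of them, as the handshake lemma with $|E(\textit{VPHX}[m,n])|=4m+m(9n-5)=9mn-m$ confirms. Closing the tube then adds $m$ edges to reach $9mn$.
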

\begin{theorem}
	Let $G$ be the family of V-Phynelenic nanotori $\textit{VPHY}[m,n]$. Then the hyperbolic Sombor index is given by
	$$\textit{HSO}(G)=3\sqrt{18}mn.$$
\end{theorem}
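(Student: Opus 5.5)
We follow the same template as the preceding computations, with Theorem~\ref{Th:1} as the tool and the M-polynomial of Theorem~\ref{Th:30} as input. By Theorem~\ref{Th:30}, $M(G;x,y)=9mnx^3y^3$, so
$$\textit{HSO}(G)=D^{1/2}_{x}JP_{y}P_{x}S_{x}\big(9mnx^3y^3\big)|_{x=1}.$$
The operators are linear in the coefficient, so the scalar $9mn$ can be carried along and the operators applied to $x^3y^3$ alone, in the order $S_x$, then $P_x$ and $P_y$, then $J$, and finally $D^{1/2}_x$.

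The steps are as follows.
\begin{enumerate}
\item $S_x$ divides by the $x$-exponent $3$, giving $3mnx^3y^3$.
\item $P_x$ and $P_y$ square the exponents, giving $3mnx^9y^9$.
\item $J$ sets $y=x$, giving $3mnx^{18}$.
\item $D^{1/2}_x$ multiplies by $\sqrt{18}$, giving $3\sqrt{18}\,mn\,x^{18}$.
\end{enumerate}
Evaluating at $x=1$ then yields $3\sqrt{18}\,mn$, as claimed.

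For step 4, the definition in Table~\ref{Table:Ope} gives
$$D^{1/2}_x(c\,x^k)=\sqrt{c\,k\,x^k}\cdot\sqrt{c\,x^k}=\sqrt{k}\,c\,x^k \quad (c>0),$$
so it produces the factor $\sqrt{18}$ here. This is the only point that needs care: $D^{1/2}_x$ is not linear, so the scalar $9mn$ should be kept inside rather than pulled out. Since $M$ has a single term, this causes no difficulty.

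As a check by direct computation, all $9mn$ edges are $(3,3)$-edges, each contributing $\sqrt{18}/3$, which again gives $3\sqrt{18}\,mn$ (equivalently $9\sqrt{2}\,mn$).
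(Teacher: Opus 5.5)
Your proposal is correct and follows the paper's proof step for step: $S_x$ gives $3mnx^3y^3$, then $P_yP_x$ gives $3mnx^9y^9$, then $J$ gives $3mnx^{18}$, and $D^{1/2}_x$ gives $3\sqrt{18}\,mn\,x^{18}$, which is evaluated at $x=1$. Your remark that $D^{1/2}_x$ is not additive is apt, though it is positively homogeneous, so the positive scalar $9mn$ could in fact be pulled out; your direct edge count $9mn\cdot\sqrt{18}/3=9\sqrt{2}\,mn$ confirms the value.
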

\begin{proof}
	The M-polynomial of $\textit{VPHY}[m,n]$ is given by
	\begin{equation*}
		M(G;x,y)=9mnx^3y^3.
	\end{equation*}
	Then
	\begin{align*}
		\textit{HSO}(G)&= D^{1/2}_{x}JP_{y}P_{x}S_{x}(M(G;x,y))|_{x=1}\\
		&=D^{1/2}_{x}JP_{y}P_{x}S_{x}\Big(9mnx^3y^3\Big)|_{x=1}\\
		&=D^{1/2}_{x}JP_{y}P_{x}\Big(3mnx^3y^3\Big)|_{x=1}\\
		&=D^{1/2}_{x}J\Big(3mnx^9y^9\Big)|_{x=1}\\
		&=D^{1/2}_{x}\Big(3mnx^{18}\Big)|_{x=1}\\
		&=\Big(3\sqrt{18}mnx^{18}\Big)|_{x=1}\\
		&=3\sqrt{18}mn.
	\end{align*}
\end{proof}
\begin{figure}[htb!]
	\centering
	\begin{minipage}[t]{0.45\textwidth}
		\includegraphics[width=\textwidth]{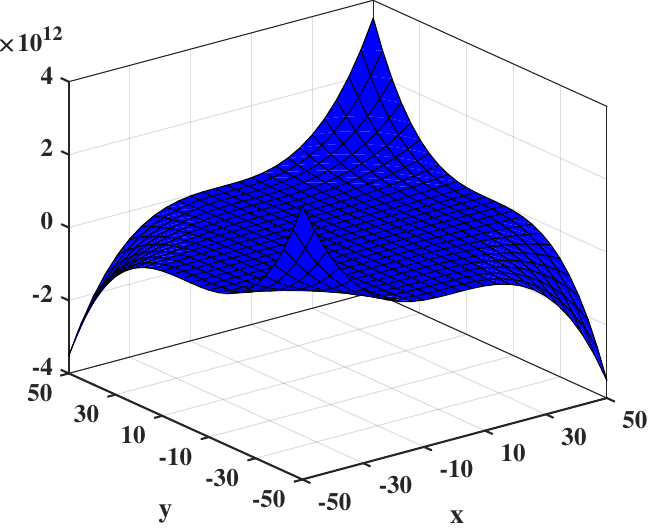}
		\centerline{\textbf{(a)}}
		\label{Fig:MP_VPHY}
	\end{minipage}
	\hspace{0.2cm}
	\begin{minipage}[t]{0.45\textwidth}
		\includegraphics[width=\textwidth]{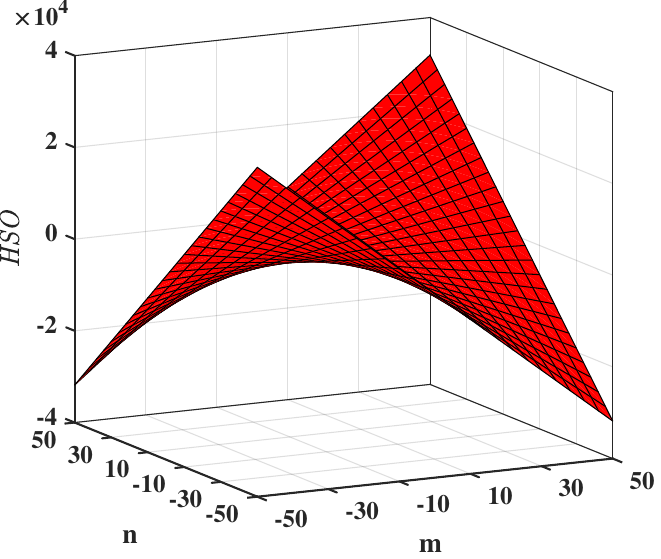}
		\centerline{\textbf{(b)}}
		\label{Fig:HSO_VPHY}
	\end{minipage}
	\caption{Graphical illustration of the (a)~M-polynomial of $\textit{VPHY}[m,n]$ for $m=n=5$ and (b)~\textit{HSO} index of $\textit{VPHY}[m,n]$.}
\end{figure}
\subsection{Hyperbolic Sombor Index for Porous Graphene~$\boldsymbol{\textit{PG}[p,q]}$}
Graphene has gained prominence as a groundbreaking material, inspiring extensive experimental and theoretical research due to its remarkable properties. Porous graphene, a class of graphene-based materials featuring nanoscale pores, exhibits exceptional properties across a wide range of applications. There are $12pq + 12p + 12q$ vertices and $15pq + 14p + 14q - 1$ edges in the molecular graph of porous graphene $G=\textit{PG}[p,q]$~\cite{shanmukha2021m}. We now determine the value of the hyperbolic Sombor index for $\textit{PG}[p,q]$ using the M-polynomial-based formula provided in Theorem~\ref{Th:32}.
\begin{theorem}~\cite{shanmukha2021m}
	\label{Th:32}
	Let $G$ be the family of porous graphene $\textit{PG}[p,q]$, then the expression of M-polynomial is
	\begin{equation*}
		M(G;x,y)=(4p+4q+4)x^2y^2+(12pq+8p+8q-4)x^2y^3+(3pq+2p+2q-1)x^3y^3.
	\end{equation*}
\end{theorem}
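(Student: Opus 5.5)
The plan is to prove this the standard way for M-polynomials: partition $E(G)$ by the degree pair of the end vertices and count each class. All vertices of $\textit{PG}[p,q]$ have degree $2$ or $3$, so by Definition~\ref{Def:MP} only the classes $E_{2,2}$, $E_{2,3}$, $E_{3,3}$ can occur. The statement then amounts to computing $|E_{2,2}|$, $|E_{2,3}|$ and $|E_{3,3}|$.

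First I would use the known vertex and edge counts to reduce the work. Let $n_2,n_3$ be the numbers of vertices of degree $2$ and $3$. Then
\[
n_2+n_3=12pq+12p+12q, \qquad 2n_2+3n_3=2|E(G)|=30pq+28p+28q-2 .
\]
Solving gives $n_3=6pq+4p+4q-2$ and $n_2=6pq+8p+8q+2$. Counting edge ends at each degree class then yields two linear relations:
\[
2|E_{2,2}|+|E_{2,3}|=2n_2=12pq+16p+16q+4, \qquad |E_{2,3}|+2|E_{3,3}|=3n_3=18pq+12p+12q-6 .
\]
Together with $|E_{2,2}|+|E_{2,3}|+|E_{3,3}|=15pq+14p+14q-1$, these are dependent, so exactly one independent structural count remains. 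The claimed values satisfy both relations, which is a consistency check on the target.

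The key step, and the main obstacle, is to count $|E_{2,2}|$ directly from the geometry of the molecular graph. I would do this by examining the boundary of the $p\times q$ arrangement of porous-graphene units: each unit's outer hexagons contribute degree-$2$ vertices. Adjacent pairs of such vertices occur along the outer rim, giving $4$ per row unit and $4$ per column unit, plus $4$ extra from the corner units. This would give $|E_{2,2}|=4p+4q+4$. I would verify the count on the small cases $\textit{PG}[1,1]$ and $\textit{PG}[2,1]$ by drawing them, and then argue that adding a row (resp. column) of units adds exactly $4$ such edges.

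With $|E_{2,2}|$ fixed, the two linear relations give
\[
|E_{2,3}|=12pq+8p+8q-4, \qquad |E_{3,3}|=3pq+2p+2q-1 .
\]
Substituting these into Definition~\ref{Def:MP} gives the stated M-polynomial.
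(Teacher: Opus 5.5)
The paper gives no proof of this statement. It quotes the M-polynomial from \cite{shanmukha2021m}, and it quotes the vertex and edge counts $12pq+12p+12q$ and $15pq+14p+14q-1$ from the same source. So your proposal has to stand on its own.

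Your reduction is sound. The handshake relations do give $n_3=6pq+4p+4q-2$ and $n_2=6pq+8p+8q+2$, the claimed coefficients satisfy both edge-end relations, and, as you say, exactly one free parameter remains.

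The gap is that the one count carrying all the content, $|E_{2,2}|=4p+4q+4$, is never derived. The phrase ``4 per row unit, 4 per column unit, 4 from the corners'' is read off from the target rather than argued. The induction step is also not innocent. Passing from $\textit{PG}[1,1]$ to $\textit{PG}[2,1]$, the four new benzene rings bring eight $(2,2)$ edges. At the same time, the two old rings on the shared side gain a third inter-ring bond and lose their four $(2,2)$ edges. Only the net change is $4$, and you give no argument that this bookkeeping works for every added row or column.

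You can close the gap inside your own framework by fixing $|E_{3,3}|$ instead of $|E_{2,2}|$. Three structural facts about porous graphene are needed:
\begin{itemize}
\item the vertex set is partitioned into $|V|/6$ vertex-disjoint benzene hexagons;
\item every other edge is a bond joining two different hexagons;
\item on each hexagon these bonds sit at alternate positions (1,3,5 or a subset of them).
\end{itemize}
Hence no hexagon edge joins two degree-$3$ vertices, so the $(3,3)$ edges are exactly the inter-ring bonds. Since the hexagons contain exactly $|V|$ edges,
\[
|E_{3,3}|=|E|-|V|=3pq+2p+2q-1 .
\]
Your two relations then give
\[
|E_{2,3}|=3n_3-2|E_{3,3}|=12pq+8p+8q-4, \qquad |E_{2,2}|=n_2-\tfrac12|E_{2,3}|=4p+4q+4 .
\]

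Equivalently, view $\textit{PG}[p,q]$ as the underlying honeycomb patch $H$ with $2pq+2p+2q$ vertices and $3pq+2p+2q-1$ edges, each vertex blown up into a hexagon and each edge replaced by a bond. A ring with two bonds contributes two $(2,2)$ and four $(2,3)$ edges, and a ring with three bonds contributes six $(2,3)$ edges. So $|E_{2,2}|=2n_2(H)=2(2p+2q+2)$. This route also produces the vertex and edge counts you imported, namely $6|V(H)|$ and $6|V(H)|+|E(H)|$, instead of taking them on trust.
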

\begin{theorem}
	Let $G$ be the family of porous graphene $\textit{PG}[p,q]$. Then the hyperbolic Sombor index is given by
	$$\textit{HSO}(G)=\sqrt{8}(2p+2q+2)+\sqrt{13}(6pq+4p+4q-2)+\sqrt{2}(3pq+2p+2q-1).$$
\end{theorem}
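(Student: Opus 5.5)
The plan is to apply Theorem~\ref{Th:1} directly to the M-polynomial of $\textit{PG}[p,q]$ given in Theorem~\ref{Th:32}, exactly as in the earlier computations (e.g.\ for $B_{m,n}$). Since every operator in Table~\ref{Table:Ope} is applied termwise, as in the proof of Theorem~\ref{Th:1}, it suffices to track what happens to a single monomial $c\,x^iy^j$. Here $c>0$, which holds for all three coefficients when $p,q\ge 1$. The chain is
\begin{equation*}
c\,x^iy^j \;\xrightarrow{S_x}\; \frac{c}{i}\,x^iy^j \;\xrightarrow{P_yP_x}\; \frac{c}{i}\,x^{i^2}y^{j^2} \;\xrightarrow{J}\; \frac{c}{i}\,x^{i^2+j^2} \;\xrightarrow{D^{1/2}_x}\; \frac{c}{i}\sqrt{i^2+j^2}\,x^{i^2+j^2},
\end{equation*}
and then we evaluate at $x=1$.

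The key steps, in order, are as follows.
\begin{enumerate}
\item Apply $S_x$: this divides the three coefficients by $2$, $2$ and $3$ respectively. That yields $(2p+2q+2)x^2y^2+(6pq+4p+4q-2)x^2y^3+\frac{3pq+2p+2q-1}{3}x^3y^3$.
\item Apply $P_x$ and $P_y$ to square the exponents, giving $x^4y^4$, $x^4y^9$ and $x^9y^9$.
\item Apply $J$ to merge them into $x^8$, $x^{13}$ and $x^{18}$.
\item Apply $D^{1/2}_x$ to produce the factors $\sqrt8$, $\sqrt{13}$ and $\sqrt{18}$.
\item Set $x=1$.
\end{enumerate}

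The only step needing care is the last simplification. The third term becomes $\sqrt{18}\cdot\frac{3pq+2p+2q-1}{3}$, and since $\sqrt{18}/3=\sqrt2$ this equals $\sqrt2(3pq+2p+2q-1)$. The first two terms are already in the stated form, $\sqrt8(2p+2q+2)+\sqrt{13}(6pq+4p+4q-2)$.

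There is also a minor point about applying $D^{1/2}_x$ termwise. The operator is not literally linear, so, following Theorem~\ref{Th:1}, I would interpret it on each monomial separately. On a monomial $\frac{c}{i}x^{k}$ it gives $\sqrt{\tfrac{c}{i}k\,x^{k}}\cdot\sqrt{\tfrac{c}{i}x^{k}}=\tfrac{c}{i}\sqrt{k}\,x^k$. This uses $c>0$, which is why positivity of the coefficients is checked at the outset. Everything else is routine arithmetic.
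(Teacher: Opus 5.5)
Your proposal is correct and follows essentially the same route as the paper: apply $D^{1/2}_{x}JP_{y}P_{x}S_{x}$ to the M-polynomial of Theorem~\ref{Th:32}, obtaining the factors $\sqrt{8}$, $\sqrt{13}$, $\sqrt{18}$ on the coefficients $(2p+2q+2)$, $(6pq+4p+4q-2)$, $\frac{3pq+2p+2q-1}{3}$, and then simplify $\sqrt{18}/3=\sqrt{2}$. Your remark that $D^{1/2}_x$ is not linear and must be applied monomial by monomial, with nonnegative coefficients, is only implicit in the paper's proof of Theorem~\ref{Th:1}, so it is a useful clarification rather than a change of method.
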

\begin{proof}
	The M-polynomial of $\textit{PG}[p,q]$ is given by
	\begin{equation*}
		M(G;x,y)=(4p+4q+4)x^2y^2+(12pq+8p+8q-4)x^2y^3+(3pq+2p+2q-1)x^3y^3.
	\end{equation*}
	Then
	\begin{align*}
		&\textit{HSO}(G)\\
		&= D^{1/2}_{x}JP_{y}P_{x}S_{x}(M(G;x,y))|_{x=1}\\
		&=D^{1/2}_{x}JP_{y}P_{x}S_{x}\Big((4p+4q+4)x^2y^2+(12pq+8p+8q-4)x^2y^3+(3pq+2p+2q-1)x^3y^3\Big)|_{x=1}\\
		&=D^{1/2}_{x}JP_{y}P_{x}\Big((2p+2q+2)x^2y^2+(6pq+4p+4q-2)x^2y^3+\Big(\frac{3pq+2p+2q-1}{3}\Big)x^3y^3\Big)|_{x=1}\\
		&=D^{1/2}_{x}J\Big((2p+2q+2)x^4y^4+(6pq+4p+4q-2)x^4y^9+\Big(\frac{3pq+2p+2q-1}{3}\Big)x^9y^9\Big)|_{x=1}\\
		&=D^{1/2}_{x}\Big((2p+2q+2)x^8+(6pq+4p+4q-2)x^{13}+\Big(\frac{3pq+2p+2q-1}{3}\Big)x^{18}\Big)|_{x=1}\\
		&=\Big(\sqrt{8}(2p+2q+2)x^8+\sqrt{13}(6pq+4p+4q-2)x^{13}+\sqrt{18}\Big(\frac{3pq+2p+2q-1}{3}\Big)x^{18}\Big)|_{x=1}\\
		&=\sqrt{8}(2p+2q+2)+\sqrt{13}(6pq+4p+4q-2)+\sqrt{2}(3pq+2p+2q-1).
	\end{align*}
\end{proof}
\begin{figure}[htb!]
	\centering
	\begin{minipage}[t]{0.45\textwidth}
		\includegraphics[width=\textwidth]{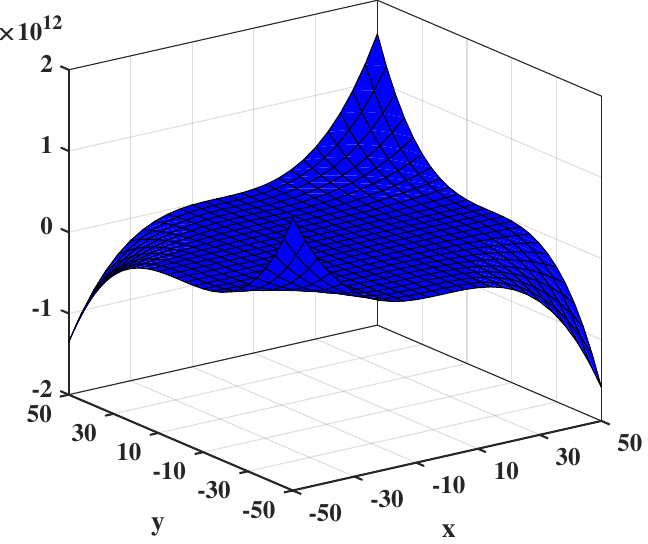}
		\centerline{\textbf{(a)}}
		\label{Fig:MP_PG}
	\end{minipage}
	\hspace{0.2cm}
	\begin{minipage}[t]{0.45\textwidth}
		\includegraphics[width=\textwidth]{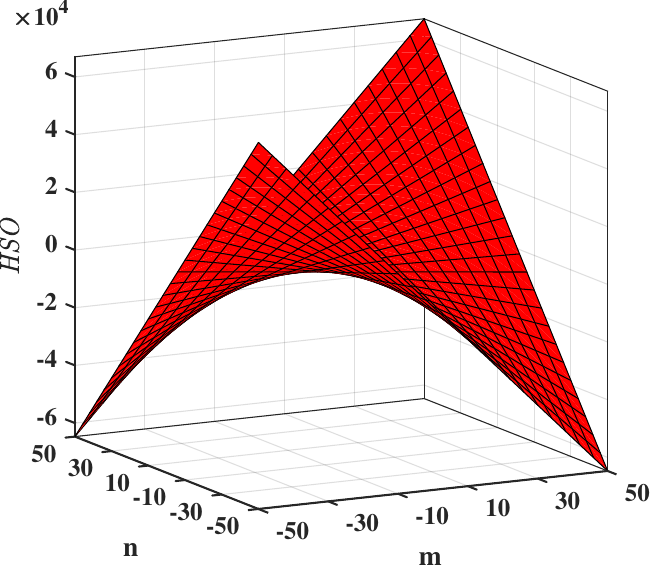}
		\centerline{\textbf{(b)}}
		\label{Fig:HSO_PG}
	\end{minipage}
	\caption{Graphical illustration of the (a)~M-polynomial of $\textit{PG}[p,q]$ for $p=q=5$ and (b)~\textit{HSO} index of $\textit{PG}[p,q]$.}
\end{figure}
\subsection{Hyperbolic Sombor Index for Tadpole Graph~$\boldsymbol{\textit{T}(n,m)}$}
A structure created by joining six segments of linear polyacenes into a closed loop is known as the tadpole graph $\textit{T}(n,m)$. It comprises a path graph on $m$ vertices and a cycle graph on $n(\ge 3)$ vertices, joined by a bridge~\cite{demaio2014fibonacci}. Next, we use the M-polynomial-based formula given in Theorem~\ref{Th:34} to get the value of the hyperbolic Sombor index for $\textit{T}(n,m)$.
\begin{theorem}~\cite{chaudhry2021m}
	\label{Th:34}
	Let $G$ be the family of tadpole graph $\textit{T}(n,m)$, then the expression of M-polynomial is
	\begin{equation*}
		M(G;x,y)=xy^2+(n+m-4)x^2y^2+3x^2y^3.
	\end{equation*}
\end{theorem}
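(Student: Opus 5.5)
The plan is to compute the M-polynomial directly from Definition~\ref{Def:MP}: determine the degree of every vertex of $\textit{T}(n,m)$, split the edge set by the degrees of its end vertices, and count each class. This is the same edge-partition argument the paper uses for $\textit{PAH}_n$.

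\textbf{Setting up the structure.} Let the cycle be $C_n$ on vertices $w=c_1,c_2,\dots,c_n$ with $n\ge 3$. Let the path be $P_m$ on vertices $p_1,p_2,\dots,p_m$. The bridge is the edge $c_1p_1$. Then $|V(G)|=n+m$ and $|E(G)|=n+(m-1)+1=n+m$. The vertex degrees are:
\begin{itemize}
\item $d(c_1)=3$;
\item $d(c_i)=2$ for $2\le i\le n$;
\item $d(p_m)=1$;
\item $d(p_j)=2$ for $1\le j\le m-1$.
\end{itemize}
This needs $m\ge 2$. For $m=1$ we would get $d(p_1)=1$, the bridge would be a $(1,3)$-edge, and the stated formula would fail. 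So I will make the hypothesis $m\ge 2$ explicit, which is the implicit convention in~\cite{chaudhry2021m}.

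\textbf{Partitioning the edges.} The edges fall into three classes.
\begin{itemize}
\item $E_{1,2}=\{p_{m-1}p_m\}$, so $|E_{1,2}|=1$. For $m=2$ this edge is $p_1p_2$, and $d(p_1)=2$ still holds.
\item $E_{2,3}=\{c_1c_2,\,c_1c_n,\,c_1p_1\}$, so $|E_{2,3}|=3$. Here $n\ge 3$ guarantees that $c_2\neq c_n$ and that both have degree $2$.
\item $E_{2,2}$ consists of all remaining edges. Every other edge joins two degree-$2$ vertices, and by subtraction $|E_{2,2}|=(n+m)-1-3=n+m-4$.
\end{itemize}
Since $c_1$ is the only vertex of degree $3$ and $p_m$ the only vertex of degree $1$, no other degree pairs occur. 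Substituting $m_{12}=1$, $m_{22}=n+m-4$ and $m_{23}=3$ into Definition~\ref{Def:MP} gives the stated polynomial.

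\textbf{Main obstacle.} There is no real analytic difficulty. The only delicate point is the boundary cases ($m=1$, and to a lesser extent $m=2$). These determine whether the bridge edge is of type $(2,3)$ and whether the pendant edge is of type $(1,2)$. I would handle them by stating the hypothesis $m\ge 2$ and checking $m=2$ explicitly.
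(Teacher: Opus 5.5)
Your proof is correct. The paper gives no proof of this statement and simply imports it from \cite{chaudhry2021m}; your degree-class edge partition is the same argument the paper uses for the M-polynomial of $\textit{PAH}_n$.

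Your explicit hypothesis $m\ge 2$ (together with $n\ge 3$) is genuinely necessary: for $m=1$ the bridge becomes a $(1,3)$-edge and the formula fails. The paper leaves this implicit, and its Table~\ref{Table:Cor2} even evaluates the derived index at $[1,1]$ and $[2,2]$, which lie outside the formula's range of validity.
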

\begin{theorem}
	Let $G$ be the family of tadpole graph $\textit{T}(n,m)$. Then the hyperbolic Sombor index is given by
	$$\textit{HSO}(G)=\sqrt{5}+\sqrt{2}(n+m-4)+\frac{3\sqrt{13}}{2}.$$
\end{theorem}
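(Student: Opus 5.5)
The plan is to apply Theorem~\ref{Th:1} directly to the M-polynomial of $\textit{T}(n,m)$ supplied by Theorem~\ref{Th:34}, namely $M(G;x,y)=xy^2+(n+m-4)x^2y^2+3x^2y^3$. Theorem~\ref{Th:1} gives $\textit{HSO}(G)=D^{1/2}_{x}JP_{y}P_{x}S_{x}(M(G;x,y))|_{x=1}$. All the operators act termwise on monomials, as shown in the proof of Theorem~\ref{Th:1}, so I will push the composite operator through the three terms one at a time, in the same format as the earlier computations for the path graph and the \textit{PETIM} dendrimer.

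The steps run in the order of the operators. First, $S_x$ divides each coefficient by the $x$-exponent. This gives $xy^2+\frac{n+m-4}{2}x^2y^2+\frac{3}{2}x^2y^3$.

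Next, $P_xP_y$ squares the exponents. The result is $xy^4+\frac{n+m-4}{2}x^4y^4+\frac{3}{2}x^4y^9$.

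Then $J$ sets $y=x$. This yields $x^5+\frac{n+m-4}{2}x^8+\frac{3}{2}x^{13}$.

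Finally, $D^{1/2}_x$ maps $c\,x^k$ to $\sqrt{k}\,c\,x^k$. Evaluating at $x=1$ then produces $\sqrt5+\frac{\sqrt8}{2}(n+m-4)+\frac{3\sqrt{13}}{2}$.

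Since $\sqrt8/2=\sqrt2$, this is exactly the claimed value $\sqrt{5}+\sqrt{2}(n+m-4)+\frac{3\sqrt{13}}{2}$.

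There is essentially no obstacle beyond bookkeeping. The only points needing care are checking that each monomial is listed with $i\le j$, which holds here since $(1,2)$, $(2,2)$ and $(2,3)$ all satisfy it, so that $S_x$ divides by the smaller degree as required by $f(x,y)=\sqrt{x^2+y^2}/x$, and simplifying $\sqrt{8}/2$ correctly. As a sanity check, I will also verify the answer edgewise: $f(1,2)=\sqrt5$, $f(2,2)=\sqrt2$ and $f(2,3)=\sqrt{13}/2$, weighted by the multiplicities $1$, $n+m-4$ and $3$.
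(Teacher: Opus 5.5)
Your proposal is correct and is essentially the paper's proof verbatim. It uses the same M-polynomial from Theorem~\ref{Th:34}, the same chain $S_x$, $P_xP_y$, $J$, $D^{1/2}_x$ with identical intermediate polynomials, and the same simplification $\sqrt{8}/2=\sqrt{2}$. One caveat: $D^{1/2}_x$ as literally defined in Table~\ref{Table:Ope} is not additive (on $x^5+cx^8$ it gives $\sqrt{(5+8c)(1+c)}$ at $x=1$, not $\sqrt5+\sqrt8\,c$), so the termwise step is really a convention inherited from Theorem~\ref{Th:1}, and your edgewise check $\sum m_{ij}f(i,j)$ with $f(1,2)=\sqrt5$, $f(2,2)=\sqrt2$, $f(2,3)=\sqrt{13}/2$ is what makes the computation airtight.
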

\begin{proof}
	The M-polynomial of $\textit{T}(n,m)$ is given by
	\begin{equation*}
		M(G;x,y)=xy^2+(n+m-4)x^2y^2+3x^2y^3.
	\end{equation*}
	Then
	\begin{align*}
		\textit{HSO}(G)&= D^{1/2}_{x}JP_{y}P_{x}S_{x}(M(G;x,y))|_{x=1}\\
		&=D^{1/2}_{x}JP_{y}P_{x}S_{x}\Big(xy^2+(n+m-4)x^2y^2+3x^2y^3\Big)|_{x=1}\\
		&=D^{1/2}_{x}JP_{y}P_{x}\Big(xy^2+\Big(\frac{n+m-4}{2}\Big)x^2y^2+\frac{3}{2}x^2y^3\Big)|_{x=1}\\
		&=D^{1/2}_{x}J\Big(xy^4+\Big(\frac{n+m-4}{2}\Big)x^4y^4+\frac{3}{2}x^4y^9\Big)|_{x=1}\\
		&=D^{1/2}_{x}\Big(x^5+\Big(\frac{n+m-4}{2}\Big)x^8+\frac{3}{2}x^{13}\Big)|_{x=1}\\
		&=\Big(\sqrt{5}x^5+\sqrt{8}\Big(\frac{n+m-4}{2}\Big)x^8+\frac{3\sqrt{13}}{2}x^{13}\Big)|_{x=1}\\
		&=\sqrt{5}+\sqrt{2}(n+m-4)+\frac{3\sqrt{13}}{2}.
	\end{align*}
\end{proof}
\begin{figure}[htb!]
	\centering
	\begin{minipage}[t]{0.45\textwidth}
		\includegraphics[width=\textwidth]{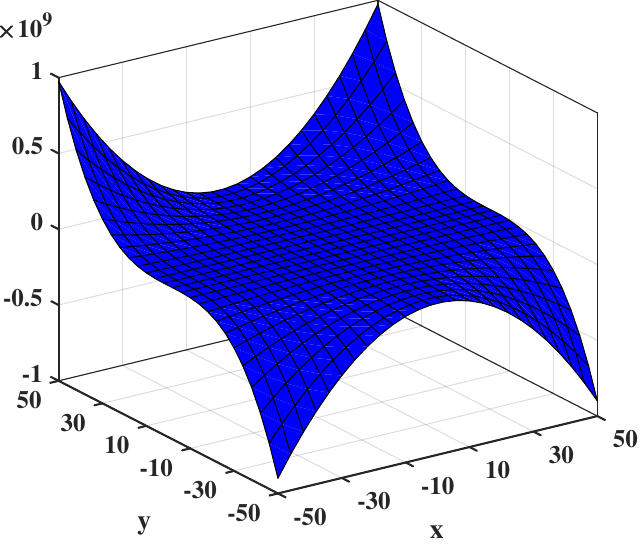}
		\centerline{\textbf{(a)}}
		\label{Fig:MP_Tad}
	\end{minipage}
	\hspace{0.2cm}
	\begin{minipage}[t]{0.45\textwidth}
		\includegraphics[width=\textwidth]{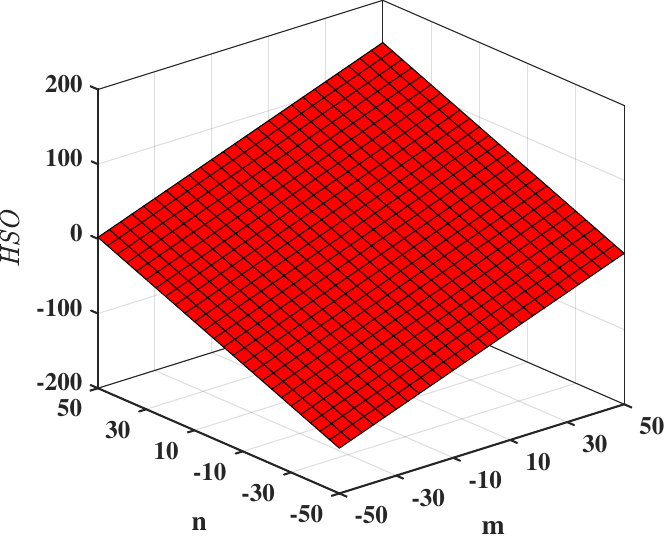}
		\centerline{\textbf{(b)}}
		\label{Fig:HSO_Tad}
	\end{minipage}
	\caption{Graphical illustration of the (a)~M-polynomial of $\textit{T}(n,m)$ for $n=m=5$ and (b)~\textit{HSO} index of $\textit{T}(n,m)$.}
\end{figure}
\subsection{Hyperbolic Sombor Index for Polyphenylenes~$\boldsymbol{\textit{P}[s,t]}$}
The nuclei of benzenoid aromatics, which are joined by carbon-carbon bonds, are found in polyphenylenes. The chemical graph $\textit{P}[s,t]$ is made up of $s$ rows and $t$ columns. It contains $24st + 12s$ vertices and $30st + 13s - t$ edges~\cite{zuo2020}. With the aid of the M-polynomial provided in Theorem~\ref{Th:36}, we determine the values of the hyperbolic Sombor index for polyphenylenes.
\begin{theorem}~\cite{zuo2020}
	\label{Th:36}
	Let $G$ be the family of polyphenylenes $\textit{P}[s,t]$, then the expression of M-polynomial is
	\begin{equation*}
		M(G;x,y)=4(2s+t)x^2y^2+4(6st+s-t)x^2y^3+(6st+s-t)x^3y^3.
	\end{equation*}
\end{theorem}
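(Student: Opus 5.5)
The plan is to prove Theorem~\ref{Th:36} the same way as the M-polynomial of $\textit{PAH}_n$ above. We partition $E(\textit{P}[s,t])$ by the degrees of the end vertices and then apply Definition~\ref{Def:MP}. Every vertex of the polyphenylene graph is a carbon of a benzene (hexagonal) ring. It therefore has degree $2$ if it is a free ring vertex, and degree $3$ if it additionally carries a C--C bond to a neighbouring ring. So only the classes $E_{2,2}$, $E_{2,3}$ and $E_{3,3}$ can occur, and it suffices to find $|E_{2,2}|$, $|E_{2,3}|$ and $|E_{3,3}|$.

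First we count vertices by degree using the handshake lemma. We use the known totals $|V|=24st+12s$ and $|E|=30st+13s-t$. Writing $n_2,n_3$ for the numbers of vertices of degree $2$ and $3$, we have $n_2+n_3=24st+12s$ and $2n_2+3n_3=2(30st+13s-t)$. These give $n_3=12st+2s-2t$ and $n_2=12st+10s+2t$.

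Next we need one structural input. Each degree-$3$ vertex has exactly one edge leaving its hexagon, and that edge joins it to a degree-$3$ vertex of another hexagon. Moreover, within a single hexagon of $\textit{P}[s,t]$ no two attachment vertices are adjacent, because the linkages are at para/meta positions in the standard drawing. Hence the $(3,3)$-edges are exactly the inter-ring bonds, and $|E_{3,3}|=n_3/2=6st+s-t$. Counting edge endpoints at each degree class then gives
\begin{align*}
2|E_{2,2}|+|E_{2,3}|&=2n_2=24st+20s+4t,\\
|E_{2,3}|+2|E_{3,3}|&=3n_3=36st+6s-6t.
\end{align*}
Solving yields $|E_{2,3}|=24st+4s-4t=4(6st+s-t)$ and $|E_{2,2}|=8s+4t=4(2s+t)$. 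As a consistency check, the three counts sum to $30st+13s-t$. Substituting into Definition~\ref{Def:MP} gives the stated polynomial.

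The main obstacle is the structural claim that no two degree-$3$ vertices are adjacent inside a hexagon. If this failed, some intra-ring edges would also be $(3,3)$-edges and the identity $|E_{3,3}|=n_3/2$ would break. I would verify it directly from the construction of $\textit{P}[s,t]$ in~\cite{zuo2020}: fix the attachment positions on the interior rings and on the boundary rings (first/last row and column), and confirm that in each case the attached carbons are pairwise non-adjacent. If that fails for some boundary ring, the fallback is to count $|E_{3,3}|$ directly as the number of inter-ring bonds, row by row and column by column, and then solve the same linear system.
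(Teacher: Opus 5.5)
Your reduction is correct, and it takes a genuinely different route from the paper. The paper does not prove this statement; it quotes the polynomial from \cite{zuo2020}, where the edge classes come from inspecting the molecular drawing. You instead combine three things: the ring-and-bond description of a polyphenylene (vertex-disjoint hexagons joined by single C--C bonds, each carbon carrying at most one such bond), the stated totals $|V|=24st+12s$ and $|E|=30st+13s-t$, and the handshake lemma. Your values $n_3=12st+2s-2t$ and $n_2=12st+10s+2t$, and your solution of the linear system, are all right.

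What your route buys is economy. Instead of tallying boundary and interior rings class by class, everything reduces to one local check. What it gives up is independence. The totals $|V|$ and $|E|$ come from the same source, so you show that the polynomial is consistent with them rather than deriving it from scratch.

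One correction to your last paragraph. Let $R=4st+2s$ be the number of rings, $B=|E|-|V|=6st+s-t$ the number of inter-ring bonds, and $A$ the number of ring edges joining two linked carbons. Your own bookkeeping then gives $m_{33}=B+A$, $m_{23}=4B-2A$ and $m_{22}=6R-4B+A$. So the stated polynomial holds if and only if $A=0$. Given the totals, the no-adjacency property is therefore equivalent to the theorem, not just a convenient sufficient condition with a fallback. If some ring had two adjacent linked carbons, the formula itself would be false, and a direct count of $|E_{3,3}|$ could not rescue it. This is exactly the one structural fact you have to read off the construction in \cite{zuo2020}.
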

\begin{theorem}
	Let $G$ be the family of polyphenylenes $\textit{P}[s,t]$. Then the hyperbolic Sombor index is given by
	$$\textit{HSO}(G)=4\sqrt{2}(2s+t)+(2\sqrt{13}+\sqrt{2})(6st+s-t).$$
\end{theorem}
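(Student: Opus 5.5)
The plan is to apply the closed derivation formula of Theorem~\ref{Th:1} directly to the M-polynomial of $\textit{P}[s,t]$ given in Theorem~\ref{Th:36}, in exactly the same way as for porous graphene $\textit{PG}[p,q]$ above. We write
$$M(G;x,y)=4(2s+t)x^2y^2+4(6st+s-t)x^2y^3+(6st+s-t)x^3y^3,$$
and use the linearity of each operator in Table~\ref{Table:Ope} (established in the proof of Theorem~\ref{Th:1}) to treat the three monomials separately. The key computation is $D^{1/2}_{x}JP_{y}P_{x}S_{x}(m_{ij}x^iy^j)|_{x=1}=\frac{\sqrt{i^2+j^2}}{i}\,m_{ij}$. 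So we only need to track three degree pairs: $(2,2)$, $(2,3)$ and $(3,3)$.

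We carry out the operators in order.
\begin{itemize}
\item $S_x$ divides the coefficients by $i$, giving $2(2s+t)x^2y^2+2(6st+s-t)x^2y^3+\frac{6st+s-t}{3}x^3y^3$.
\item $P_xP_y$ squares the exponents, giving $x^4y^4$, $x^4y^9$ and $x^9y^9$.
\item $J$ merges the variables, giving $x^8$, $x^{13}$ and $x^{18}$.
\item $D^{1/2}_x$ multiplies each term by the square root of its exponent, giving the factors $\sqrt8$, $\sqrt{13}$ and $\sqrt{18}$.
\end{itemize}
Evaluating at $x=1$ yields
$$\sqrt8\cdot 2(2s+t)+\sqrt{13}\cdot 2(6st+s-t)+\sqrt{18}\cdot\frac{6st+s-t}{3}.$$

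Finally we simplify using $\sqrt8=2\sqrt2$ and $\sqrt{18}/3=\sqrt2$. This gives $4\sqrt2(2s+t)+2\sqrt{13}(6st+s-t)+\sqrt2(6st+s-t)$. Collecting the two terms with the common factor $(6st+s-t)$ produces the stated $(2\sqrt{13}+\sqrt2)(6st+s-t)$.

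There is no real obstacle. The only step requiring care is this final grouping of the $x^2y^3$ and $x^3y^3$ contributions, together with the bookkeeping of the $1/i$ factors from $S_x$, since these are where arithmetic slips occurred in earlier computations of this kind.
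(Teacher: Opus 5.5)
Your proposal is correct and follows the paper's proof essentially line for line: the same operator chain $D^{1/2}_xJP_yP_xS_x$, the same intermediate polynomial $2(2s+t)x^8+2(6st+s-t)x^{13}+\frac{6st+s-t}{3}x^{18}$, and the same simplification via $\sqrt8=2\sqrt2$ and $\sqrt{18}/3=\sqrt2$. One caveat, which you share with the paper's proof of Theorem~\ref{Th:1}: $D^{1/2}_x(f)=\sqrt{x\frac{\partial f}{\partial x}}\cdot\sqrt{f}$ is not linear (applied literally to the whole polynomial at $x=1$ it gives $\sqrt{4800}\approx 69.28$ for $\textit{P}[1,1]$ rather than $18\sqrt2+12\sqrt{13}\approx 68.72$), so the termwise step should be justified not by ``linearity'' but by Equation~\ref{Eq:ADBTI}, $\textit{HSO}(G)=\sum m_{ij}f(i,j)$, together with your correct monomial identity $D^{1/2}_xJP_yP_xS_x(m_{ij}x^iy^j)|_{x=1}=\frac{\sqrt{i^2+j^2}}{i}\,m_{ij}$.
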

\begin{proof}
	The M-polynomial of $\textit{P}[s,t]$ is given by
	\begin{equation*}
		M(G;x,y)=4(2s+t)x^2y^2+4(6st+s-t)x^2y^3+(6st+s-t)x^3y^3.
	\end{equation*}
	Then
	\begin{align*}
		\textit{HSO}(G)&= D^{1/2}_{x}JP_{y}P_{x}S_{x}(M(G;x,y))|_{x=1}\\
		&=D^{1/2}_{x}JP_{y}P_{x}S_{x}\Big(4(2s+t)x^2y^2+4(6st+s-t)x^2y^3+(6st+s-t)x^3y^3\Big)|_{x=1}\\
		&=D^{1/2}_{x}JP_{y}P_{x}\Big(2(2s+t)x^2y^2+2(6st+s-t)x^2y^3+\Big(\frac{6st+s-t}{3}\Big)x^3y^3\Big)|_{x=1}\\
		&=D^{1/2}_{x}J\Big(2(2s+t)x^4y^4+2(6st+s-t)x^4y^9+\Big(\frac{6st+s-t}{3}\Big)x^9y^9\Big)|_{x=1}\\
		&=D^{1/2}_{x}\Big(2(2s+t)x^8+2(6st+s-t)x^{13}+\Big(\frac{6st+s-t}{3}\Big)x^{18}\Big)|_{x=1}\\
		&=\Big(2\sqrt{8}(2s+t)x^8+2\sqrt{13}(6st+s-t)x^{13}+\sqrt{18}\Big(\frac{6st+s-t}{3}\Big)x^{18}\Big)|_{x=1}\\
		&=4\sqrt{2}(2s+t)+(2\sqrt{13}+\sqrt{2})(6st+s-t).
	\end{align*}
\end{proof}
\begin{figure}[htb!]
	\centering
	\begin{minipage}[t]{0.45\textwidth}
		\includegraphics[width=\textwidth]{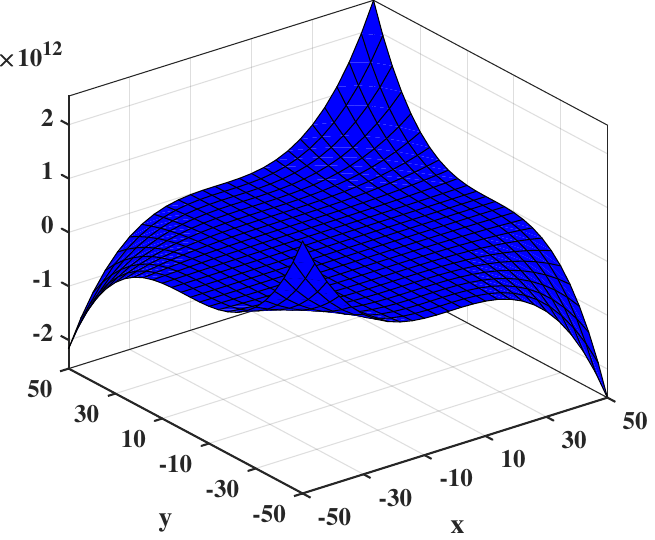}
		\centerline{\textbf{(a)}}
		\label{Fig:MP_Poly}
	\end{minipage}
	\hspace{0.2cm}
	\begin{minipage}[t]{0.45\textwidth}
		\includegraphics[width=\textwidth]{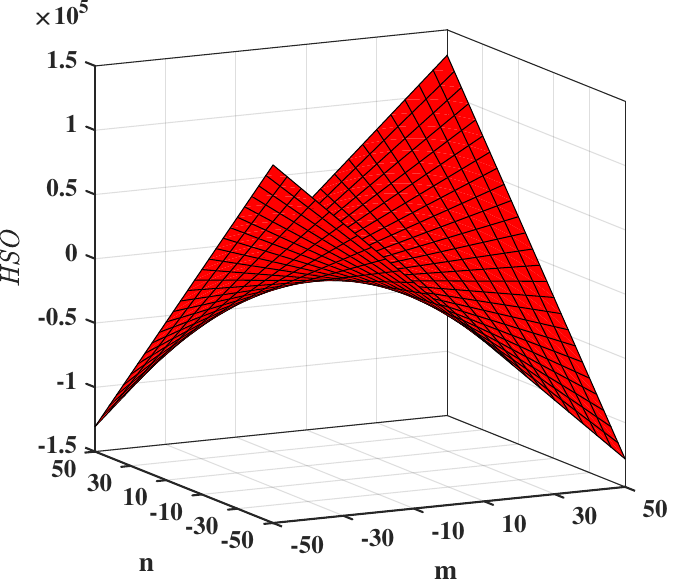}
		\centerline{\textbf{(b)}}
		\label{Fig:HSO_Poly}
	\end{minipage}
	\caption{Graphical illustration of the (a)~M-polynomial of $\textit{P}[s,t]$ for $s=t=5$ and (b)~\textit{HSO} index of $\textit{P}[s,t]$.}
\end{figure}
We also demonstrate the numerical calculation of the \textit{HSO} index for each of the families of graphs under consideration using the MATLAB R2019 software.
\begin{table}[htb!]
	\caption{Values of the hyperbolic Sombor index for various chemical families with two variables.}
	\label{Table:Cor2}
	\renewcommand*{\arraystretch}{1.2}
	\tiny
	\resizebox{\textwidth}{!}
	{
		\begin{tabular}{|c|c|c|c|c|c|c|c|c|c|c|}
			\hline
			 & $\textbf{[1,1]}$ & $\textbf{[2,2]}$ & $\textbf{[3,3]}$ & $\textbf{[4,4]}$ & $\textbf{[5,5]}$ & $\textbf{[6,6]}$ & $\textbf{[7,7]}$& $\textbf{[8,8]}$ & $\textbf{[9,9]}$ & $\textbf{[10,10]}$\\ \hline
			$\boldsymbol{\mathcal{B}_{\alpha}(a,b)}$ & $233.6269$ &	$1256.1506$ &	$3068.6253$ &	$5671.0512$ &	$9063.4283$ &	$13245.7564$ &	$18218.0357$ &	$23980.2661$ &	$30532.4477$ &	$37874.5803$\\ \hline
			$\boldsymbol{B_{m,n}}$ & $12.8680$ &	$49.9176$ &	$103.9378$ &	$174.9285$ &	$262.8898$ &	$367.8217$ &	$489.7241$ &	$628.5971$ &	$784.4407$ &	$957.2548$\\ \hline
			$\boldsymbol{\textit{\textbf{VPHX}}[m,n]}$ & $12.8680$ &	$49.9176$ &	$103.9378$ &	$174.9285$ &	$262.8898$ &	$367.8217$ &	$489.7241$ &	$628.5971$ &	$784.4407$ &	$957.2548$\\ \hline
			$\boldsymbol{\textit{\textbf{VPHY}}[m,n]}$ & $12.7279$ &	$50.9117$ &	$114.5513$ &	$203.6468$ &	$318.1981$ &	$458.2052$ &	$623.6682$ &	$814.5870$ &	$1030.9617$ &	$1272.7922$\\ \hline
			$\boldsymbol{\textit{\textbf{PG}}[p,q]}$ & $68.7225$ &	$192.1653$ &	$367.3600$ &	$594.3066$ &	$873.0051$ &	$1203.4555$ &	$1585.6578$ &	$2019.6120$ &	$2505.3181$ &	$3042.7761$\\ \hline
			$\boldsymbol{\textit{\textbf{T}}(n,m)}$ & $4.8160$ &	$7.6444$ &	$10.4728$ &	$13.3012$ &	$16.1297$ &	$18.9581$ &	$21.7865$ &	$24.6150$ &	$27.4434$ &	$30.2718$\\ \hline
			$\boldsymbol{\textit{\textbf{P}}[s,t]}$ & $68.7225$ &	$240.9487$ &	$516.6788$ &	$895.9126$ &	$1378.6502$ &	$1964.8917$ &	$2654.6369$ &	$3447.8859$ &	$4344.6387$ &	$5344.8953$\\ \hline
		\end{tabular}
	}
\end{table}
\begin{table}[htb!]
	\caption{Values of the hyperbolic Sombor index for various chemical families with one variable.}
	\label{Table:4}
	\centering
	\renewcommand*{\arraystretch}{1.5}
	\tiny
	\resizebox{\textwidth}{!}
	{
		\begin{tabular}{|c|c|c|c|c|c|c|c|c|c|c|}
			\hline
			$\textbf{n}$ & $\textbf{1}$ & $\textbf{2}$ & $\textbf{3}$ & $\textbf{4}$ & $\textbf{5}$ & $\textbf{6}$& $\textbf{7}$ & $\textbf{8}$ & $\textbf{9}$ & $\textbf{10}$\\ \hline
			\textbf{\textit{PETIM}} & $35.0878$ &	$106.4481$ &	$249.1686$ &	$534.6097$ &	$1105.4919$ &	$2247.2564$ &	$4530.7853$ &	$9097.8430$ &	$18231.9586$ &	$36500.1896$\\ \hline
			$\boldsymbol{D_nP_n}$ & $219.7848$ &	$457.4574$ &	$695.1300$ &	$932.8025$ &	$1170.4751$ &	$1408.1477$ &	$1645.8202$ &	$1883.4928$ &	$2121.1654$ &	$2358.8379$\\ \hline
			$\boldsymbol{\textit{\textbf{DPZ}}_n}$ & $243.6667$ &	$501.0258$ &	$1015.7441$ &	$2045.1806$ &	$4104.0537$ &	$8221.7999$ &	$16457.2922$ &	$32928.2769$ &	$65870.2464$ &	$131754.1852$\\ \hline
			\textbf{\textit{PETAA}} & $126.6894$ &	$287.2420$ &	$608.3473$ &	$1250.5578$ &	$2534.9787$ &	$5103.8207$ &	$10241.5047$ &	$20516.8727$ &	$41067.6087$ &	$82169.0806$\\ \hline
			$\boldsymbol{\textit{\textbf{PAH}}_n}$ & $27.4589$ &	$80.3737$ &	$158.7444$ &	$262.5709$ &	$391.8532$ &	$546.5913$ &	$726.7854$ &	$932.4352$ &	$1163.5409$ &	$1420.1025$\\ \hline
		\end{tabular}
	}
\end{table}
\section{Conclusion}
\label{Sec:Con}
We suggested the closed derivation formula in this work to calculate the recently introduced hyperbolic Sombor index (\textit{HSO}) of a graph, which is written in terms of a graph's M-polynomial. Additionally, we used our closed derivation formula to evaluate this index for some standard families of graphs, chemical families of $\mathcal{B}_{\alpha}(a,b)$ and dendrimers, jagged-rectangle benzenoid system~$B_{m,n}$, polycyclic aromatic hydrocarbons~$\textit{PAH}_{n}$, V-Phynelenic nanotubes~$\textit{VPHX}[m,n]$, V-Phynelenic nanotori~$\textit{VPHY}[m,n]$, porous graphene $\textit{PG}[p,q]$, tadpole graph~$T(n,m)$ and polyphenylenes~$P[s,t]$, by using their respective M-polynomials. Moreover, numerical and graphical representations are provided for the M-polynomial and the computed \textit{HSO} index of the chemical families for different ranges. The values of the \textit{HSO} index have increased as we increase the values of $m$ and $n$ for all of the chemical families. Since topological indices can be directly used in \textit{QSPR}/\textit{QSAR} studies of chemical substances, researchers in the field may find benefits in the hyperbolic Sombor index results for different graph families.
\section*{Acknowledgment}
The first author (Jayjit Barman) is grateful to the UNIVERSITY GRANTS COMMISSION, Ministry of Social Justice and Empowerment, New Delhi, India for awarding the National Fellowship for Scheduled Caste Students (NFSC) with reference to UGC-Ref. No.: 211610011412/(CSIR-UGC NET JUNE 2021) dated 11-May-2022. The second author (Dr.\ Shibsankar Das) is obliged to the Development Cell, Banaras Hindu University for financially supporting this work through the Faculty ``Incentive Grant'' under the Institute of Eminence (IoE) Scheme for the year 2024-25 (Project sanction order number: R/Dev/D/IoE/Incentive (Phase-IV)/2024-25/82483, dated 7 January 2025).
\section*{Author Contributions}
All the authors contributed equally.
\section*{Data Availability Statement}
The data used to support the findings of this study are included within the article.
\section*{Disclosure Statement \& Competing Interests}
No disclosure statement and potential conflict of interest was reported by the authors.


\begin{thebibliography}{10}
	\expandafter\ifx\csname url\endcsname\relax
	\def\url#1{\texttt{#1}}\fi
	\expandafter\ifx\csname urlprefix\endcsname\relax\def\urlprefix{URL }\fi
	\expandafter\ifx\csname href\endcsname\relax
	\def\href#1#2{#2} \def\path#1{#1}\fi
	
	\bibitem{trinajstic1992}
	N.~Trinajsti{\'c}, Chemical graph theory, 2nd Edition, Mathematical Chemistry
	Series, CRC Press, 1992.
	
	\bibitem{west2000}
	D.~B. West, Introduction to graph theory, 2nd Edition, Prentice hall, 2000.
	
	\bibitem{barman2025}
	J.~Barman, S.~Das, Geometric \text{A}pproach to \text{D}egree-\text{B}ased
	\text{T}opological \text{I}ndex: \text{H}yperbolic \text{S}ombor
	\text{I}ndex, MATCH Commun. Math. Comput. Chem. 95~(1) (2026) 63--94.
	\newblock \href {https://doi.org/10.46793/match95-1.03425}
	{\path{doi:10.46793/match95-1.03425}}.
	
	\bibitem{sdas2022b}
	S.~Das, V.~Kumar,
	\href{https://ijmc.kashanu.ac.ir/article_112048.html}{Investigation of closed
		derivation formulas for \text{GQ} and \text{QG} indices of a graph via
		\text{M}-polynomial}, Iranian Journal of Mathematical Chemistry 13~(2) (2022)
	129--144.
	\newblock \href {https://doi.org/10.22052/ijmc.2022.246172.1614}
	{\path{doi:10.22052/ijmc.2022.246172.1614}}.
	\newline\urlprefix\url{https://ijmc.kashanu.ac.ir/article_112048.html}
	
	\bibitem{sdas2022g}
	S.~Das, S.~Rai, On derivation formulas of the \text{N}irmala indices from the
	\text{M}-polynomial of a graph, preseanted in International E-conference on
	Pure and Applied Mathematical Sciences (ICPAMS-2022) during 04-07 May 2022
	(2022).
	
	\bibitem{deutsch2015}
	E.~Deutsch, S.~Klav{\v{z}}ar, \text{M-polynomial} and degree-based topological
	indices, Iranian Journal of Mathematical Chemistry 6~(2) (2015) 93--102.
	\newblock \href {https://doi.org/10.22052/ijmc.2015.10106}
	{\path{doi:10.22052/ijmc.2015.10106}}.
	
	\bibitem{hosoya1988}
	H.~Hosoya, On some counting polynomials in chemistry, Discrete Applied
	Mathematics 19~(1-3) (1988) 239--257.
	\newblock \href {https://doi.org/10.1016/0166-218X(88)90017-0}
	{\path{doi:10.1016/0166-218X(88)90017-0}}.
	
	\bibitem{kauffman1989}
	L.~H. Kauffman, A tutte polynomial for signed graphs, Discrete Applied
	Mathematics 25~(1-2) (1989) 105--127.
	
	\bibitem{zhang1996}
	H.~Zhang, F.~Zhang, The clar covering polynomial of hexagonal systems 1,
	Discrete Applied Mathematics 69~(1-2) (1996) 147--167.
	
	\bibitem{farrell1979}
	E.~J. Farrell, An introduction to matching polynomials, Journal of
	Combinatorial Theory, Series B 27~(1) (1979) 75--86.
	
	\bibitem{hassani2013}
	F.~Hassani, A.~Iranmanesh, S.~Mirzaie, Schultz and modified schultz polynomials
	of $\text{C}100$ fullerene, MATCH Commun. Math. Comput. Chem 69 (2013)
	87--92.
	
	\bibitem{mondal2021b}
	S.~Mondal, M.~K. Siddiqui, N.~De, A.~Pal, Neighborhood \text{M}-polynomial of
	crystallographic structures, Biointerface Res. Appl. Chem. 11 (2021)
	9372--9381.
	
	\bibitem{sdas2023}
	S.~Das, S.~Rai, V.~Kumar, On topological indices of molnupiravir and its qspr
	modelling with some other antiviral drugs to treat covid-19 patients, Journal
	of Mathematical Chemistry (2023) 1--44\href
	{https://doi.org/10.1007/s10910-023-01518-z}
	{\path{doi:10.1007/s10910-023-01518-z}}.
	
	\bibitem{sdas2020a}
	S.~Das, S.~Rai, \text{M}-polynomial and related degree-based topological
	indices of the third type of \text{H}ex-derived network, Nanosystems:
	Physics, Chemistry, Mathematics 11~(3) (2020) 267--274.
	\newblock \href {https://doi.org/10.17586/2220-8054-2020-11-3-267-274}
	{\path{doi:10.17586/2220-8054-2020-11-3-267-274}}.
	
	\bibitem{sdas2022a}
	S.~Das, V.~Kumar,
	\href{https://pjm.ppu.edu/paper/1035-m-polynomial-two-dimensional-silicon-carbons}{On
		\text{M}-polynomial of the \text{T}wo-dimensional
		\text{S}ilicon-\text{C}arbons}, Palestine Journal of Mathematics 11~(Special
	Issue II) (2022) 136--157.
	\newline\urlprefix\url{https://pjm.ppu.edu/paper/1035-m-polynomial-two-dimensional-silicon-carbons}
	
	\bibitem{sdas2022e}
	S.~Das, S.~Rai, On \text{M-}polynomial and associated topological descriptors
	of subdivided hex-derived network of type three., Journal of Computational
	Technologies (2022. in press).
	
	\bibitem{deng2011}
	H.~Deng, J.~Yang, F.~Xia, A general modeling of some vertex-degree based
	topological indices in benzenoid systems and phenylenes, Computers \&
	Mathematics with Applications 61~(10) (2011) 3017--3023.
	\newblock \href {https://doi.org/10.1016/j.camwa.2011.03.089}
	{\path{doi:10.1016/j.camwa.2011.03.089}}.
	
	\bibitem{afzal2020}
	F.~Afzal, S.~Hussain, D.~Afzal, S.~Razaq, Some new degree based topological
	indices via \text{M}-polynomial, Journal of Information and Optimization
	Sciences 41~(4) (2020) 1061--1076.
	
	\bibitem{basavanagoud2019}
	B.~Basavanagoud, A.~P. Barangi, P.~Jakkannavar, M-polynomial of some graph
	operations and cycle related graphs, Iranian Journal of Mathematical
	Chemistry 10~(2) (2019) 127--150.
	
	\bibitem{basavanagoud2020}
	B.~Basavanagoud, P.~Jakkannavar, M-polynomial and degree-based topological
	indices of graphs, Electronic Journal of Mathematical Analysis and
	Applications 8~(1) (2020) 75--99.
	
	\bibitem{sdas2021}
	S.~Das, S.~Rai, Topological characterization of the third type of triangular
	\text{H}ex-derived networks, Scientific Annals of Computer Science 31~(2)
	(2021) 145--161.
	\newblock \href {https://doi.org/10.7561/SACS.2021.2.145}
	{\path{doi:10.7561/SACS.2021.2.145}}.
	
	\bibitem{sdas2022c}
	S.~Das, S.~Rai, Degree-based topological descriptors of type 3 rectangular
	hex-derived networks., Bulletin of the Institute of Combinatorics and its
	Applications (BICA) 95 (2022) 21--37.
	
	\bibitem{hussain2021}
	S.~Hussain, A.~Alsinai, D.~Afzal, A.~Maqbool, F.~Afzal, M.~Cancan,
	Investigation of closed formula and topological properties of remdesivir
	$(\text{C}_{27}\text{H}_{35}\text{N}_{6}\text{O}_{8}\text{P})$, Chemical
	Methodologies 5~(6) (2021) 485--497.
	
	\bibitem{xavier2024computing}
	D.~A. Xavier, T.~Nair~A, M.~Usman~Ghani, A.~Baby, F.~Tchier, Computing
	molecular descriptors of boron icosahedral sheet, International Journal of
	Quantum Chemistry 124~(13) (2024) e27443.
	\newblock \href {https://doi.org/10.1002/qua.27443}
	{\path{doi:10.1002/qua.27443}}.
	
	\bibitem{tomalia1985new}
	D.~A. Tomalia, H.~Baker, J.~Dewald, M.~Hall, G.~Kallos, S.~Martin, J.~Roeck,
	J.~Ryder, P.~Smith, A new class of polymers: starburst-dendritic
	macromolecules, Polymer journal 17~(1) (1985) 117--132.
	\newblock \href {https://doi.org/10.1295/polymj.17.117}
	{\path{doi:10.1295/polymj.17.117}}.
	
	\bibitem{sdas2023a}
	S.~Das, S.~Rai, On closed derivation formulas of \text{N}irmala indices from
	the \text{M-polynomial} of a graph, Journal of the Indian Chemical Society
	100~(6) (2023) 101017.
	\newblock \href {https://doi.org/10.1016/j.jics.2023.101017}
	{\path{doi:10.1016/j.jics.2023.101017}}.
	
	\bibitem{shui2005}
	S.~Ling-Ling, W.~Zhi-Ning, Z.~Li-Qiang, Rheological \text{P}roperties of
	\text{C}ubic \text{L}iquid \text{C}rystals \text{F}ormed from
	\text{M}onoglyceride/\text{H2O} \text{S}ystems, Chinese Journal of Chemistry
	23~(3) (2005) 245--250.
	\newblock \href {https://doi.org/10.1002/cjoc.200590245}
	{\path{doi:10.1002/cjoc.200590245}}.
	
	\bibitem{kwun2018}
	W.~N. M. A.~C. Young Chel~Kwun, Ashaq~Ali, S.~M. Kang, \text{M-polynomials} and
	degree-based topological indices of triangular, hourglass, and
	jagged-rectangle benzenoid systems, Journal of Chemistry 2018 (2018) 1--8.
	\newblock \href {https://doi.org/10.1155/2018/8213950}
	{\path{doi:10.1155/2018/8213950}}.
	
	\bibitem{shao2018computing}
	Z.~Shao, M.~K. Siddiqui, M.~H. Muhammad, Computing \text{Z}agreb indices and
	\text{Z}agreb polynomials for symmetrical nanotubes, Symmetry 10~(7) (2018)
	244.
	\newblock \href {https://doi.org/10.3390/sym10070244}
	{\path{doi:10.3390/sym10070244}}.
	
	\bibitem{jamil2016computing}
	M.~K. Jamil, M.~R. Farahani, M.~I. Shabir, M.~A. Malik, Computing
	\text{E}ccentric \text{V}ersion of \text{S}econd \text{Z}agreb \text{I}ndex
	of \text{P}olycyclic \text{A}romatic \text{H}ydrocarbons \text{PAHk}, Applied
	Mathematics and Nonlinear Sciences 1~(1) (2016) 247--252.
	\newblock \href {https://doi.org/10.21042/AMNS.2016.1.00019}
	{\path{doi:10.21042/AMNS.2016.1.00019}}.
	
	\bibitem{dietz20002}
	F.~Dietz, N.~Tyutyulkov, G.~Madjarova, K.~M{\"u}llen, Is \text{2-D} graphite an
	ultimate large hydrocarbon? \text{II.} \text{S}tructure and energy spectra of
	polycyclic aromatic hydrocarbons with defects, The Journal of Physical
	Chemistry B 104~(8) (2000) 1746--1761.
	\newblock \href {https://doi.org/10.1021/jp9928659}
	{\path{doi:10.1021/jp9928659}}.
	
	\bibitem{kwun2017}
	Y.~C. Kwun, M.~Munir, W.~Nazeer, S.~Rafique, S.~M. Kang, M-polynomials and
	topological indices of v-phenylenic nanotubes and nanotori, Scientific
	Reports 7~(1) (2017) 1--9.
	\newblock \href {https://doi.org/10.1038/s41598-017-08309-y}
	{\path{doi:10.1038/s41598-017-08309-y}}.
	
	\bibitem{shanmukha2021m}
	M.~Shanmukha, A.~Usha, K.~Shilpa, N.~Basavarajappa, M-polynomial and
	neighborhood \text{M}-polynomial methods for topological indices of porous
	graphene, The European physical journal plus 136 (2021) 1--16.
	\newblock \href {https://doi.org/10.1140/epjp/s13360-021-02074-8}
	{\path{doi:10.1140/epjp/s13360-021-02074-8}}.
	
	\bibitem{demaio2014fibonacci}
	J.~DeMaio, J.~Jacobson, Fibonacci number of the tadpole graph, Electronic
	Journal of Graph Theory and Applications (EJGTA) 2~(2) (2014) 129--138.
	\newblock \href {https://doi.org/10.5614/ejgta.2014.2.2.5}
	{\path{doi:10.5614/ejgta.2014.2.2.5}}.
	
	\bibitem{chaudhry2021m}
	F.~Chaudhry, M.~N. Husin, F.~Afzal, D.~Afzal, M.~Ehsan, M.~Cancan, M.~R.
	Farahani, M-polynomials and degree-based topological indices of tadpole
	graph, Journal of Discrete Mathematical Sciences and Cryptography 24~(7)
	(2021) 2059--2072.
	\newblock \href {https://doi.org/10.1080/09720529.2021.1984561}
	{\path{doi:10.1080/09720529.2021.1984561}}.
	
	\bibitem{zuo2020}
	X.~Zuo, M.~Numan, S.~I. Butt, M.~K. Siddiqui, R.~Ullah, U.~Ali, Computing
	topological indices for molecules structure of polyphenylene via
	m-polynomials, Polycyclic Aromatic Compounds (2020) 1--10\href
	{https://doi.org/10.1080/10406638.2020.1768413}
	{\path{doi:10.1080/10406638.2020.1768413}}.
	
\end{thebibliography}


\end{document}